\documentclass[a4paper,UKenglish,cleveref, autoref, thm-restate, numberwithinsect]{lipics-v2021}

\hideLIPIcs  

\usepackage{amsmath,stmaryrd}
\usepackage{mathtools}
\usepackage{amsthm}
\usepackage{url}
\usepackage{psfrag}
\usepackage{color}
\usepackage{hyperref,breakurl}
\usepackage{breakurl}
\usepackage[all]{xy}
\usepackage{mathpartir}

\usepackage{epsfig}
\usepackage{ushort} 

\newcommand{\Comment}[1]{}

\newcommand{\il}[1]{{\color{red}{#1}}}
\newcommand{\cm}[1]{{\color{blue}{#1}}}

\newcommand{\red}{\to}

\newcommand{\tran}[1]{\xrightarrow[]{#1}} 
\newcommand{\rtran}[1]{\stackrel{#1}{\rightsquigarrow}} 

\newcommand{\ftran}[1]{\stackrel{#1}{\rightharpoonup}} 
\newcommand{\rectran}[1]{\stackrel{#1}{\rightarrow}_\mathsf{r}} 
\newcommand{\wrectran}[1]{\stackrel{#1}{\Rightarrow}_\mathsf{r}} 
\newcommand{\recred}{\rightarrow_\mathsf{r}} 
\newcommand{\wrecred}{\Rightarrow_\mathsf{r}} 

\newcommand{\Proc}{\mathsf{Proc}}

\newcommand{\key}[1]{[#1]}

\newcommand{\fk}[1]{\mathsf{fk}(#1)}

\newcommand{\std}[1]{\mathsf{std}(#1)}
\newcommand{\tostd}[1]{\mathsf{toStd}(#1)}

\newcommand{\Par}{\mid}

\newcommand{\rev}[1]{\ushort{#1}} 

\newcommand{\nil}{\mathbf{0}}

\def\finex{{\unskip\nobreak\hfil
\penalty50\hskip1em\null\nobreak\hfil$\diamond$
\parfillskip=0pt\finalhyphendemerits=0\endgraf}}

\newcommand{\inter}{\cap}

\definecolor{darkgreen}{rgb}{0.0, 0.5, 0.0}
\definecolor{applegreen}{rgb}{0.55, 0.71, 0.0}
\definecolor{amber(sae/ece)}{rgb}{1.0, 0.49, 0.0}
\definecolor{airforceblue}{rgb}{0.36, 0.54, 0.66}
\definecolor{amethyst}{rgb}{0.6, 0.4, 0.8}
\definecolor{brickred}{rgb}{0.8, 0.25, 0.33}
		
\newcommand{\irek}[1]{{\color{darkgreen}{#1}}}

\newcommand{\Act}{\mathsf{Act}}

\mathchardef\mh="2D

\newcommand{\enc}[1]{[\![ #1 ]\!]}

\newcommand{\encpref}[1]{\enc{ #1 }}
\newcommand\smallbullet{%
    \raisebox{-0.8ex}{\scalebox{1.5}{$\cdot$}}%
}
\newcommand{\attr}[1]{#1 \smallbullet R}

\newcommand{\keys}[1]{\mathsf{k}(#1)}

\newcommand{\Co}[1]{\overline{#1}}

\newcommand{\out}[1]{\overline{#1}}

\newcommand{\rec}[2]{\mathtt{rec}\ #1.#2}
\newcommand{\pair}[2]{(#1\ ,\  #2)}
\newcommand{\tup}[1]{\langle #1 \rangle}

\newcommand{\rcal}{{\mathcal R}}
\newcommand{\bisim}{\sim}

\newcommand{\wpitran}[1]{\xRightarrow[\raisebox{1pt}{\makebox[1ex]{\ensuremath{\scriptstyle\pi}}}]{#1}}
\newcommand{\pitran}[1]{\xrightarrow[\raisebox{1pt}{\makebox[1ex]{\ensuremath{\scriptstyle\pi}}}]{#1}}

\newcommand{\pilab}{\lambda} 
\newcommand{\pistr}{\equiv_{\pi}} 
\newcommand{\str}{\equiv} 
\newcommand{\phiel}[3]{(#1,#2) \leftrightarrow #3} 

\newcommand{\hol}{\bullet}
\newcommand{\tick}{\checkmark}
\newcommand{\barb}{\downarrow}
\newcommand{\wbarb}{\Downarrow}
\newcommand{\wtran}[1]{\stackrel{#1}{\Rightarrow}}
\newcommand{\wred}{\Rightarrow}
\newcommand{\presim}[1]{\preccurlyeq^{#1}}
\newcommand{\recpresim}[1]{\preccurlyeq^{#1}_\mathsf{r}}
\newcommand{\calc}{\mathbb{C}} 
\newcommand{\opr}{\mathsf{op}} 

\newcommand{\subst}[2]{\{#1/#2\}}
\newcommand{\fn}{\mathrm{fn}} 
\newcommand{\bn}{\mathrm{bn}}
\newcommand{\n}{\mathrm{n}}

\newcommand{\tree}{\mathtt{tr}}

\title{On the Encodability of Reversible Process Calculi} 

\author{Ivan Lanese}{Olas Team, University of Bologna/INRIA, Italy \and \url{https://www.unibo.it/sitoweb/ivan.lanese}}{ivan.lanese@unibo.it}{https://orcid.org/0000-0003-2527-9995}{partial support of the French ANR project SmartCloud ANR-23-CE25-0012.}

\author{{Claudio Antares} Mezzina}{University of Bari Aldo Moro, Italy \and \url{https://cmezzina.github.io/}}{cmezzina@gmail.com}{https://orcid.org/0000-0003-1556-2623}{partial support of the MSCA SE project QCOMICAL (Grant Agreement ID: 101182520) and of the Japan Society for the Promotion of Science (JSPS) through the JSPS Invitation Fellowship for Research in Japan, grant no. S25016.}

\author{Iain Phillips}{Imperial College London, UK \and \url{https://www.doc.ic.ac.uk/~iccp}}{i.phillips@imperial.ac.uk}{https://orcid.org/0000-0001-5013-5876}{}

\author{Irek Ulidowski}{University of Leicester, UK; AGH University of Krak\'{o}w, Poland \and \url{https://www.cs.le.ac.uk/people/iulidowski/}}{iu3@leicester.ac.uk}{https://orcid.org/0000-0002-3834-2036}{partial support of 
the AY2024 International PI Invitation Program, IAR Nagoya University, and of
the Japan Society for the Promotion of Science (JSPS) through the JSPS Invitation Fellowship for Research in Japan, grant no. S21050.}

\author{Shoji Yuen}{Nagoya University, Japan \and \url{https://profs.provost.nagoya-u.ac.jp/html/100001809_en.html}}{yuen@sqlab.jp}{https://orcid.org/0000-0003-2642-0647}{}

\authorrunning{I.~Lanese, C.~A.~Mezzina, I.~Phillips, I.~Ulidowski, S.~Yuen} 

\Copyright{I.~Lanese, C.~A.~Mezzina, I.~Phillips, I.~Ulidowski, S.~Yuen} 

\ccsdesc[500]{Theory of computation~Concurrency} 

\keywords{Reversible computation, Process calculi, Encodings, Impossibility results} 

\nolinenumbers
\begin{document}

\maketitle

\begin{abstract}
Reversibility, allowing one to execute a program not only forwards as usual, but also backwards, has emerged as a fundamental concept in computing, with applications ranging from debugging and fault tolerance to biological and quantum systems. CCSK, a reversible extension of CCS, is a paradigmatic model of reversible concurrent computation.
  In this paper, we investigate the encodability of CCSK into classical forward-only concurrent models. We establish a separation theorem showing that there is no basic, success-sensitive encoding of CCSK into CCS or the $\pi$-calculus, highlighting the strong impact of reversibility on expressive power. We then present an encoding of CCSK processes with only top-level parallel composition into the
internal $\pi$-calculus, correct up to strong bisimilarity.  
We also identify a fundamental limitation: no parallel-preserving encoding of CCSK (with arbitrary parallel composition) into the $\pi$-calculus can be correct up to strong bisimilarity. 
 Finally, we provide a parallel-preserving encoding correct under a weaker behavioural correspondence: weak mutual simulation.
  Our
  findings extend the literature of encodability results to reversible process calculi.
\end{abstract}

\section{Introduction}

Reversibility has emerged as a fundamental concept in the theory of computation. In concurrent systems, reversibility requires that every action can be undone in a causally consistent way~\cite{DK04}, thereby restoring past states without breaking causal dependencies. This paradigm has proved useful in debugging~\cite{microsoft,GiachinoLM14,LaneseSUS22}, where reversing erroneous computations is more natural than replaying from scratch; in biochemical modelling~\cite{PhillipsUY12,MelgrattiMP22,KuhnU22}, 
given that many biochemical reactions are inherently reversible;
in fault tolerance, where recovery requires controlled rollback to consistent states~\cite{VassorS18,MezzinaTY25}; and in quantum computing, where reversibility is
forced by the laws of physics.

Several reversible extensions of process calculi have been proposed \cite{DK04,PU07,LaneseMS16,CristescuKV13}. Among them, CCSK~\cite{PU06,PU07} enriches CCS with keys that record causal dependencies, supporting both forward and backward execution steps. Its operational semantics preserves the familiar flavour of CCS while enabling precise reversal of computations, including synchronisations. 

A natural and fundamental question arises:
\textit{can reversibility be encoded into classical forward process calculi}?

The expressive power of forward process calculi has been extensively
studied. For instance, Palamidessi showed that the
synchronous $\pi$-calculus cannot be encoded into its asynchronous
counterpart~\cite{Palamidessi03}. Gorla developed a uniform framework for separation and encodability results, identifying robust criteria for valid encodings~\cite{Gor10}. More recently, Pugliese and Tiezzi introduced \emph{replacement freeness} as a general criterion for proving separation results between calculi~\cite{PT20}. Further information on encodings can be found in a survey on the topic~\cite{Peters19}. However, the role of reversibility in this landscape remains poorly understood. Can reversible calculi such as CCSK be faithfully encoded in forward ones like CCS or the $\pi$-calculus?

While broadly speaking the answer should be positive, as shown, e.g., by the encoding in~\cite{LaneseMS16},
our deeper investigation
shows a number
of subtleties that need to be considered in order to define such
encodings, and that encoding reversibility is quite demanding in terms
of the requirements on, e.g., the target calculus.

In more detail, this paper gives the first systematic account of the
expressive power of CCSK. Our contributions are as follows:
\begin{description}
\item[Separation theorem:] We prove (Theorem~\ref{thm:basic succ sens CCSK pi}) that there exists no basic, suc\-cess-sensitive encoding of CCSK into CCS or the $\pi$-calculus (with recursion and without matching). The proof adapts techniques from  Pugliese-Tiezzi~\cite{PT20} and Gorla~\cite{Gor10}, and relies only on a fragment of CCSK (nil, prefixing, and top-level parallel composition), showing that the result is robust. This establishes that reversibility in CCS is not eliminable: it
requires expressive power which goes beyond that provided by classical CCS.

\item[Encoding into $\pi$:] We define (Figure~\ref{fig:enc_strong}) a parallel preserving encoding (Definition~\ref{def:nary par pres enc}) of CCSK processes with top-level parallel composition only into the $\pi$-calculus with internal mobility~\cite{Sangiorgi96a}. The encoding exploits $\pi$ names to represent CCSK keys and it is correct up to strong bisimilarity. This demonstrates how far CCSK can be captured within $\pi$, while making precise where the correspondence breaks down.

\item[Limits of behavioural correspondence:]
We show (Theorem~\ref{thm:par pres strong bis}) that no encoding of CCSK into the $\pi$-calculus that is
parallel preserving
can be correct up to strong bisimilarity. This identifies a sharp boundary: top-level parallelism can be handled, but general parallelism exposes an inherent mismatch between reversible and forward concurrency.
\item[Encoding general parallel composition:] we refine (Definition~\ref{def:tree}) the encoding of CCSK into the $\pi$-calculus mentioned above to cope not only with top-level parallel composition, but also with parallel composition at lower levels. This requires the definition of a multiparty protocol to ensure that all descendants of a process have reversed before reversing it; hence the encoding is correct only up to \emph{weak} mutual simulation.
\end{description}
Our results situate CCSK (and in general reversibility in CCS) in the landscape of
encodability results
between process calculi. They reveal that reversibility
requires
expressive power not available in classical (forward)
CCS,
for two main conceptual reasons:
(i) the encoding cannot be compositional since the encoding of a term depends on its past, as it can go back to such past;
(ii) the encoding of parallel composition in a calculus with binary synchronisation such as CCS or the $\pi$-calculus requires a complex protocol to ensure all the children of a process are back to their initial state in order to enable backward execution of the parent process.
These results are not dependent on features which are specific to CCSK, but on general aspects of reversibility; hence, similar results should hold for other reversible calculi as well. While we give some indications in this direction,
we leave a more detailed analysis of
this topic
for future work.


Along with the theoretical interpretation of our results, the encoding techniques suggest systematic ways of translating causal histories into explicit control structures, which could inspire runtime mechanisms for reversible execution. In particular, using communication keys as explicit channels in the $\pi$-calculus highlights how one might represent rollback information as first-class messages, paving the way for implementations in distributed 
settings. 

\noindent{\bf Structure of the paper}: Sections~\ref{sec:CCSK} and~\ref{sec:pi} 
briefly recall
CCSK and the $\pi$-calculus. Section~\ref{sec:separation} presents a separation result. In Section~\ref{sec:encoding} we present
an encoding of CCSK into the $\pi$-calculus with top-level parallel composition only, correct up to strong bisimilarity. 
Section~\ref{sec:weak_encoding}  handles lower-level parallel composition, using the weaker notion of mutual simulation.
%
The final section discusses related work and concludes the paper. 

\section{CCSK: syntax and operational semantics}\label{sec:CCSK}
In this section we recall the syntax and semantics of CCSK~\cite{PU06}.

Let $\mathcal N = \{a, b, c, \ldots\}$ be a set of \emph{names}
(also called channels), 
and let $\overline{\mathcal{N}} = \{\overline{a} \mid a \in \mathcal{N}\}$ be the set
of their corresponding \emph{co-names}. The set of all \emph{actions} is
$\Act = \mathcal{N} \cup \overline{\mathcal{N}} \cup \{\tau\}$, where $\mathcal{N}$ and $\overline{\mathcal{N}}$ contain, respectively, input and output actions, and
$\tau$ denotes the silent action. We let $\alpha,\beta$ range over the set
$\mathcal{N} \cup \overline{\mathcal{N}}$, while $\mu$ ranges over the set $\Act$.
We say two prefixes such as $a$ and $\overline{a}$ are \emph{complementary}.
The syntax of CCSK is given below. 
The set of CCSK terms is $\Proc$, and we shall refer to terms as \emph{processes}. 
We let $P, Q$ and their primed or subscripted versions range over processes.
\[
\begin{array}{lll}
P & ::= \Sigma_I \rho_i.P_i \mid P\Par Q \ \mid \ \nu a.P \qquad 
\rho &::= \mu \mid \mu[k]
\end{array}
\]
We use $n$-ary guarded choice $\Sigma_I \rho_i.P_i$, indexed by a set~$I$, to denote a process which may execute (or may already have executed, see below) any of the prefixes $\rho_i$ and then behave as~$P_i$. 
This will 
simplify our encoding while syntactically ensuring that choice is always guarded. 
We assume
$\Sigma_I \rho_i.P_i = \nil$ (the nil process, which does nothing) when $I = \emptyset$. We may write
just $\rho.P$ when $I$ is a singleton, and $\rho_1.P_1+\Sigma_{I \setminus \{1\}} \rho_i.P_i$ to highlight a single branch, assuming guarded choice to be associative and commutative.
Prefix $\rho$ can be either $\mu$ (representing an action to be executed) or $\mu[k]$ (representing an action which has already been executed). 
Here $k$ is a \emph{key}, namely an identifier for the action execution. We denote with $\mathcal{K}$ the set of all keys.
Synchronising actions have the same key, to ensure that they are undone together. This will become clearer when discussing the operational semantics.
Process $P \Par Q$ is the parallel composition of $P$ and $Q$, and name $a$ is bound in the restriction $\nu a.P$.

\begin{definition}[Process keys]\label{def:keys}
The set of keys of a CCSK process $P$, written $\keys{P}$, is inductively defined as follows:
\begin{align*}
&\keys{\mu.P} =
\keys{\nil} = \emptyset && \keys{\mu\key{k}.P} = \{k\}\cup \keys{P}
&& \keys{P \Par Q} = \keys{P} \cup \keys{Q} \\
& \keys{\Sigma_I \rho_i.P_i} = \bigcup_I \keys{\rho_i.P_i}
&& \keys{\nu a.P} = \keys{P}
\end{align*}
\end{definition}

\emph{Contexts} $C$ are obtained by extending with a hole $\hol$ the syntax of processes. We consider contexts with a single occurrence of~$\hol$. We write $C[P]$ for the process obtained by replacing the $\bullet$ in $C$ with process $P$.


We define $\str$ as the smallest congruence relation closed under the 
rules of Figure~\ref{fig:ccsk_sem},
where $P =_{\alpha} Q$ indicates two processes equivalent modulo $\alpha$-conversion of bound names.
\begin{figure}[t]
\[
\begin{array}{c}
P \Par \nil \str P \;\quad P \Par Q \str Q \Par P \;\quad   P_1 \Par (P_2 \Par P_3) \str (P_1 \Par P_2) \Par P_3 \;\quad
P =_{\alpha} Q \implies P \str Q
\end{array}
\]
	\caption{CCSK structural congruence rules. 
	}
	\label{fig:ccsk_sem}
\end{figure}
The forward semantics of CCSK is given by the labelled 
transition system $(\Proc, \Act \times \mathcal{K},\ftran{})$, where
${\ftran{}} \subseteq \Proc \times (\Act \times \mathcal{K}) \times \Proc$
is the smallest relation closed under the forward rules of Figure~\ref{fig:ccsk_sem_full},
and under structural congruence.
%
\begin{figure}[tbh!]
	\begin{mathpar}
	\inferrule*[lab={\scriptsize(\textsc{Act1})}]{\std{Q}}{\mu.Q \ftran{\mu[k]}\mu[k].Q} \and 
	\inferrule*[lab={\scriptsize(\textsc{Act1}$^{\bullet}$)}]{\std{Q}}{\mu[k].Q \rtran{\mu[k]}\mu.Q}\\
	\inferrule*[lab={\scriptsize(\textsc{Act2})}]{P\ftran{\mu[h]}P' \and h \neq k}{\alpha[k].P \ftran{\mu[h]}\alpha[k].P'} \and 
	\inferrule*[lab={\scriptsize(\textsc{Act2}$^{\bullet}$)}]{P'\rtran{\mu[h]}P \and h \neq k}{\alpha[k].P' \rtran{\mu[h]}\alpha[k].P} \\ 
%
%
	\inferrule*[lab={\scriptsize(\textsc{Sum})}]{P_i\ftran{\mu[k]}P_i' \quad \forall j\neq i .\ (P_j' = P_j \wedge \std{P_j}) }{\Sigma_I \,P_i \ftran{\mu[k]}\Sigma_I \,P'_i }\and
%
	\inferrule*[lab={\scriptsize(\textsc{Sum}$^{\bullet}$)}]{P'_i\rtran{\mu[k]}P_i \quad  \forall j\neq i .\ ( P_j = P_j' \wedge \std{P_j'} )}{\Sigma_I \,P'_i \rtran{\mu[h]}\Sigma_I \,P_i } 
		\\ 
	\inferrule*[lab={\scriptsize(\textsc{Par})}]{P \ftran{\mu[k]} P' \and k\not \in \keys{Q}}{P \Par Q \ftran{\mu[k] }P' \Par Q} \and 
	\inferrule*[lab={\scriptsize(\textsc{Par}$^\bullet$)}]{P' \rtran{\mu[k]} P \and k\not \in \keys{Q}}{P' \Par Q \rtran{\mu[k] }P \Par Q}\\
	\inferrule*[lab={\scriptsize(\textsc{Syn})}]{P \ftran{\alpha[k]} P' \and Q \ftran{\Co{\alpha}[k]} Q'}{P \Par Q \ftran{\tau[k] }P' \Par Q'} \and 
	\inferrule*[lab={\scriptsize(\textsc{Syn}$^\bullet$)}]{P' \rtran{\alpha[k]} P \and Q' \rtran{\Co{\alpha}[k]} Q}{P' \Par Q' \rtran{\tau[k] }P \Par Q} \\
	\inferrule*[lab={\scriptsize(\textsc{Res})}]{P \ftran{\mu[k]} P' \and \mu \not\in \{a,\Co{a}\}}{\nu a.P   \ftran{\mu[k] } \nu a.P'} 
				\and 
	\inferrule*[lab={\scriptsize(\textsc{Res}$^\bullet$)}]{P' \rtran{\mu[k]} P \and \mu \not\in \{a,\Co{a}\} }{\nu a.P'   \rtran{\mu[k] }\nu a.P } \\
	\inferrule*[lab={\scriptsize(\textsc{Str})}]{P \str P' \quad P' \ftran{\mu[k]} Q' \quad Q' \str Q }{P   \ftran{\mu[k] } Q} 
				\and 
	\inferrule*[lab={\scriptsize(\textsc{Str}$^\bullet$)}]{P \str P' \quad P' \rtran{\mu[k]} Q' \quad Q' \str Q }{P   \rtran{\mu[k] } Q } 
	\end{mathpar}
	\caption{CCSK labelled transition rules.} 
	\label{fig:ccsk_sem_full}
\end{figure}
Correspondingly, the backward semantics of a CCSK process is the
smallest relation $\rtran{\,}$ closed under the 
backward rules of Figure~\ref{fig:ccsk_sem_full}, which are symmetric to the forward ones.
The semantics of CCSK is the union of the two relations, denoted by $\tran{}$.
Notably, each rule executing some action is paired with a dual rule undoing the same action.
To this end, executed actions are not dropped from the process as in classical process calculi, but equipped with a key, thus denoting that they belong to the \emph{past} of the process. This is visible, e.g., in rule {\scriptsize(\textsc{Act1})}.
Due to this, execution can occur under executed prefixes, cf.~rule {\scriptsize(\textsc{Act2})}. Rule {\scriptsize(\textsc{Syn})} ensures that synchronising actions have the same key; thus rule {\scriptsize(\textsc{Syn}$^\bullet$)} is needed to ensure they are undone together. Note that rule {\scriptsize(\textsc{Par}$^\bullet$)} is not applicable in this case, due to the side condition $k\not \in \keys{Q}$.


\begin{definition}[Guarded and top-level sub-processes]\label{def:CCSK top-level}
Given a CCSK process $P$, a sub-process $Q$ is \emph{guarded}, or \emph{lower level}, in $P$ if it occurs within a summation $\sum_I\rho_i.P_i$; otherwise we say it is \emph{top level}.
Process $P$ has only top-level parallel composition if every sub-process of the form $Q_1 \Par Q_2$ is top level.
\end{definition}

\begin{definition}[Standard process]
A CCSK process $P$ is \emph{standard}, written $\std{P}$, if it contains no keys, that is $\keys{P}=\emptyset$.
\end{definition}

Standard processes are CCS processes.
Not all the processes generated by the grammar are meaningful. To
cope with this issue
we define what is a reachable process.
\begin{definition}[Reachable process]
A CCSK process $P$ is 
\emph{reachable} if it can be derived
from a standard process using the rules of Figure~\ref{fig:ccsk_sem_full}.
\end{definition}
From now on we will restrict attention to reachable processes.

Next we recall two definitions, which will come
in handy once we define our encoding in Section~\ref{sec:encoding}.




\Comment{We now recall two definitions for CCSK processes, which will come
in handy once we define our encoding in Section~\ref{sec:encoding}.
}

\begin{definition}[Free and bound keys]\label{def:fbk}
 A key $k$ is \emph{bound} in a process~$P$ iff
it occurs either twice, attached to complementary prefixes, or once, attached to
a $\tau$ prefix. A key $k$ is \emph{free} if it occurs once, attached to a non-$\tau$ prefix. We will indicate
with $\fk{P}$ the set of free keys of $P$.
\end{definition}

Thanks
to~\cite[Prop.\ 3.5]{LaneseP21}
keys are either free or bound.

\begin{definition}
Let $\tostd{\cdot}$ be a forgetful map on CCSK processes defined  as follows:
\begin{align*}
&\tostd{P} = P \text{ if } \std{P}  &&
\tostd{\mu\key{k}.P} = \mu.\tostd{P}\\
&\tostd{\Sigma_I \rho_i.P_i} = \Sigma_I \tostd{\rho_i.P_i}&&
\tostd{P \Par Q} = \tostd{P} \Par \tostd{Q} \\
&\tostd{\nu a.P} =\nu a.\tostd {P} 
\end{align*}
\end{definition}
Intuitively, function $\tostd{\cdot}$ removes all keys from a process, hence undoing all its actions.

%
\begin{example}[CCSK computation]
Consider the CCSK process $a.(b.\nil + c.\nil)$. Executing action $a$ then $b$ is represented by  transitions
\[
a.(b.\nil + c.\nil) \ftran{a[k]} a[k].(b.\nil + c.\nil) \ftran{b[h]} a[k].(b[h].\nil + c.\nil).
\]
Executed actions are not dropped from the terms, as in CCS, but decorated with keys. Also,
unused branches of the choice (here $c.\nil$) are kept, to allow their future execution. After $a,b$ have been done, 
we can undo the action $b[h]$ and then choose to explore $c$ instead:
\[
a[k].(b[h].\nil + c.\nil) \rtran{b[h]} a[k].(b.\nil + c.\nil)\ftran{c[l]} a[k].(b.\nil + c[l].\nil)
\tag*{\lipicsEnd} 
\]
\end{example}




\section{$\pi$-calculus: syntax and operational semantics}\label{sec:pi}
In this section we present the syntax and the early semantics of the $\pi$-calculus~\cite{pibook}.
We let $R, S$ and their primed versions range over processes and let $\Proc_{\pi}$ be the set of all processes,
and $\mathcal{N}_\pi$ the set of all channel names. Also, sometimes we will refer to elements
of $\mathcal{N}_\pi$ as names.

The syntax of $\pi$-calculus processes is reported below:
$$\begin{array}{lll}
R & ::= \Sigma_I \pi_i.R_i \ \mid \ R\Par S \ \mid \ \nu a.R \ \mid \ \rec X\,R  \ \mid \ X \\
\pi &::= a(x) \mid \Co{a}\tup{b} \mid \tau
\end{array}$$
As for CCSK, we consider $n$-ary guarded choice $\Sigma_I \pi_i.R_i$, but sometimes also consider prefixes $\pi_i.R_i$ in isolation to allow for more modular definitions.

Bound and free names of a process $R$ (written respectively $\bn(R)$ and $\fn(R)$)
are defined by saying that name $n$ is bound in $a(n).R$ and $\nu n. R$, other kinds
of occurrences defining free names. We indicate with $\tilde{x}$ a non-empty sequence of names $x_1,\cdots,x_n$,
and by abuse of notation we write $\nu \tilde{x}$, instead of $\nu x_1. \cdots. \nu x_n$.

The recursion operator $\rec X R$ binds occurrences of recursion variable $X$ in $R$.  Every occurrence of $X$ in $R$ must be \emph{guarded}, i.e.\ within a summation $\Sigma_I \pi_i.R_i$.  Within a process every recursion variable must be bound.

\begin{figure*}[ht!]
\begin{mathpar}
\inferrule*[lab={\scriptsize(\textsc{In})}]{}{a(x).R \pitran{av} R\subst{v}{x}} \and
\inferrule*[lab={\scriptsize(\textsc{Out})}]{}{\Co{a}\tup{v}.R \pitran{\Co{a}
\tup{v}} R} 
\and
\inferrule*[lab={\scriptsize(\textsc{Tau})}]{}{\tau.R \pitran{\tau} R} 
\and
\inferrule*[lab={\scriptsize(\textsc{Sum})}]{j \in I \\ R_j \pitran{\mu_j} R'_j }{ \Sigma_I R_i \pitran{\mu_j} R'_j} 
\and
\inferrule*[lab={\scriptsize(\textsc{Par-L})}]{R \pitran{\mu} R'  \and \bn(\mu) \cap \fn(S) = \emptyset}{R \Par S \pitran{\mu} R' \mid S} 
\and
\inferrule*[lab={\scriptsize(\textsc{Com-L})}]{R \pitran{\Co{a}\tup{v}} R' \and S\pitran{av} S'}{R \Par S \pitran{\tau} R' \mid S'} 
\and
\inferrule*[lab={\scriptsize(\textsc{Res})}]{R \pitran{\mu} R' \\ a\not \in \n(\mu) }{\nu a . R \pitran{\mu} \nu a. R'} 
\and
\inferrule*[lab={\scriptsize(\textsc{Open})}]{R \pitran{\Co{a}\tup{b}} R'  }{\nu b. R \pitran{\Co{a}(b)} R'} 
\and
\inferrule*[lab={\scriptsize(\textsc{Close-L})}]{R \pitran{ \Co{a}(b)} R'  \and S \pitran{ab} S' \and  b \not\in \fn(S)}{ R \Par S \pitran{\tau} \nu b.(R' \mid S')} 
\and

\inferrule*[lab={\scriptsize(\textsc{Rec})}]{R\subst{\rec X R}{X} \pitran\mu R'}
{\rec X R \pitran\mu R'}

\and
\inferrule*[lab={\scriptsize(\textsc{Str})}]{R \pistr R'  \and R' \pitran{\mu} S' \and  S'\pistr S}{ R \pitran{\mu} S} 
\end{mathpar}
\caption{$\pi$-calculus early labelled transition system. 
}
\label{fig:pi_sem}
\end{figure*}

The semantics of the $\pi$-calculus is given by the labelled 
transition system $(\Proc_{\pi}, \Act_{\pi},\pitran{})$, where
 ${\pitran{}} \subseteq \Proc_{\pi} \times \Act_{\pi} \times \Proc_{\pi}$
 is the smallest transition relation closed under the rules of Figure \ref{fig:pi_sem}. 
 The set of actions 
  $\Act_{\pi}$ is generated by the following grammar:
\[\begin{array}{lll}
\mu & ::= ab \ \mid \ \Co{a}\tup{b} \ \mid \ \Co{a}(b)   \ \mid \ \tau
\end{array}
\]
In the grammar, $ab$ represents an input with subject $a$ and object~$b$;
$ \Co{a}\tup{b}$ represents an output with subject $a$ and object $b$;
$\Co{a}(b)$ is a shorthand for  $\nu b \, \Co{a}\tup{b}$ and represents the output of a bound name;
and 
 $\tau$
denotes an internal step.
The bound and free names of an action $\mu$, written
respectively $\fn(\mu)$ and $\bn(\mu)$, are defined by saying that the name~$b$ is bound
in
$\Co{a}(b)$, while other occurrences of names are free. We also set 
$\n(\mu) = \fn(\mu) \cup \bn(\mu)$. 

We define $\pistr$ as the smallest congruence relation closed under the rules of Figure \ref{fig:pi_str},
where $R =_{\alpha} S$ indicates two processes equivalent modulo $\alpha$-conversion.
Relation $\wpitran{}$ is the transitive and reflexive closure of $\pitran{\tau}$, and
relation $\wpitran{\mu}$
 is defined as $\wpitran{}\pitran{\mu}\wpitran{}$. 

\begin{figure*}[t!]
\begin{align*}
& R =_{\alpha} S \implies R \pistr S \\
&R \Par \nil \pistr R &&  R \Par S \pistr S \Par R && R \Par (S \Par T) \pistr  (R \Par S) \Par T 
\\
 &R + \nil \pistr R &&  R + S \pistr S + R && R + (S + T) \pistr  (R + S) + T \\
&\nu a.\nil  \pistr \nil && \nu a.\nu b.R \pistr \nu b.\nu a. R && \nu a. (R \Par S) \pistr R \Par \nu a.S \,\, \text{ if } a \not \in \fn(P)
\end{align*}
\caption{$\pi$-calculus structural congruence.}
\label{fig:pi_str}
\end{figure*}

In our development, we will focus on a subcalculus of the $\pi$-calculus, called the internal $\pi$-calculus~\cite{Sangiorgi96a}. In the internal $\pi$-calculus, only \emph{new} names can be sent as outputs. In other words, output prefix $\Co{a}\tup{b}$ can only occur immediately inside a restriction name $b$, namely as a term $\nu b.\Co{a}\tup{b}$.  This implies that each synchronisation exchanges a different fresh name. As we will see, our encoding will use
the internal $\pi$-calculus of order two, where communicated names, beyond being bound, can then only be used for synchronisation (like CCS names), and not for sending further names (cf.~\cite[Definition 6.2]{Sangiorgi96a}). In this case we will drop the object from the prefix or the label.

\section{Separation result}\label{sec:separation}


We give conditions under which CCSK cannot be encoded into CCS or the $\pi$-calculus, adapting definitions and results from~\cite{PT20,Gor10}.
See App.~\ref{app:separation} for omitted proofs for this section.

We start by recalling the definitions of replacement freeness and basic encodings from~\cite{PT20}.  These are defined for general process calculi which have notions of reduction and barbs, which we shall instantiate for CCSK, CCS and the $\pi$-calculus.

\begin{definition}\label{def:calculus}
A \emph {(process) calculus} $\calc$ has a set of processes, ranged over by $P,\ldots$.  These are generated using process operators, giving rise to contexts as usual.
Processes use \emph{names} as communication channels and possibly for input parameters and output values.  Name-binding operators delimit the scope of names, and names may be either free or bound.
A process with no free names is \emph{closed}.
Calculus $\calc$ also has a notion of \emph{reduction} $P \red P'$, and \emph{barb}
$P \barb \alpha$, meaning that we can observe barb $\alpha$ at $P$.
Weak reduction $\wred$ is the reflexive and transitive closure of $\red$,
and $P \wbarb \alpha$ means that there is $P'$ such that $P \wred P' \barb \alpha$. 
\end{definition}
 
\begin{definition}
[Visibility \cite{PT20}]
A process $P$ is \emph{visible}, written $P \wbarb$, if $P \wbarb \alpha$ for some $\alpha$;
otherwise $P$ is \emph{invisible}.
\end{definition}

\begin{definition}
[Replacement freeness \cite{PT20}]\label{def:rep-free}
A calculus $\calc$ is \emph{strongly replacement free (strongly RF)} if for every single-hole context $C$, invisible process $I$ and process $P$ in $\calc$, we have
$C[I] \wbarb$ implies $C[P] \wbarb$.
A calculus $\calc$ is \emph{replacement free (RF)} if for every single-hole context $C$, \emph{closed} invisible process~$I$ and process $P$ in $\calc$, we have
$C[I] \wbarb$ implies $C[P] \wbarb$.
Furthermore, $\calc$ is \emph{weakly RF} if it is RF but not strongly RF.
\end{definition}

\begin{definition}
[Basic encoding {\cite[Def. 3.1]{PT20}}]\label{def:basic}
An encoding $\enc \cdot$ of $\calc_1$ into $\calc_2$ is \emph{basic} if:
\begin{enumerate}
\item 
$\enc \cdot$ is \emph{compositional}:
for every $k$-ary operator $\opr$ in $\calc_1$ there is a $k$-hole context $C_{\opr}$ in $\calc_2$ such that $\forall P_1, \ldots, P_k \in \calc_1$ we have $\enc{ \opr(P_1,\ldots, P_k)} = C_{\opr}[\,\enc{P_1},\ldots,\enc{P_k}\,]$.
\item
$\enc \cdot$ is \emph{interaction sensitive}: for all processes $P$ in $\calc_1$ we have $P \wbarb$ iff $\enc P \wbarb$.
\end{enumerate}
\end{definition}
If an encoding is compositional then for any $\calc_1$ context $C_1[\hol]$ there is a $\calc_2$ context $C_2[\hol]$ such that for all $\calc_1$ processes $P$ we have $\enc{C_1[P]} = C_2[\,\enc P \,]$~\cite[Lemma 3.1]{PT20}.

\begin{proposition}[{\cite[Thm.\ 3.1]{PT20}}]\label{prop:basic strongly RF}
There exists no basic encoding from a non-strongly RF calculus to a strongly RF one.
\qed
\end{proposition}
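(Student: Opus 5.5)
Following \cite{PT20}, I will prove the statement by contradiction: I suppose that $\enc{\cdot}$ is a basic encoding from a non-strongly RF calculus $\calc_1$ into a strongly RF calculus $\calc_2$. Since $\calc_1$ is not strongly RF, Definition~\ref{def:rep-free} gives a context $C_1$, an invisible process $I$ and a process $P$ of $\calc_1$ with $C_1[I]\wbarb$ but $C_1[P]$ invisible.

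The first step uses compositionality. By \cite[Lemma 3.1]{PT20}, recalled right after Definition~\ref{def:basic}, there is a $\calc_2$ context $C_2$ with $\enc{C_1[Q]} = C_2[\enc{Q}]$ for every $\calc_1$ process $Q$. So $\enc{C_1[I]} = C_2[\enc I]$ and $\enc{C_1[P]} = C_2[\enc P]$.

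The second step uses interaction sensitivity three times:
\begin{itemize}
\item $I$ is invisible, so $\enc I$ is invisible;
\item $C_1[I]\wbarb$, so $C_2[\enc I]\wbarb$;
\item $C_1[P]$ is invisible, so $C_2[\enc P]$ is invisible.
\end{itemize}

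The third step applies strong replacement freeness of $\calc_2$ to the context $C_2$, the invisible process $\enc I$ and the process $\enc P$. This gives $C_2[\enc P]\wbarb$, contradicting the second step.

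The main subtlety, and the reason the statement concerns strongly RF targets, is that $\enc I$ need not be closed even when $I$ is, since encodings may introduce free names. Only the strong version of RF therefore applies without extra assumptions on the encoding. I must also check that the single-hole context $C_2$ obtained from compositionality is a legitimate context of $\calc_2$ in the sense of Definition~\ref{def:rep-free}. This follows from the inductive construction of $C_2$ by composing the contexts $C_{\opr}$ and substituting encodings of the fixed subterms.
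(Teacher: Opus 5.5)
Your proof is correct and is essentially the argument of \cite[Thm.~3.1]{PT20}, which the paper only cites here but reproduces almost verbatim for its success-based analogue, Proposition~\ref{prop:basic success SuRF}. The steps match: compositionality gives the single-hole context $C_2$; interaction sensitivity transfers the invisibility of $I$ and of $C_1[P]$, and the visibility of $C_1[I]$, to the encoded processes; and strong RF of the target then gives the contradiction. Your remarks that $\enc I$ need not be closed, and that $C_2$ is a genuine single-hole context because each hole of $C_{\opr}$ occurs once, are extra care rather than a different route.
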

Proposition~\ref{prop:basic strongly RF} is used in~\cite{PT20} to show that the $\pi$-calculus extended with polyadic synchronisation and/or matching cannot be encoded into the basic $\pi$-calculus.

We now instantiate reduction and barbs for CCS, the $\pi$-calculus and CCSK.
%
\begin{definition}
[Reduction and barbs for CCS]
Let $P \red P'$ iff $P \tran {\tau} P'$.
Strong barb: $P \barb a$ if $P \tran a P'$ for some $P'$.
$P \barb \out a$ if  $P \tran {\out a} P'$ for some $P'$.
\end{definition}

%

\begin{definition}
[Reduction and barbs for the $\pi$-calculus]\label{def:pi barb}
Let $R \red R'$ iff $R \pitran {\tau} R'$. 
Strong barb: $R \barb a$ if $R$ can perform an input transition with subject $a$;
$R \barb \out a$ if $R$ can perform an output transition with subject $a$.
\end{definition}

In CCSK we allow both forward and backward barbs.
%

\begin{definition}
[Reduction and barbs for CCSK](cf.~\cite{LaneseP21})
Let $P \red P'$ iff either $P \ftran {\tau[m]} P'$ or $P \rtran {\tau[m]} P'$, for some key $m$.
Strong barb: $P \barb \mu$ if $P \ftran {\mu[m]} P'$ for $\mu \neq \tau$ and some $m$ and $P'$.
$P \barb \rev\mu$ if $P \rtran {\mu[m]} P'$ for $\mu \neq \tau$ and some $m$ and $P'$.
\end{definition}
The next lemma is instrumental in proving that CCS and the $\pi$-calculus are strongly RF.
\begin{restatable}{lemma}{CCSpiinvisiblebarb}\label{lem:CCS pi invisible barb}
In CCS or the $\pi$-calculus, let $C[\hol]$ be a context, let $I$ be invisible and let $P$ be any process, and let $\alpha$ be any barb (so that $\alpha = a$ or $\alpha = \out a$ for some name $a$).
If $C[I] \wbarb \alpha$ then $C[P] \wbarb \alpha$.
\end{restatable}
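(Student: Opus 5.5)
The idea is that an invisible process can never take part in any interaction with its context. Hence in any computation of $C[I]$ reaching a barb, the copy of $I$ behaves as an isolated subsystem. It can only make internal $\tau$-steps, and those never affect the context or the barb. So the same computation, with the $I$-moves erased, is available to $C[P]$ for any $P$.

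First I record the basic fact about invisible processes. If $I$ is invisible, then every $I'$ with $I \wred I'$ is invisible too, and $I'$ has no input or output transition at all. The reason is that any transition with subject $a$ would give $I' \barb a$ or $I' \barb \out a$, hence $I \wbarb$. In the $\pi$-calculus this covers bound outputs $\Co{a}(b)$ as well, since those also have subject $a$. Consequently, every transition of an $I'$ reachable from $I$ is a $\tau$.

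Next comes the main step, which I expect to be the technical obstacle. I would prove by induction on the length of $C[I] \wred Q \barb \alpha$ a decomposition lemma. It says that every $Q$ reached from $C[I]$ has the form $Q \equiv C'[I']$, where $I \wred I'$ and $C'[\hol]$ is a context satisfying two conditions: $C[P] \wred C'[P]$ for every $P$, and the hole of $C'$ is still in an "active" position (or $C'$ has discarded the hole, e.g.\ through a choice). The first reduction step is analysed by induction on the derivation of the transition of $C[I]$, following the rules of Figure~\ref{fig:pi_sem} (or the CCS rules). There are three cases.
\begin{itemize}
\item The step is a move of the context alone (rules \textsc{Par}, \textsc{Sum}, \textsc{Rec}, \textsc{Res} not involving the hole). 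Then the context is updated and the hole is carried along or discarded.
\item The step is a $\tau$-move of $I$ alone. Then $C'=C$ and $I'$ advances.
\item The step is a synchronisation between $I$ and the context (\textsc{Com}/\textsc{Close}). This is impossible, because $I'$ has no visible transitions.
\end{itemize}
The delicate points are the following. Structural congruence and \textsc{Str} have to be handled by working up to $\equiv$ with contexts; I would normalise $C[\hol]$ so that the hole sits under parallel and restriction only, or under a guard. Also, if the hole is under a prefix or under recursion, $I$ cannot move before the context has consumed the guard, and recursion unfolding may duplicate the hole. To avoid this, I would generalise the statement to multi-hole contexts, with all holes filled by reducts of $I$. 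With the generalisation, unfolding remains harmless. Substitution in the $\pi$-calculus is handled because names received by the context cannot reach $I$ through communication, and substitutions applied to the holes' contents commute with invisibility: $I\sigma$ is still invisible.

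Finally, I would conclude. Let $Q = C'[\tilde{I'}] \barb \alpha$. The barb cannot come from an $I'$, since the $I'$ have no visible transitions. So it is produced by a transition of $C'$ derived without using any hole, and that same derivation gives $C'[\tilde{P}] \barb \alpha$. Combining this with $C[P] \wred C'[\tilde{P}]$ yields $C[P] \wbarb \alpha$.

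Restriction needs one check: a restriction in $C'$ hides $\alpha$ regardless of the hole contents. This holds because the transition of $C'$ producing the barb never inspects the holes, so the same restriction applies to $C'[\tilde{P}]$.
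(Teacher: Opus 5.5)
Your replay argument is sound in outline, but it follows a genuinely different route from the paper. The paper never tracks a computation. Instead it imports Pugliese and Tiezzi's stratified $\omega$-simulation and proves $C[I] \presim{\omega} C[P]$ by structural induction on $C$; for CCS it simply cites their result. The ingredients are:
\begin{itemize}
\item closure lemmas for $\presim{k}$ under $\tau$/output prefix, parallel composition, restriction and summands;
\item a separate input case, since input prefix does not preserve the preorder; this uses invisibility of $I\sigma$ and $C[R]\sigma = C\sigma[R\sigma]$;
\item for recursion, a double induction on $k$ and derivation depth over terms with $X$ free, plus an unfolding lemma.
\end{itemize}
All of this is done in the (\textsc{Str})-free LTS $\rectran{}$ and transferred back via $R \pitran{\mu} R'$ iff $R \rectran{\mu} \pistr R'$. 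The lemma then follows from barb preservation under $\presim{\omega}$.

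Your multi-hole invariant instead amounts to showing directly that the pairs $(C'[\tilde{I'}], C'[\tilde{P}])$ form a weak simulation, with moves of $I$ matched by no move. This is more elementary, and it handles recursion uniformly, because unfolding merely duplicates holes. The paper's route buys reuse of existing, modular machinery.

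When you write yours out, three points need tightening.
\begin{itemize}
\item Drop the ``active position'' clause. It already fails at step zero for $C = a.\hol$, and holes in unfolded copies are guarded.
\item Let hole contents be $\tau$-reducts of instances $I\sigma_j$, with $C[P] \wred C'[P\sigma_1,\ldots,P\sigma_m]$ up to $\alpha$-conversion. The $\alpha$-conversion is needed because side conditions such as $\bn(\mu)\cap\fn(S)=\emptyset$ in (\textsc{Par-L}) mention the free names of $P$.
\item Do the case analysis on (\textsc{Str})-free derivations, not by normalising $C$. Structural congruence (e.g.\ scope extrusion, whose side condition depends on the hole's free names) can merge a hole's content with the surrounding context, so an induction on a derivation that uses (\textsc{Str}) at the root does not yield your decomposition.
\end{itemize}
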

\begin{restatable}[{\cite{PT20}}]{proposition}{CCSpistronglyRF}\label{prop:CCS pi strongly RF}
CCS and the $\pi$-calculus are strongly RF.
\end{restatable}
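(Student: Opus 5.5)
Proposition~\ref{prop:CCS pi strongly RF} follows almost directly from Lemma~\ref{lem:CCS pi invisible barb}, so the plan is to unfold the definition of strong replacement freeness (Definition~\ref{def:rep-free}) and apply the lemma. Fix CCS or the $\pi$-calculus, a single-hole context $C$, an invisible process $I$ and an arbitrary process $P$; note that, unlike plain RF, $I$ need not be closed. Suppose $C[I] \wbarb$. By the definition of visibility there is a barb $\alpha$ with $C[I] \wbarb \alpha$, and in both calculi a barb has the form $a$ or $\out a$. The lemma then yields $C[P] \wbarb \alpha$, hence $C[P] \wbarb$. Since $C$, $I$, $P$ were arbitrary, the calculus is strongly RF. No closedness assumption is used, which is why we obtain the strong version.

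The real content lies in Lemma~\ref{lem:CCS pi invisible barb}, which we may assume. To keep the argument self-contained I would recall its idea. Take a weak reduction $C[I] \red^n Q \barb \alpha$ and argue by induction on $n$ (and on the structure of the derivation of each step) that the same reductions can be matched by $C[P]$ with $P$ left untouched. The key observation is that an invisible $I$ can never perform a visible action along any reachable path. Hence $I$ cannot take part in any synchronisation with the context, nor contribute the final barb. Its only possible moves are internal $\tau$-steps that stay inside its own copies, and these can simply be dropped.

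The main technical obstacle is that the hole may be duplicated by recursion in the $\pi$-calculus, or copied and moved under the context's prefixes. I would therefore generalise the statement to multi-hole contexts $C[\tilde I]$ in which every hole holds a derivative $I'$ of $I$, each $I'$ itself invisible. One then shows by case analysis on the last rule (\textsc{Com}, \textsc{Close}, \textsc{Par}, \textsc{Rec}, \textsc{Str}) that each step is either internal to some $I'$ or a step of the context alone. A further subtlety in the $\pi$-calculus is name substitution: an input of the context may substitute into $I'$ a name that was free in $I$. Invisibility must therefore be preserved under substitution, which holds because substitution cannot create a barb, since barbs depend only on the subjects of prefixes. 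This point needs the most care, but the lemma already packages it, so the proposition itself is immediate.
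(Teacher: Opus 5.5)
Your argument is correct and matches the paper's proof: you unfold Definition~\ref{def:rep-free} and apply Lemma~\ref{lem:CCS pi invisible barb} to the barb witnessing $C[I] \wbarb$, with no closedness assumption on $I$, which is exactly the paper's ``immediate consequence'' argument (it also cites \cite{PT20} for CCS and for the $\pi$-calculus with replication). Your recalled sketch of the lemma, which erases the invisible process's $\tau$-steps over multi-hole contexts, differs from the paper's appendix route, which establishes $C[I] \presim{\omega} C[P]$ by induction on contexts following \cite{PT20}; since the lemma is assumed, this difference does not affect the proof of the proposition.
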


%

The analogue of Lemma~\ref{lem:CCS pi invisible barb} does not hold for CCSK, as the following example shows.
\begin{example}\label{ex:CCSK barbs}
Let $C[\hol] = a[m].\hol$.  Let $I = \nil$ and $P = b[n]$.
Then $C[I] \wbarb \rev a$ but $C[P] \wbarb \rev b$ and not $C[P] \wbarb \rev a$. \lipicsEnd 
\end{example}
Nevertheless, we can show that CCSK is strongly RF using different methods.

\begin{restatable}{proposition}{CCSKstronglyRF}\label{prop:CCSK strongly RF}
CCSK is strongly RF. 
\end{restatable}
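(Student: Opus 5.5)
The plan is to transfer a weak barb of $C[I]$ to $C[P]$ by reducing, as far as possible, to the CCS-style argument behind Lemma~\ref{lem:CCS pi invisible barb}, and to handle the extra phenomena of CCSK (keys and backward barbs, cf.\ Example~\ref{ex:CCSK barbs}) by using reversibility rather than matching the barb exactly. We only need $C[P]\wbarb$, not $C[P]\wbarb\alpha$ for the same $\alpha$, and we rely throughout on the standing assumption that the processes involved are reachable.

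\emph{Step 1 (what $I$ can contribute).} Fix a computation $C[I]\wred Q\barb\alpha$. By induction on the derivation of each step, every transition of $C[I]$ is one of three kinds: a step of the context alone, an internal $\tau$ step of the current derivative of $I$, or a synchronisation between the context and the derivative of $I$. The third kind is impossible. The derivative of $I$ is reached from $I$ by $\tau$ steps, because the side conditions of (\textsc{Par}), (\textsc{Sum}), (\textsc{Act2}) and their backward versions only restrict behaviour and never enable it. A visible forward or backward action of that derivative would therefore give $I\wbarb$, which contradicts invisibility. For the same reason the final barb $\alpha$ is produced by the context, not by $I$. Hence we may erase the steps of $I$ and obtain a computation of the context acting on a keyless-for-synchronisation hole.

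\emph{Step 2 (standard $P$).} Suppose first that $\std{P}$ holds. Replay the context steps of Step 1 in $C[P]$, choosing fresh keys so that no side condition $k\notin\keys{\cdot}$ is violated. Context steps only inspect the hole through the conditions $\std{\cdot}$ in (\textsc{Act1}), (\textsc{Act1}$^\bullet$), (\textsc{Sum}) and (\textsc{Sum}$^\bullet$), and through key disjointness. The derivative of $I$ might be non-standard at some of these points, but Step 1 lets us assume it is standard at exactly the points where the context step fires, by undoing $I$'s $\tau$ steps first (loop property of CCSK). Since $P$ is standard, the same steps fire in $C[P]$ and reach $C'[P]\barb\alpha$, so $C[P]\wbarb$.

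\emph{Step 3 (non-standard $P$: the main obstacle).} This is where CCSK differs from CCS, and where Example~\ref{ex:CCSK barbs} shows that the barb $\alpha$ itself need not survive. My first attempt is to use reversibility: since $C[P]$ is reachable, it can be rolled back to $\tostd{C[P]}$ by backward steps. If some backward step along the way is visible, then either $C[P]$ already has a backward barb, or it reaches one after the preceding backward $\tau$ steps, so $C[P]\wbarb$. Otherwise every action in the history of $C[P]$, and in particular every keyed action of $P$, is a $\tau$, and we get $C[P]\wred\tostd{C}[\tostd{P}]$. The same argument applied to $C[I]$ gives either $C[I]\wred\tostd{C}[\tostd{I}]$ or a visible backward action inside the context; in the latter case that action is also available to $C[P]$ after rollback. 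We then apply Step 2 to the standard hole $\tostd{P}$ against $\tostd{I}$, which is still invisible: $I\wred\tostd{I}$ by $\tau$-reversal, and $\tostd{I}$ is reachable back from $I$ only through $\tau$ steps.

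The delicate points are twofold. First, keys shared between the context and $P$ through bound synchronisation keys mean that $P$'s part cannot always be reversed in isolation. Second, the rollback of $C[I]$ must be carried out jointly with the context. I expect to handle both by reasoning on the whole term $C[\cdot]$ and appealing to the causal-consistency and loop properties of CCSK~\cite{LaneseP21}, rather than by reasoning locally on the hole.
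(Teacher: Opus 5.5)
Your Steps~1 and~2 are sound and are essentially the paper's Lemmas~\ref{lem:context invisible} and~\ref{lem:CCSK wbarb invisible std}. Your remark about first undoing $I$'s $\tau$ steps is unnecessary: a context step whose premise requires $\std{\cdot}$ of a region containing the hole can only fire when the current derivative of $I$ is already standard. The opening of Step~3 is also what the paper does: you roll $C[P]$ back and either meet a visible step or get $C[P]\wred C_0[P_0]$, with $C_0[\hol]=\tostd{C[\hol]}$ and $P_0=\tostd{P}$.

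The gap is in how you then obtain the hypothesis $C_0[I_0]\wbarb$, with $I_0=\tostd{I}$, that Step~2 needs.
\begin{itemize}
\item Your second branch is false as stated. Take $C[\hol]=a[k].\nil\Par\hol$, $I=\nil$, $P=\Co{a}[k].\nil$. Rolling back $C[I]$ performs the visible context step $\rtran{a[k]}$. In $C[P]$, however, the key $k$ is shared with $P$, so $a[k]$ can only be undone together with $\Co{a}[k]$ by (\textsc{Syn}$^\bullet$), i.e.\ as a $\tau$. Indeed $C[P]\wbarb\rev{a}$ fails; $C[P]\wbarb$ holds here only through the forward barbs of $a.\nil\Par\Co{a}.\nil$, not for the reason you give.
\item Your first branch points the wrong way: $C[I]\wred C_0[I_0]$ together with $C[I]\wbarb$ says nothing about the barbs of $C_0[I_0]$.
\end{itemize}
You leave the two ``delicate points'' open, so as written Step~3 does not close.

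A single observation repairs this, with no case split and no local reasoning on the hole.
\begin{itemize}
\item Since $C[I]$ is reachable, there is a forward-only path from $\tostd{C[I]}=C_0[I_0]$ to $C[I]$ (Lemma~\ref{lem:rch fwd-only}, or reverse the rollback using the loop lemma). If all its labels are $\tau$, then $C_0[I_0]\wred C[I]\wbarb$. Otherwise its first visible step is a forward weak barb of $C_0[I_0]$. Either way $C_0[I_0]\wbarb$.
\item $I_0$ is invisible. $I$ is reachable as a subprocess, and its rollback to $I_0$ can contain only $\tau$ steps, since a visible one would give $I\wbarb$. Hence $I\wred I_0$, and any weak barb of $I_0$ would be one of $I$.
\item Step~2 applied to $C_0$, $I_0$ and the standard $P_0$ now gives $C_0[P_0]\wbarb$, hence $C[P]\wbarb$.
\end{itemize}
Keys shared between $C$ and $P$ never matter, because Step~2 is only used on standard terms.

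With this fix your route is a genuine alternative to the paper's. The paper does not descend to $C_0[I_0]$. After reaching $C_0[P_0]$, it replays the context's own forward history with the standard hole, using reachability of $C[\hol]$ (Lemma~\ref{lem:context invisible rch hole}) and the lifting Lemmas~\ref{lem:context standard} and~\ref{lem:context standard rch}. This yields $C[P]\wred C[P_0]$ or a visible step, and the paper then applies the standard-hole lemma in the original context against the original $I$, using $C[I]\wbarb$ verbatim. Your version avoids the reachability and replay lemmas for contexts. The price is transferring the barb to $C_0[I_0]$ and checking that $I_0$ is invisible.
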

Since CCSK, CCS and the $\pi$-calculus are all strongly RF, we need to adapt the definitions of replacement freeness and basic encodings to obtain an impossibility result for encoding CCSK into the $\pi$-calculus.

We therefore modify Definition~\ref{def:rep-free} by considering success rather than visibility.  We suppose $\tick$ be a particular barb that can be used to record the success of a computation.
In CCSK, $\tick$ belongs to $\mathcal{N}$, the set of names, and is a forward barb;
similarly for CCS.
In the $\pi$-calculus, $\tick$ belongs to $\mathcal{N}_\pi$, the set of channel names, and is an input barb.
\begin{definition}
[Success replacement freeness]\label{def:success rep-free}
A calculus $\calc$ is \emph{strongly success replacement free (strongly SuRF)} if for every single-hole context $C$, invisible process $I$ and process $P$ in $\calc$, we have
$C[I] \wbarb \tick$ implies $C[P] \wbarb \tick$.
A calculus $\calc$ is \emph{success replacement free (SuRF)} if for every single-hole context~$C$, \emph{closed} invisible process $I$ and process $P$ in $\calc$, we have
$C[I] \wbarb \tick$ implies $C[P] \wbarb \tick$.
Furthermore, $\calc$ is \emph{weakly SuRF} if it is SuRF but not strongly SuRF.
\end{definition}

\begin{definition}
[{\cite[Property 5]{Gor10}}]\label{def:success}
An encoding $\enc \cdot$  of $\calc_1$ into $\calc_2$ is \emph{success sensitive} if for all processes $P$ in $\calc_1$ we have $P \wbarb \tick$ iff $\enc P \wbarb \tick$.
\end{definition}

We can obtain an analogue of Proposition~\ref{prop:basic strongly RF}:
\begin{restatable}{proposition}{basicsuccessSuRF}
\label{prop:basic success SuRF}
There exists no basic, success-sensitive encoding from a non-strongly SuRF calculus to a strongly SuRF one.
\end{restatable}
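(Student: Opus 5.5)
We argue by contradiction, adapting the proof of Proposition~\ref{prop:basic strongly RF} from~\cite{PT20} with success in place of visibility. Suppose $\calc_1$ is not strongly SuRF, $\calc_2$ is strongly SuRF, and $\enc\cdot$ is a basic, success-sensitive encoding of $\calc_1$ into $\calc_2$. Since $\calc_1$ is not strongly SuRF, there are a single-hole context $C_1$, an invisible process $I$ and a process $P$ in $\calc_1$ such that $C_1[I] \wbarb \tick$ but not $C_1[P] \wbarb \tick$. The plan is to carry this counterexample to $\calc_2$ through the encoding and obtain a counterexample to strong SuRF there.

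First, compositionality (via \cite[Lemma 3.1]{PT20}) gives a $\calc_2$ context $C_2$ with $\enc{C_1[Q]} = C_2[\enc Q]$ for every $\calc_1$ process $Q$. Success sensitivity then yields $C_2[\enc I] = \enc{C_1[I]} \wbarb \tick$ and not $C_2[\enc P] = \enc{C_1[P]} \wbarb \tick$.

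Second, we need $\enc I$ to be invisible in $\calc_2$. This follows from interaction sensitivity, which is part of being basic: $I \wbarb$ fails, so $\enc I \wbarb$ fails. Strong SuRF of $\calc_2$, applied to $C_2$, the invisible $\enc I$ and the process $\enc P$, then gives $C_2[\enc P] \wbarb \tick$. This contradicts the first step.

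The only point needing care, and the nearest thing to an obstacle, is the use of the \emph{strong} variant. The invisible process $I$ need not be closed, and $\enc I$ need not be closed either, since encodings may introduce or keep free names. The strong notion quantifies over all invisible processes, not just closed ones, so no closedness property of the encoding is needed. We should also check that $C_2$ is genuinely a single-hole context, so that Definition~\ref{def:success rep-free} applies. This is immediate if the derived context is built by composing the contexts $C_{\opr}$ along the path to the hole of $C_1$, filling the other holes with encodings of the fixed sibling subterms.
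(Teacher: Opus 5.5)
Your proposal is correct and follows essentially the same argument as the paper's proof. Both obtain $C_2$ by compositionality, use success sensitivity in both directions to transfer $C_1[I] \wbarb \tick$ and the failure of $C_1[P] \wbarb \tick$, and use the reverse direction of interaction sensitivity to get $\enc{I}$ invisible, which contradicts strong SuRF of the target calculus.
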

\begin{restatable}{proposition}{CCSpistronglySuRF}\label{prop:CCS pi strongly SuRF}
CCS and the $\pi$-calculus are strongly SuRF.
\end{restatable}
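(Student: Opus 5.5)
The plan is to reduce the statement to Lemma~\ref{lem:CCS pi invisible barb}, instantiated at the particular barb $\tick$. Fix CCS or the $\pi$-calculus, a single-hole context $C$, an invisible process $I$ and an arbitrary process $P$, and assume $C[I] \wbarb \tick$.

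First, I unfold what $\tick$ is as a barb in each calculus. In CCS, $\tick$ is a name in $\mathcal N$ and is observed as a forward input barb. In the $\pi$-calculus, $\tick \in \mathcal N_\pi$ and is observed as an input barb, i.e.\ $R \barb \tick$ iff $R$ can perform an input with subject $\tick$ (Definition~\ref{def:pi barb}). In both cases, therefore, $\tick$ is a barb of the form $\alpha = a$ with $a = \tick$, which is exactly the shape covered by Lemma~\ref{lem:CCS pi invisible barb}.

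Second, I apply Lemma~\ref{lem:CCS pi invisible barb} with $\alpha = \tick$. From $C[I] \wbarb \tick$ it gives $C[P] \wbarb \tick$. This is precisely the defining condition of strong SuRF in Definition~\ref{def:success rep-free}, which quantifies over all contexts, all invisible $I$ (not necessarily closed) and all $P$. Hence both CCS and the $\pi$-calculus are strongly SuRF.

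The only point needing care, and the one I expect to be the main obstacle, is checking that Lemma~\ref{lem:CCS pi invisible barb} is stated generally enough. It must hold for an arbitrary barb name, including the distinguished $\tick$, and for invisible $I$ that may have free names, since strong SuRF does not require $I$ to be closed. The lemma as stated quantifies over any barb $\alpha$ and any invisible $I$, so nothing further should be needed. If one wanted to re-derive it, the argument would go by induction on the length of the weak reduction $C[I] \wred Q \barb \tick$. An invisible $I$ never performs a visible action in any derivative, so every step of the computation, as well as the final barb, is generated by the context alone. The same transitions are then available to $C[P]$, with $P$ kept idle, up to $\alpha$-renaming of bound names to avoid capture.
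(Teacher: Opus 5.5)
Your proposal is correct and is exactly the paper's argument: the paper proves this proposition in one line by instantiating Lemma~\ref{lem:CCS pi invisible barb} at the barb $\alpha = \tick$, which is an input barb in both CCS and the $\pi$-calculus. Your optional re-derivation sketch of that lemma is looser than the paper's $\omega$-simulation proof, but it plays no role in the proof of the proposition. For instance, a $\tau$-step of $I$ inside a summation can discard branches of the context, so $C[P]$ must skip that step rather than replay it; and under an input prefix $I$ is substituted into, which needs invisibility to be preserved by substitution and hence the absence of matching.
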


\begin{proposition}\label{prop:CCSK not SuRF}
CCSK is not SuRF.
\end{proposition}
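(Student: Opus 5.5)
To show that CCSK is not SuRF, we exhibit a single-hole context $C$, a closed invisible process $I$ and a process $P$ such that $C[I] \wbarb \tick$ but not $C[P] \wbarb \tick$. The idea mirrors Example~\ref{ex:CCSK barbs}: a keyed (past) prefix placed in the hole can block backward steps that $\nil$ would allow. Success, however, is a forward barb, so the blocked backward step must be a prerequisite for reaching $\tick$. Moreover, $I$ must be closed and invisible, so $I = \nil$ is the natural choice.

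\textbf{Construction.} Take
\[
C[\hol] \;=\; a[m].\hol \;\Par\; \Co{a}[m].\nil
\quad\text{or, more simply,}\quad
C[\hol] \;=\; (a[m].\hol + \tick.\nil).
\]
The second form is cleanest. $C[\nil] = a[m].\nil + \tick.\nil$ is reachable: it comes from the standard process $a.\nil + \tick.\nil$ by firing $a$. Its past $a[m]$ can be undone by {\scriptsize(\textsc{Act1}$^\bullet$)} and {\scriptsize(\textsc{Sum}$^\bullet$)}, yielding $a.\nil + \tick.\nil$, which has the forward barb $\tick$. Hence $C[\nil] \wbarb \tick$; this needs only one backward step and one forward step, with no reductions, since barbs include backward ones and $\tick$ is visible after the undo. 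Strictly, $\wbarb$ is defined via reductions $\red$, which are $\tau$ steps only, so the undone action should be a $\tau$. I will therefore use
\[
C[\hol] = \tau[m].\hol + \tick.\nil .
\]
Then $C[\nil] \red \tau.\nil + \tick.\nil \barb \tick$.

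\textbf{Invisibility of $I$.} $\nil$ is closed and has no transitions, so it is invisible.

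\textbf{Choice of $P$.} Take $P = b[n].\nil$, as in Example~\ref{ex:CCSK barbs}, so $C[P] = \tau[m].b[n].\nil + \tick.\nil$, reachable from $\tau.b.\nil + \tick.\nil$. I will check that the only transitions of $C[P]$ are a backward $b[n]$ step, via {\scriptsize(\textsc{Act2}$^\bullet$)} and {\scriptsize(\textsc{Sum}$^\bullet$)}, and nothing else:
\begin{itemize}
\item Undoing $\tau[m]$ requires $\std{b[n].\nil}$ in {\scriptsize(\textsc{Act1}$^\bullet$)}, which fails.
\item Firing $\tick$ requires the other branch to be standard in {\scriptsize(\textsc{Sum})}, which fails.
\item No forward steps exist under $b[n].\nil$.
\end{itemize}
Thus $C[P]$ admits no reduction, since $b[n]$ is not $\tau$. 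So the only process weakly reachable from $C[P]$ is itself, and it has no $\tick$ barb. Hence $C[P] \not\wbarb \tick$.

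\textbf{Main obstacle.} The delicate step is the exhaustive case analysis on the rules of Figure~\ref{fig:ccsk_sem_full}, taken up to structural congruence, confirming that $C[P]$ has no $\tau$ transitions and no forward $\tick$ transition. This rests on the standardness side conditions of {\scriptsize(\textsc{Sum})}, {\scriptsize(\textsc{Act1})} and {\scriptsize(\textsc{Act1}$^\bullet$)}. Structural congruence only affects parallel composition and $\alpha$-conversion, so it does not interfere here. One should also remark that $P$ need not be closed or invisible, and that reachability of the terms holds as shown, so the counterexample is legitimate.
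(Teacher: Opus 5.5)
Your final counterexample $C[\hol] = \tau[m].\hol + \tick.\nil$, $I = \nil$, $P = b[n].\nil$ is correct. $C[\nil]$ reaches $\tau.\nil + \tick.\nil \barb \tick$ by one backward $\tau$ reduction. $C[P]$ has no reduction at all, since its only transition is the visible backward $b[n]$ step, and it has no forward $\tick$ barb; both facts follow from the standardness side conditions of (\textsc{Act1}$^{\bullet}$) and (\textsc{Sum}).

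The paper's main proof uses a different context, $C[\hol] = (a[m].\hol \Par \Co{a}[m].\nil) \Par a.\tick.\nil$, with the same $I$ and $P$. There $C[\nil]$ undoes the synchronisation on $a$, and the re-enabled $\Co{a}$ then synchronises with $a.\tick.\nil$. In $C[P]$ the key $n$ blocks undoing the synchronisation, so $a.\tick.\nil$ has no partner. Your context is exactly the choice-based alternative that the paper mentions right after its proof.

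What your version buys is a shorter verification, with no (\textsc{Par})/(\textsc{Syn}) key side conditions to check, and no use of parallel composition at all. The cost is that it relies on a non-trivial sum and on $\tau$ prefixes. The paper's version uses only $\nil$, visible prefixes and top-level parallel composition. Without $\tau$ prefixes, a backward reduction can only come from undoing a synchronisation, which is why it needs the partner $\Co{a}[m].\nil$ and the extra component $a.\tick.\nil$. In exchange, it supports the later remark that the separation theorem still holds when compositionality is not required for summation. Your counterexample, with its hole inside a binary sum, would not support that remark.

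In a final write-up, drop the abandoned candidates. The context $a[m].\hol \Par \Co{a}[m].\nil$ contains no $\tick$. With $a[m].\hol + \tick.\nil$ one has $C[\nil] \not\wbarb \tick$, because undoing $a[m]$ is not a reduction, as you yourself noticed.
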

\begin{proof}
We consider the context $C[\hol] = (a[m].\hol \Par \bar a[m].\nil) \Par a. \tick.\nil$, closed invisible process $I = \nil$ and process $P = b[n].\nil$.
Then $C[I] \tran {\rev\tau[m]} (a.\nil \Par \bar a.\nil) \Par a.\tick.\nil \tran {\tau[k]} (a.\nil \Par \bar a[k].\nil) \Par a[k].\tick.\nil \barb \tick$, so that $C[I] \wbarb \tick$.
However $C[P]$ cannot perform any reduction, so that $C[P] \wbarb \tick$ fails to hold.
\end{proof}
The proof of Proposition~\ref{prop:CCSK not SuRF} uses a limited fragment of CCSK: just $\nil$, prefixing and top-level parallel composition. Actually, if we allow backward barbs as $\tick$, then we could use just
$C[\hol]=\tick[m].\hol$
(cf.\ Example~\ref{ex:CCSK barbs}). Alternatively, with forward barbs and $\tau$ prefixes we could use
$C[\hol]=\tick.\nil+\tau[m].\hol$, featuring choice instead of parallel composition. This shows that the result is robust. Indeed, analogous results can be proved for many reversible calculi, obviously including calculi that can embed CCSK such as
revTPL~\cite{BocchiLMY24}, but also others such as reversible broadcast CCS~\cite{Mez18}. This shows that this separation result is more a feature of the reversibility mechanism (indeed, all examples above rely on backward actions) than of the details of the calculus.
As a consequence, also the main result below is robust.
\begin{restatable}{theorem}{basicsuccsensCCSKpi}\label{thm:basic succ sens CCSK pi}
There is no basic, success-sensitive encoding from CCSK to CCS or the $\pi$-calculus.
\end{restatable}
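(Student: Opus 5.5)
The theorem follows directly by combining Proposition~\ref{prop:basic success SuRF}, Proposition~\ref{prop:CCS pi strongly SuRF} and Proposition~\ref{prop:CCSK not SuRF}. The plan is as follows.

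First, Proposition~\ref{prop:CCSK not SuRF} states that CCSK is not SuRF. Every strongly SuRF calculus is in particular SuRF, since the closed invisible processes form a subset of the invisible ones, so the SuRF condition is a special case of the strongly SuRF condition. Hence CCSK is not strongly SuRF either. Second, Proposition~\ref{prop:CCS pi strongly SuRF} states that CCS and the $\pi$-calculus are strongly SuRF. Third, Proposition~\ref{prop:basic success SuRF} forbids any basic, success-sensitive encoding from a non-strongly SuRF calculus into a strongly SuRF one. Instantiating it with $\calc_1 = $ CCSK and $\calc_2 \in \{\text{CCS}, \pi\}$ gives the theorem.

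The substance lies in the proofs of the two supporting propositions, in case they need justifying here. For Proposition~\ref{prop:basic success SuRF}, I would argue by contradiction, adapting~\cite[Thm.~3.1]{PT20}. Take $C_1$, an invisible $I$ and a $P$ in $\calc_1$ with $C_1[I]\wbarb\tick$ but not $C_1[P]\wbarb\tick$. By compositionality there is a context $C_2$ with $\enc{C_1[Q]} = C_2[\enc Q]$. The key step is to show that $\enc I$ is invisible in $\calc_2$; this should follow from interaction sensitivity, which is part of being basic. Success sensitivity then gives $C_2[\enc I]\wbarb\tick$, and strong SuRF of $\calc_2$ gives $C_2[\enc P]\wbarb\tick$. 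Success sensitivity again yields $C_1[P]\wbarb\tick$, a contradiction.

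For Proposition~\ref{prop:CCS pi strongly SuRF}, I would apply Lemma~\ref{lem:CCS pi invisible barb} with $\alpha=\tick$. The main obstacle I expect is that lemma itself. It requires showing that a computation of $C[I]$ can be mimicked by $C[P]$, with $P$ simply never moving. This works because $I$ is invisible, so it can never interact with the context, and its internal $\tau$-steps can be dropped without affecting the context's steps. In the $\pi$-calculus the argument needs care with name substitutions applied to the hole under input prefixes and recursion. The intended argument is an induction on the length of the reduction sequence, tracking the residuals of $I$ as invisible processes.
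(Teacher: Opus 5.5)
Your proof is correct and is the paper's own argument: the theorem follows by combining Propositions~\ref{prop:basic success SuRF},~\ref{prop:CCSK not SuRF} and~\ref{prop:CCS pi strongly SuRF}, and your sketch of Proposition~\ref{prop:basic success SuRF}, including the use of interaction sensitivity to show that $\enc{I}$ is invisible, matches the appendix proof. You also make explicit that not SuRF implies not strongly SuRF, which the paper leaves implicit. The paper proves Lemma~\ref{lem:CCS pi invisible barb} differently from your sketch, using Pugliese--Tiezzi $\omega$-simulation preorders by induction on contexts rather than induction on reduction length, but this does not affect the theorem.
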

Considering the counterexample in the proof of Proposition~\ref{prop:CCSK not SuRF}, we can rephrase Theorem~\ref{thm:basic succ sens CCSK pi} to state that there is no basic, success-sensitive encoding from CCSK with only top-level parallel composition to CCS or the $\pi$-calculus.
We shall present an encoding from CCSK with top-level parallel composition to the $\pi$-calculus (see Figure~\ref{fig:enc_strong}).
It must therefore fail to satisfy
at least one of compositionality, interaction sensitivity and success sensitivity.  Indeed, it is not compositional but it does satisfy the other two properties.

\begin{remark}
Since the counterexample in the proof of Proposition~\ref{prop:CCSK not SuRF} does not use summation, we can revise the definition of compositionality and of basic encoding to make no requirement on summation, and show that Theorem~\ref{thm:basic succ sens CCSK pi} still holds with this weaker notion of basic encoding.
\end{remark}

\begin{figure}
\begin{center}
\begin{tabular}{|c|c|}
\hline
\textbf{strongly RF} & \textbf{not RF} \\
CCS, $\pi$, CCSK  & CPG \\
\hline
\end{tabular}
\qquad\qquad
\begin{tabular}{|c|c|}
\hline
\textbf{strongly SuRF}  & \textbf{not SuRF} \\
CCS, $\pi$ & CCSK \\
\hline
\end{tabular}
\end{center}
\caption{Classification of calculi according to RF and SuRF.}\label{fig:RF SuRF}
\end{figure}
See Figure~\ref{fig:RF SuRF} for a summary of the classification of calculi according to RF and SuRF.  CPG is CCS with Priority Guards~\cite{Phi08}, which is shown to be not RF in~\cite[Prop.\ 6.2]{PT20}.

\section{Encoding CCSK into the $\pi$-calculus}\label{sec:encoding}

The previous section shows that there is no basic, success-sensitive encoding of CCSK into CCS 
or the $\pi$-calculus. 
Now we present an encoding of a subset of CCSK (with only top-level parallel), into the $\pi$-calculus.
%
We show that  only a subset of the $\pi$-calculus is needed, 
namely \emph{internal} $\pi$~\cite{Sangiorgi96a}.
We then prove that this encoding is correct up 
to strong bisimilarity.  
See App.~\ref{sec:app} for omitted proofs for this section.

CCSK processes use 
both names and keys.
 Both will be encoded as $\pi$-calculus names. We will call \emph{channel names} the encoding of the former and \emph{key names} the encoding of the latter.
We assume the two sets of $\pi$ names to be disjoint.
In CCSK, keys are created in forward computation and consumed to manage backward computation. 
Hence, in the $\pi$-calculus, images of CCSK forward computations will create new key names (via the $\nu$ operator) and images of CCSK backward computations will use these key names to go back to past states.

The operational semantics of CCSK guarantees that if a key occurs in a reachable process then it has one or two occurrences; 
see~\cite[Prop.\ 3.5]{LaneseP21}.
Keys are either free or bound, depending 
on which prefixes they are attached to (Definition~\ref{def:fbk}). Free and bound keys will be modelled by free and bound key names, respectively, in the $\pi$-calculus.

To track which key names are used to model which keys,
the encoding is parametric on a bijection $\phi$, which records the
correspondence between free 
keys and free 
key names.
In CCSK each key is always bound to a single name  (either as an input action $a$, or an output action $\Co{a}$, or both);
hence $\phi$ can be written as a set of items of the form $\phiel{a}{k}{x}$,
where $a, k$ are a CCSK name and a key, respectively, 
and $x$ is a $\pi$-calculus key name. 
We assume the usual operations on bijections: 
\emph{extension} $\phi[\phiel{a}{k}{x}]$, meaning that $\phi$ is extended with $\phiel{a}{k}{x}$ 
(assuming that $(a,k)$ is not in the domain of $\phi$);
and \emph{restriction} $\phi\setminus(a,k)$, meaning that $(a,k)$ is removed from the domain
of $\phi$.
The correspondence only tracks free keys, since bound keys correspond to bound names and hence are $\alpha$-convertible.

\begin{definition}[CCSK to $\pi$ encoding]\label{def:encoding}
The encoding function $\enc{\cdot}: \Proc\rightarrow \Proc_{\pi}$ is defined by
$\enc{P} = \nu K.\enc{P,\nil,\phi}$
where
$K$ is the set of key names for bound keys in $P$, 
and
$\enc{P,\nil,\phi}$ is defined in Figure~\ref{fig:enc_strong},
where $\phi$ is omitted since it is fixed.
\end{definition}
Note that if we encode a standard process then the set of bound keys is empty.
The second parameter of the encoding in Figure~\ref{fig:enc_strong} is a $\pi$-calculus process (ranged over by $R$), 
and is computed
by the encoding itself. 
It is used to build the previous state of the $\pi$-calculus process which is the encoding of a CCSK process $P$.
We will often refer to this parameter as a \emph{backtrack process}.  
%
\begin{figure}[t]
\begin{align*}
  \enc{P \Par Q, \nil} =\;& \enc{P,\nil} \Par \enc{Q,\nil}\\
  \enc{(\nu a)P , R} =\;& \nu a.\enc{P,R}\\
  \enc{\Sigma_i \mu_i.P_i,R}  =\;& \rec X (R + \Sigma_i \encpref{\mu_i.P_i,X}) \qquad \qquad \textrm{ $X$  fresh}\\
  \enc{\Co{a}[k].P + \Sigma_i \mu_i.P_i, R}  =\;& \enc{P,\Co{x_k}.\enc{\Co{a}.\tostd{P}+\Sigma_i \mu_i.P_i,R}} 
\quad \textrm{ with $\phiel{a}{k}{x_k} \in \phi$}\\
  \enc{a[k].P + \Sigma_i \mu_i.P_i, R}  =\;& \enc{P,x_k.\enc{a.\tostd{P}+\Sigma_i \mu_i.P_i,R}} 
\quad \textrm{ with $\phiel{a}{k}{x_k} \in \phi$}\\  
    \enc{\tau[k].P + \Sigma_i \mu_i.P_i, R}  =\;& \enc{P,\tau.\enc{\tau.\tostd{P}+\Sigma_i \mu_i.P_i,R}} 
  \end{align*}
 \[ \encpref{\Co{a}.P,X} =\; \Co{a}(y).\enc{P,\Co{y}.X} \qquad
  \encpref{a.P,X} =\; a(y).\enc{P,y.X} \qquad
 \encpref{\tau.P,X} =\; \tau.\enc{P,\tau.X}\]
\caption{Encoding of CCSK into the internal $\pi$-calculus.}
\label{fig:enc_strong}
\end{figure}

We now explain the encoding rules in Figure~\ref{fig:enc_strong}.
At the top-level, the only place in which parallel
composition is allowed,
the backtrack process $R$ is $\nil$.
The parallel and restriction
operators are encoded homomorphically. 
A process $\Sigma_i \mu_i.P_i$ represents an $n$-ary choice among~$n$ prefixed processes. 
Since CCSK is a reversible process calculus, we can get back
to the original process from the executed branch. 
%
We enable backward execution via the recursion variable $X$, which gives access to the state before the choice is taken. 
Variable $X$ is fresh and 
is passed to all the branches of the encoding of the choice. In this way, every branch has enough information
to get back to the previous state $X$.
%
Hence, the encoding of the $n$-ary choice is rendered as an
$(n+1)$-ary choice in which the first branch is the backtrack process~$R$. 
This idea is illustrated  in Example~\ref{lbl:enc_cho} below.

A process 
$\Co{a}[k].P + \Sigma_i \mu_i.P_i$ represents a choice process, where the first branch $\Co{a}[k].P$
is being executed (and $\Sigma_i \mu_i.P_i$ is unused).
Hence, the encoding proceeds to encode the prefix continuation $P$.
Meanwhile, it builds the parameter $R$ in such a way that once 
the action $\Co{a}[k]$ is reversed via the prefix $\Co{x_k}$, the process evolves to the encoding of 
the entire choice process $a.\tostd{P}+\Sigma_i \mu_i.P_i$. 
%
Since the process $P$ may have executed actions with keys, 
we use the erasing function $\tostd{\cdot}$ to remove them, obtaining a standard version of $P$. 

\Comment{
We define the encoding as $\enc{P,\phi}=\nu X_K \enc{P,\irek{R}, 
\phi}$ where
$X_K$ is the set of \il{bound} key names \il{in $\enc{P,\irek{R},\phi}$}.

The second part, described in Figure~\ref{fig:enc_strong}, is an encoding with three parameters: the CCSK process
$P$ to encode, the backtrack process \irek{$R$} of \irek{$P$}, if
any, $\nil$ otherwise, and the correspondence $\phi$ mentioned
above. Since the correspondence $\phi$ is fixed, we do not write it
explicitly, but only mention it when relevant.
}
To simplify the definition of the encoding, and make it more modular, we define it on single prefixes (last line of the encoding) and then use this definition to specify the encoding of choice.
The encoding of a prefix
$\Co{a}.P$ is rendered as an output prefix which sends on the channel name $a$ a new key name $y$,
and continues as the encoding of its continuation. The freshly created
key name is used to synchronise the undo of the output with the corresponding input. This is why the backtrack process $X$ is prefixed with an action on key name $y$.
The encodings of a keyed input prefix and a $\tau$ prefix are similar. The only difference
is that the encoding of the input receives the fresh key name from the output one,
while we do not need a key name for a $\tau$ prefix.


%



The image of the above encoding is a subset of internal $\pi$
of order two $\pi\mathbf{I}^2$, since communicated names are used for synchronisation only. According to~\cite{Sangiorgi96a}, CCS and  $\pi\mathbf{I}^2$ have distinct expressive power.
This result is not however explicitly lifted in~\cite{Sangiorgi96a} to an impossibility result of encoding internal $\pi$ into CCS.
See Section~\ref{sec:conc} for further discussion.

\begin{example}\label{lbl:enc_cho}
Consider the CCSK process $a.\Co{b}.\nil$. Since $a.\Co{b}.\nil$ is standard, $\phi$ is $\emptyset$. 
The encoding $\enc{a.\Co{b}.\nil}$ is $\nu \emptyset. \enc{a.\Co{b}.\nil,\nil,\phi}$. 
Simplifying and omitting $\phi$, we calculate $\enc{a.\Co{b}.\nil,\nil}$ as
     \begin{align*}
    &
    \rec{X_a} (\nil + a(x_a).\enc{\Co{b}.\nil,x_a.X_a} )
     = \rec{X_a} a(x_a).\rec{Y_{\Co{b}}} (x_a.X_a + \Co{b}(y_b).\enc{\nil,\Co{y_b}.Y_{\Co{b}}})\\
      &  \quad                     = \rec{X_a} a(x_a).\rec{Y_{\Co{b}}} (x_a.X_a + \Co{b}(y_b).(\Co{y_b}.Y_{\Co{b}} + \nil) \\
        &  \quad                   
        = \rec{X_a} a(x_a).\rec{Y_{\Co{b}}} (x_a.X_a + \Co{b}(y_b).\Co{y_b}.Y_{\Co{b}})
  \end{align*}
\Comment{Let us now consider the CCSK process $a.\Co{b}.\nil$. Since the process is an initial one, the map $\phi$
  is the $\emptyset$. The encoding is then
     \begin{align*}
    &\enc{a.\Co{b}.\nil,\emptyset} =     \enc{a.\Co{b}.\nil,\nil,\emptyset} =  \enc{a.\Co{b}.\nil,\nil}\\
    &\quad			 = \rec{X_a} (\nil + a(x_a).\enc{\Co{b}.\nil,x_a.X_a} )\\
     & \quad                       = \rec{X_a} ( \nil + a(x_a).\rec{Y_{\Co{b}}} (x_a.X_a + \Co{b}(y_b).\enc{\nil,\Co{y_b}.Y_{\Co{b}}}))\\
      &  \quad                     = \rec{X_a} ( \nil + a(x_a).\rec{Y_{\Co{b}}} (x_a.X_a + \Co{b}(y_b).(\Co{y_b}.Y_{\Co{b}} + \nil))\\
        &  \quad                   = \rec{X_a} a(x_a).\rec{Y_{\Co{b}}} (x_a.X_a + \Co{b}(y_b).\Co{y_b}.Y_{\Co{b}})
  \end{align*}
}
As highlighted by the choice of the names of the recursion variables (which is immaterial, since they are bound), undoing $a$ by synchronising on channel $x_a$ leads back to $X_a$, and analogously undoing $\Co{b}$ by synchronising on channel $y_b$ leads back to $Y_{\Co{b}}$. \lipicsEnd 
\end{example}

\begin{example}
Consider process 
$P = a[k].(b[h].\nil + c.\nil)$.
Letting $\phi = \{\phiel{a}{k}{x_k}, \phiel{b}{h}{x_h}\}$, and
noting that $K$ is empty since both keys are free, $\enc{P,\nil,\phi}$ is calculated as
\begin{align*}
   & \enc{(b[h].\nil + c.\nil), x_k.\enc{a.\tostd{b[h].\nil + c.\nil}, \nil}}   
     = \enc{(b[h].\nil + c.\nil), x_k.\enc{a.(b.\nil + c.\nil), \nil}}   \\
     &\quad= \enc{\nil, x_h.\enc{b.\nil + c.\nil, x_k.\enc{a.(b.\nil + c.\nil), \nil}}} 
     =\nil + x_h.\enc{b.\nil + c.\nil, x_k.\enc{a.(b.\nil + c.\nil), \nil}} \\
     &\quad=x_h.\textcolor{blue}{\enc{b.\nil + c.\nil, x_k.\textcolor{red}{\enc{a.(b.\nil + c.\nil), \nil}}}}
\end{align*}
where (for the sake of readability, we use colours to highlight the substituted subterms):
\begin{align*}
&\enc{a.(b.\nil + c.\nil), \nil} = \textcolor{red}{\rec{X_k}\encpref{a.(b.\nil + c.\nil), X_k}}\\
&\enc{b.\nil + c.\nil, x_k.\enc{a.(b.\nil + c.\nil), \nil}} = \textcolor{blue}{\rec{Y_h}(x_k.{\color{red}{\enc{a.(b.\nil + c.\nil),\nil}}}} 
 + \textcolor{blue}{\encpref{b.\nil, Y_h} +   
\encpref{c.\nil, Y_h}  )}.
\end{align*}
Assuming $R=x_h.\textcolor{blue}{\rec{Y_h}(x_k.{\color{red}{\rec{X_k}\encpref{a.(b.\nil + c.\nil), X_k}}} + \encpref{b.\nil, Y_h} +   
\encpref{c.\nil, Y_h}  )}$, we have $\enc{P,\nil,\phi}= \nu \emptyset. R$. We shall omit the outermost 
restriction in the rest of the paper.

If we look at the process $P$, the only move it can make is  $P\rtran{b[h]} a[k].(b.\nil + c.\nil)$, 
to a process which can still either undo $a[k]$ or move forward by doing $b$ or $c$. 
This behaviour is perfectly matched by $R$ since the only action that $R$ can perform is $x_h$ 
(mimicking the undoing of $b[h]$):
\begin{align*}
R \pitran{x_h} 
\quad\textcolor{blue}{\rec{Y_h}(x_k.{\color{red}{\rec{X_k}\encpref{a.(b.\nil + c.\nil), X_k}}} + \encpref{b.\nil, Y_h} +   
\encpref{c.\nil, Y_h}  )} \tag*{\lipicsEnd}
\end{align*}
\Comment{The target process can either do $x_k$ (mimicking the undoing of $a[k]$) or do either $b$ or $c$. 
Let us note that
the process variables $Y_h$ and $X_k$ are used by the encoding to point to the \emph{previous} states of $R$, namely after $a$ is
performed and before $a$ is performed, respectively.\finex}
\end{example}

\begin{example}
Let us consider the CCSK processes $P_1 = a.\Co{b}.\nil$ and  $P_2 = b.\nil$.
According to CCSK semantics we can derive the following computation:
\begin{align*}
P_1 \Par P_2 &\ftran{a[k]} a[k].\Co{b}.\nil \Par b.\nil \ftran{\tau[w]}a[k].\Co{b}[w].\nil \Par b[w].\nil \rtran{\tau[w]} 
a[k].\Co{b}.\nil \Par b.\nil
\end{align*}
The encoding of $P_1 \Par P_2$ is $\enc{P_1\Par P_2,\nil}$, which we calculate as
\begin{align*}
&\enc{P_1, \nil} \Par \enc{P_2,\nil} =
 \rec{X_a} a(x_a).\rec{Y_{\Co{b}}} (x_a.X_a + \Co{b}(y_b).\Co{y_b}.Y_{\Co{b}})  \Par 
 \rec{Y_b} b(y_b).y_b.Y_b
 \end{align*}
 The encoded process reproduces the transitions of $P_1 \Par P_2$ as follows. 
 Let us note that for the sake of readability we do not substitute the recursive process variables. 
 \begin{align}
&  \enc{P_1\Par P_2,\nil}\pitran{ax_a} \nonumber\\
&  \rec{Y_{\Co{b}}} (x_a.X_a + \Co{b}(y_b).\Co{y_b}.Y_{\Co{b}})  \Par 
 \rec{Y_b} b(y_b).y_b.Y_b \pitran{\tau}\label{proc_b}\\
 &\nu y_b.(\Co{y_b}.Y_{\Co{b}} \Par y_b.Y_{b}) \pitran{\tau} \label{proc_tau}\\
 & \nu y_b. (  \rec{Y_{\Co{b}}} (x_a.X_a + \Co{b}(y_b).\Co{y_b}.Y_{\Co{b}})  \Par 
 \rec{Y_b} b(y_b).y_b.Y_b ) \pistr \label{proc_back}\\
 & \rec{Y_{\Co{b}}} (x_a.X_a + \Co{b}(y_b).\Co{y_b}.Y_{\Co{b}})  \Par 
 \rec{Y_b} b(y_b).y_b.Y_b  \label{proc_eq}
\end{align}
The two processes in (\ref{proc_b}) can synchronise on $b$ and reach
the process (\ref{proc_tau}), from which the two processes can \emph{undo} the synchronisation
on $b$ by synchronising on the restricted key name $y_b$. This synchronisation will lead
to the process (\ref{proc_back}), and by garbage collecting the restricted key name $y_b$ via $\pistr$ we get to process
(\ref{proc_eq}), which is equal to process (\ref{proc_b}). \lipicsEnd 
\end{example}

\subsection{Correctness}
To prove the correctness of our encoding w.r.t.~strong bisimulation, we have to recast the classical definition of strong bisimilarity~\cite{pibook} to work on two different semantics and calculi.
\begin{definition}[CCSK-$\pi$ strong bisimilarity]\label{def:bisim}
  A relation $\rcal$ between CCSK processes and $\pi$ processes
 parametrised with a bijection $\phi$ from CCSK names and keys to $\pi$ key names is a
  \emph{CCSK-$\pi$ bisimulation} iff $(P,R)_{\phi} \in \rcal$ implies the following: 
  \begin{itemize}
  \item if $P \ftran{\Co{a}[k]} P'$ then $R \pitran{\Co{a}(x)} R'$ with $(P',R')_{\phi[\phiel{a}{k}{x}]} \in \rcal$; 
  \item if $P \ftran{a[k]} P'$ then $R \pitran{ax} R'$ with $(P',R')_{\phi[\phiel{a}{k}{x}]}\in \rcal$;
  \item if $P \ftran{\tau[k]} P'$ then $R \pitran{\tau} R'$ with $(P',R')_{\phi}\in \rcal$;  
  \item if $P \rtran{\Co{a}[k]} P'$ then $R \pitran{\Co{x}} R'$ with $(P',R')_{\phi{\setminus (a,k)}}\in \rcal$
   where $\phi(a,k)=x$;
  \item if $P \rtran{a[k]} P'$ then $R \pitran{x} R'$ with $(P',R')_{\phi{\setminus (a,k)}}\in \rcal$
   where $\phi(a,k)=x$;
  \item if $P \rtran{\tau[k]} P'$ then $R \pitran{\tau} R'$ with $(P',R')_{\phi}\in \rcal$.  
  \end{itemize}
  \Comment{\begin{itemize}
  \item if $P \ftran{\Co{a}[k]} P'$ then $R \pitran{\Co{a}(x)} R'$ with $(P',R')_{\phi[(a,k) \mapsto x]} \in \rcal$ 
  \item if $P \ftran{a[k]} P'$ then $R \pitran{ax} R'$ with $(P',R')_{\phi[(a,k) \mapsto x]} \in \rcal$
  \item if $P \ftran{\tau[k]} P'$ then $R \pitran{\tau} R'$ with $(P',R')_{\phi}\in \rcal$;  
  \item if $P \rtran{\Co{a}[k]} P'$ then $R \pitran{\Co{x}} R'$ with $(P',R')_{\phi_\irek{{\setminus (a,k)}}}\in \rcal$ where $\phi(a,k)=x$;
  \item if $P \rtran{a[k]} P'$ then $R \pitran{x} R'$ with $(P',R')_{\phi_{\setminus (a,k)}}\in \rcal$ where $\phi(a,k)=x$;
  \item if $P \rtran{\tau[k]} P'$ then $R \pitran{\tau} R'$ with $(P',R')_{\phi}\in \rcal$;  
  \end{itemize}}
  and vice versa.
  We denote with $\bisim$, dubbed \emph{CCSK-$\pi$ strong bisimilarity}, the largest CCSK-$\pi$ strong bisimulation.
\end{definition}

As mentioned before, $\phi$ only tracks ``free names'', which in the case
of CCSK means keys with just one occurrence inside the term, attached
to a non-$\tau$ prefix. This is needed since such occurrences
correspond to synchronisations with the context, which contains the
other occurrence of the name. Hence, $\phi$ is needed to remember how
we translated keys whose corresponding name is known by the context,
to ensure that the CCSK process and its $\pi$-calculus encoding remain
aligned on which name represents which key. In the case of bound keys,
the
corresponding key name is bound in the $\pi$-calculus and can be
$\alpha$-converted. There is no need to track the correspondence with
such a name, since the name will never occur in the context.

We now define a function $\varphi$ that, given a CCSK process $P$, builds the concrete relation used in its encoding. Function $\varphi$ is actually a bijection, thanks to \cite[Prop.\ 3.5]{LaneseP21}.
\begin{definition}
Let $\varphi$ be a bijection calculated inductively on the CCSK syntax as follows:
\begin{align*}
&\varphi(P) = \emptyset \text{ if } \std{P}  
&& \varphi(\tau\key{k}.P) =\varphi(P) \\
&\varphi(a\key{k}.P) = \varphi(\Co{a}\key{k}.P) = \{\phiel{a}{k}{x_k}\} \cup \varphi(P)
&& \varphi(P+Q) =\varphi(P) \cup \varphi(Q) \\
&\varphi(P \Par Q) =\varphi(P) \cup \varphi(Q) \setminus (\varphi(P) \cap \varphi(Q)) 
&& \varphi(\nu a.P) =\varphi( P) 
\end{align*}
\end{definition}

\Comment{
We now present a few lemmas needed to prove the correctness of the
encoding.

\begin{lemma}\label{lemma:ancestor}
  If $\enc{P,\nil} \pitran{\mu} \enc{P',\nil}$ then
  $\enc{P,R} \pitran{\mu} \enc{P',R}$.
\end{lemma}
\begin{proof}
  By structural induction on $P$, with a case analysis on the last
  applied rule.
\end{proof}

\begin{lemma}\label{lemma:str}
Let $P$, $P'$ be CCSK processes and $R$, $R'$ be $\pi$-processes. If $P \str P' \bisim R' \pistr R$ then $P \bisim R$.
\end{lemma}
\begin{proof}
  Follows since both $\str$ and $\pistr$ preserve and reflect transitions.
\end{proof}
}

\begin{restatable}{theorem}{strongencbis}\label{th:strong_enc}
Let $P$ be a CCSK process with parallel composition only at the top level.
Then $(P,\enc{P,\nil,\varphi(P)}) \in {\bisim}$.
\end{restatable}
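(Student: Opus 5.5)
We will exhibit a CCSK-$\pi$ bisimulation containing the pair. The natural candidate is
\[
\rcal = \{ (P, \nu K.\enc{P,\nil,\varphi(P)})_{\varphi(P)} \mid P \text{ reachable, with only top-level parallel composition} \}.
\]
Here $K$ is the set of key names of bound keys. We close $\rcal$ under $\str$ on the CCSK side and $\pistr$ (including garbage collection of restricted key names, as in the example with $y_b$) on the $\pi$ side. We also close it under renaming of key names that keeps $\phi$ consistent, since fresh $\pi$ names $x$ chosen in a transition must be allowed to become $x_k$.

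Since parallel composition is only at top level, $P \str \nu \tilde a.(S_1 \Par \cdots \Par S_n)$, where each $S_j$ is a sequential term: a (possibly keyed) guarded choice with no parallel inside. The encoding is then $\nu\tilde a.\nu K.\prod_j \enc{S_j,\nil}$. So the proof reduces to a precise description of the transitions of $\enc{S,\nil}$ for a single sequential $S$.

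\textbf{Key lemma (sequential correspondence).} I would prove by induction on the number of keyed prefixes in $S$ a statement about the encoding with an arbitrary backtrack process $R$. Let $S'$ be the maximal keyed path of $S$, and let $T$ be the ``current'' standard choice at its end.
\begin{enumerate}
\item If $S$ is standard, then $\enc{S,R}$ unfolds to $R + \Sigma_i \encpref{\mu_i.P_i,X}$. Its transitions are those of $R$ plus, for each branch $\mu_i$, a transition with label $\Co{a}(y)$, $ay$ or $\tau$ to $\enc{P_i,\gamma.\enc{S,R}}$ (with $\gamma$ the matching key-name prefix). The latter term equals $\enc{\mu_i[k].P_i+\ldots,R}$ under $\phi[\phiel{a}{k}{y}]$, because $\tostd{P_i}=P_i$.
\item If $S = \alpha[k].P + \Sigma$, then $\enc{S,R} = \enc{P,\gamma_k.\enc{\tostd{S},R}}$. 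By the induction hypothesis, its transitions are exactly those of the innermost unkeyed choice: forward actions of $T$, and the single backward prefix $\gamma$ of the last executed action.
\item Firing that backward prefix $x_h$ (resp.\ $\Co{x_h}$, $\tau$) yields exactly $\enc{S'',R}$, where $S''$ is $S$ with the last key removed. This holds because the backtrack process was built from $\tostd{\cdot}$ of a term whose inner part was already standard.
\end{enumerate}
This matches CCSK: under rules (\textsc{Act2}) and (\textsc{Sum}), a sequential process can only go forward in its innermost standard choice or undo its last keyed prefix (rule (\textsc{Act1}$^\bullet$) requires a standard continuation).

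Given the key lemma, I would do the case analysis on CCSK transitions and their converse. Non-$\tau$ visible moves lift through top-level parallel via (\textsc{Par-L}) and (\textsc{Par}). The CCSK side condition $k\notin\keys{Q}$ corresponds to freshness of the $\pi$ name; bound output $\Co{a}(x)$ extends $\phi$ exactly as in Definition~\ref{def:bisim}.

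Forward synchronisation (\textsc{Syn}) corresponds to (\textsc{Close-L}). This creates $\nu y$, which is exactly a new bound key added to $K$, and $\varphi$ drops the pair as prescribed by its clause for $\Par$.

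Backward synchronisation (\textsc{Syn}$^\bullet$) corresponds to a $\tau$ between $\Co{y}$ and $y$, the last backward prefixes of two components. Afterwards, $\nu y$ is garbage collected by $\pistr$. Backward $\tau[k]$ prefixes map to internal $\tau$ steps.

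For the converse direction, the key lemma states that the $\pi$ transitions are \emph{only} these. I must also argue that no spurious $\pi$ communication arises. Channel names and key names are disjoint, so channel synchronisations mirror CCSK synchronisations. A key-name synchronisation needs both endpoints to be simultaneously enabled; by the key lemma this happens only when both occurrences of a bound key are the last executed actions in their components, which is exactly the premise of (\textsc{Syn}$^\bullet$).

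\textbf{Main obstacle.} I expect the main difficulty to be stating the key lemma with a general backtrack parameter $R$ so that the induction goes through. The delicate points are how $\phi$ and the freshly received name $y$ get identified with $x_k$ (handled by $\alpha$-conversion and the renaming closure), and the interplay of recursion unfolding with structural congruence. I would first try to formulate it as: the transitions of $\enc{S,R}$ are those of $\enc{T,R_S}$ for a suitable computed backtrack process $R_S$, where $R_S \pitran{\gamma}$ is determined by the last key of $S$.
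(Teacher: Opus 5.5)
Your proposal is correct and follows essentially the paper's route. The paper also relates $P$ to its encoding and proceeds by structural induction, with cases for top-level $\Par$ (forward \textsc{Syn} via \textsc{Close-L}, backward \textsc{Syn}$^\bullet$ via a key-name communication plus garbage collection), restriction, and a choice whose branches are either all standard or include one keyed branch; it pushes transitions through an arbitrary backtrack process with Lemma~\ref{lemma:ancestor}, which is what your sequential key lemma packages together with the paper's ``by inspection'' argument for the converse. Your outer $\nu K$ and key-name renaming closure only make explicit what the paper leaves implicit (it writes $\nu b.\enc{P_1',\nil}\Par\enc{P_2',\nil} = \enc{P_1'\Par P_2',\nil}$), but CCSK's $\equiv$ has no scope-extrusion law, so you should replace the flattening to $\nu\tilde a.(S_1\Par\cdots\Par S_n)$ by a direct induction over the top-level $\Par$/$\nu$ structure, as the paper does.
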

\Comment{
\begin{proof}
  It is sufficient to show that the relation
  \begin{align*}
    \rcal=&\Big\{\pair{P}{\enc{P,\nil,\varphi(P)}} \, | \, P \textrm{ is a CCSK process} \big\} 
  \end{align*}
  is a
  \emph{CCSK-$\pi$ bisimulation}. The proof
is by structural induction on $P$, and for each $P$ by coinduction. The full proof can be found in Appendix~\ref{sec:app}.
  \end{proof}
}
  
Strong bisimilarity, which we use as the correctness criterion in Theorem~\ref{th:strong_enc}, 
is stronger than both interaction sensitivity, used in the definition of
basic encoding (cf.~Definition~\ref{def:basic}), and success sensitivity (cf.~Definition~\ref{def:success}). The reason why the
result above is not in contrast with the separation result (Theorem~\ref{thm:basic succ sens CCSK pi}) 
in Section~\ref{sec:separation} is that the encoding is not compositional.
%
Indeed, the encoding of $P$ in a process such as $a[k_1].P$ depends 
on its context $a[k_1].\bullet$. More precisely, by looking at the encoding in Figure~\ref{fig:enc_strong}, 
one can see that $P$ is encoded in the case of a choice as a recursive process containing itself a choice, whose first alternative is the backtrack process $R$, 
needed to implement the backward move undoing $a[k_1]$. Reversing to $a.P$ requires knowing $a$ (and how it has been encoded), and cannot be deduced by looking at $P$ alone. This is indeed why $R$ is needed
as a parameter of the encoding.
%
%
When $R$ is not $\nil$, the encoding is not interaction
sensitive either, since the encoding of $\nil$ is $R$. This does not spoil our bisimilarity result, since at the top level $R$ is always $\nil$.

The encoding presented in Figure~\ref{fig:enc_strong} only considers
 top-level parallel composition (cf.~Definition~\ref{def:CCSK top-level}).
 %
%
If instead, in $a[k_1].P$, process $P$ is composed of multiple parallel processes, they need
to coordinate to decide whether the undo of $a$ is enabled.  Such
an interaction is at least $n$-ary, where $n$ is the number of
parallel components, but this is done in a single CCSK step, thanks to predicate $\std{Q}$ in rule {\scriptsize(\textsc{Act1}$^{\bullet}$)} of Figure~\ref{fig:ccsk_sem_full}. 
Hence, no encoding correct 
w.r.t. strong bisimilarity
preserving the degree of parallelism can exist.

In the literature (e.g., in~\cite{Palamidessi03}), the concept of preserving the degree of parallelism is frequently formalised by stating that the encoding is homomorphic w.r.t.\ parallel composition, namely that $\enc{P_1 \Par P_2}=\enc{P_1} \Par \enc{P_2}$. Indeed, the encoding above actually satisfies this property.

However, such a property can be useful when reasoning about lower-level parallel composition, especially when combined with a form of compositionality for other operators, prefix in particular.
One could require that for each CCSK prefix $\rho$ there exists a $\pi$-calculus context $C_\rho$ such that the encoding of $\rho.P$ is $C_\rho[\enc{P}]$. However, this would lead to a compositional encoding, but as we have shown in Section~\ref{sec:separation}  this is not possible (paired with interaction sensitivity; see Theorem~\ref{thm:basic succ sens CCSK pi}).

Hence we require a weaker form of compositionality, allowing the context to influence the encoding, but preserving the number of parallel components, and requiring parallel components to be encoded independently.


\begin{definition}\label{def:nary par pres enc}
An encoding is \emph{parallel preserving} if for every context $C[\hol]$ and every $n \geq 2$ there is a context $G[\hol]$, such that for all $P_1,\ldots,P_n$ with no top-level parallel composition there are $R_1,\ldots,R_n$ such that $\enc{C[P_1 \Par \cdots \Par P_n]} = G[R_1 \Par \cdots \Par R_n]$.  Also, $R_i$ only depends on $C,n,i$ and $P_i$ ($i = 1,\ldots,n$).
\end{definition}

The last condition in the definition means that the encodings $R_i$ of different parallel components are independent.

\begin{restatable}{proposition}{encisparallelpreserving}\label{prop:encpar}
The encoding in Figure~\ref{fig:enc_strong} is parallel preserving.
\end{restatable}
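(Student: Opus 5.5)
The plan is to analyse where the hole of a context $C[\hol]$ can sit, since the encoding of Figure~\ref{fig:enc_strong} is defined only on processes with top-level parallel composition. For $C[P_1 \Par \cdots \Par P_n]$ to be in the domain, the hole in $C$ must itself be at top level: $C$ is built only from parallel composition and restriction above the hole (together with arbitrary processes in parallel). If the hole were guarded, the process would not be in the domain of the encoding, and the condition holds vacuously. Equivalently, the encoding is considered only on its domain, and every such $C$ has the form $\nu \tilde a.(Q \Par \hol)$ up to structural congruence, with the restrictions interleaved with parallel compositions.

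Next I would proceed by induction on the structure of such a top-level context $C$ and build $G$. For $C = \hol$, take $G = \nu K.\hol$, where $K$ is the set of bound key names, and $R_i = \enc{P_i,\nil,\phi}$. This works because $\enc{P_1 \Par \cdots \Par P_n,\nil} = \enc{P_1,\nil} \Par \cdots \Par \enc{P_n,\nil}$ by repeated use of the homomorphic parallel clause. For $C = C' \Par Q$, take $G = G'' \Par \enc{Q,\nil}$. For $C = \nu a.C'$, take $G = \nu a.G''$. Here $G''$ is the context obtained for $C'$ with the outer $\nu K$ stripped, since the restriction and parallel clauses are homomorphic and pass $R=\nil$ unchanged. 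The outer $\nu K$ is then placed around the whole, giving $G$.

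The main obstacle is the dependence on the global data $K$ and $\phi = \varphi(C[P_1\Par\cdots\Par P_n])$, together with the requirement that $R_i$ depend only on $C,n,i,P_i$. I would address this by fixing the naming convention: key $k$ is encoded as $x_k$, and $\varphi$ always assigns $x_k$ to $k$. Then $\enc{P_i,\nil,\phi}$ uses only the entries of $\phi$ for keys occurring in $P_i$, and these are determined by $P_i$ alone. A bound key of $P_i$ is either bound within $P_i$ or shared with another component; in both cases its name is $x_k$. Whether it is free or bound only affects membership in $K$, and $K$ is absorbed into $G$.

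Hence the encoding $R_i = \enc{P_i,\nil,\{\phiel{a}{k}{x_k} \mid a[k] \text{ or } \Co a[k] \text{ in } P_i\}}$ depends only on $P_i$. Strictly, $G$ depends on the $P_i$ through $K$, whereas the definition requires $G$ to depend only on $C$ and $n$. I would handle this by letting $G$ restrict the (infinite or sufficiently large) set of key names not free in the whole process, or, more simply, by noting that unused restrictions are garbage by $\nu a.\nil \pistr \nil$ and scope extrusion. So $G$ is fixed up to $\pistr$, and equality is read modulo structural congruence and $\alpha$-conversion. Clarifying this identification is the step I expect to need care.
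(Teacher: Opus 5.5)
Your core argument is the paper's. The encoding of Figure~\ref{fig:enc_strong} is only defined when parallel composition is at top level, so the hole of $C$ can only occur under restrictions and parallel compositions (otherwise the condition is vacuous). The homomorphic clauses for $\Par$ and $\nu$ pass the backtrack process $\nil$ through unchanged, so they give $G$ as the image of $C$ with $R_i=\enc{P_i,\nil}$. Your induction over top-level contexts is a bit more careful than the paper's claim that $C$ is $\nu\tilde a.(\hol\Par Q)$. CCSK's structural congruence here has no scope extrusion, so restrictions can genuinely be interleaved with parallel compositions. Your remark that $\varphi$ always names key $k$ as $x_k$, so that $R_i$ depends on $P_i$ alone, is the convention the paper uses implicitly.

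The step that fails is your repair of the dependence of $G$ on the outer $\nu K$ of Definition~\ref{def:encoding]}. This dependence cannot be removed, not even modulo $\pistr$ and $\alpha$-conversion. Take $C=\hol$, $n=2$ and $P_1=a[k].\nil$.
\begin{itemize}
\item If $P_2=\Co a[k].\nil$ (reachable via a synchronisation), $k$ is bound. The encoding is $\nu x_k.(R_1\Par R_2)$, where the name $R_1$ uses for $k$ (your $x_k$) is free in $R_1$.
\item If $P_2=b.\nil$ (also reachable), $k$ is free. The encoding is $R_1\Par R'_2$, with $x_k$ free in the whole term.
\end{itemize}
$R_1$ must be the same in both cases, and free names are preserved by $\pistr$ and $\alpha$-conversion. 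So $G$ would have to bind $x_k$ around the hole in the first case but not in the second. Garbage collection of restrictions only lets you over-restrict names that do not occur at all. Restricting ``the key names not free in the whole process'' makes $G$ depend on the $P_i$ again.

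The paper's two-line proof simply disregards the outer restriction. In effect it proves parallel preservation for the inner encoding $\enc{\cdot,\nil,\phi}$ under a global naming of keys. You should make that reading explicit, or weaken the definition so that $G$ may additionally restrict a set of key names determined by the components. Do not claim that $G$ is fixed up to $\pistr$.
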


The encoding presented in Figure~\ref{fig:enc_strong} is
parallel preserving and 
preserves strong bisimilarity; however, it is defined on the fragment of CCSK with
only top-level  parallel composition.



\begin{restatable}{theorem}{noParPresCCSKtopipressb}\label{thm:par pres strong bis}
No parallel-preserving encoding of CCSK into the $\pi$-calculus preserves strong bisimilarity.
\end{restatable}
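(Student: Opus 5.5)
We argue by contradiction: assume a parallel-preserving encoding $\enc{\cdot}$ with $P \bisim \enc{P}$ for every CCSK process $P$ (with a suitable $\phi$), and pick a process whose single backward step requires a joint check of two parallel components. The natural candidate is
\[
P \;=\; a[k].(b[h].\nil \Par c.\nil), \qquad C[\hol] = a[k].\hol,\; n=2,\; P_1 = b[h].\nil,\; P_2 = c.\nil .
\]
By Definition~\ref{def:nary par pres enc} there are a context $G$ and processes $R_1,R_2$ with $\enc{P} = G[R_1 \Par R_2]$. Here $R_1$ depends only on $C,2,1$ and $P_1$, and $R_2$ depends only on $C,2,2$ and $P_2$. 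We compare this with the sibling process $P' = a[k].(b[h].\nil \Par c[l].\nil)$, whose encoding is $G[R_1 \Par R_2']$ with the same $G$ and $R_1$ (since $C$, $n$ and $P_1$ are unchanged), and with $a[k].(b.\nil \Par c.\nil) = C[b.\nil \Par c.\nil]$, encoded as $G[R_1'' \Par R_2]$.

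First, we extract the behaviour CCSK forces on these encodings. In $P$, the backward move $\rev{a}[k]$ is disabled, because the continuation is not standard. After $P \rtran{b[h]} a[k].(b.\nil \Par c.\nil)$, the move $\rev{a}[k]$ becomes enabled. Strong bisimilarity gives a one-to-one step correspondence. So $\enc{P}$ has a strong transition $\pitran{x_h}$ (by Definition~\ref{def:bisim}), and the resulting $\pi$ process must immediately offer $x_k$, whereas $\enc{P}$ itself offers no $x_k$ transition. Symmetrically, in $P'$ neither $b$ nor $c$ alone can be undone to enable $a$; both must be undone.

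Second, we locate each visible transition in the decomposition $G[R_1\Par R_2]$. A strong $\pi$ transition of $G[R_1\Par R_2]$ comes from $G$ alone, from one $R_i$ (possibly through $G$'s restrictions and prefixes, but $G$ must be able to pass it), or from a communication. Since the $x_h$ move is a single visible step, and the key name $x_h$ corresponds to $P_1$'s key, we aim to show that it is performed by the component $R_1$ (or by $G$, a case we exclude). The $x_k$ capability that appears afterwards must be produced by this single step. If it is produced inside $R_1$, then $R_1$ has "decided" that $a$ may be undone without any interaction with $R_2$.

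Third, we transplant this into $P'$. Its encoding contains the same $R_1$ alongside $R_2'$, and $R_2'$ encodes the still-executed $c[l]$. By the same reasoning, undoing $b$ there gives the same step of $R_1$, which yields an $x_k$ offer in a process bisimilar to $a[k].(b.\nil \Par c[l].\nil)$. That process cannot undo $a$, which is a contradiction. If instead the $x_k$ offer is enabled only through $G$ or through communication with $R_2$, then a strong single-step derivation with label $x_h$ cannot involve a synchronisation with $R_2$ (that would be a $\tau$). So $R_2$ is unchanged, and the offer is determined by $G$ plus $R_1$'s new state, which again is independent of $P_2$'s state.

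The main obstacle is this case analysis on where the $x_h$ and $x_k$ transitions originate inside $G[\cdot]$. $G$ may contain restrictions, recursion and choices, so the claim that the new $x_k$ offer depends only on $G$ and $R_1$, and not on $R_2$ versus $R_2'$, needs care. We would handle it by the standard lemma that a strong visible transition of $G[R_1 \Par R_2]$ that does not involve $R_2$ is matched, with the same label, by $G[R_1 \Par R_2']$, and that enabledness of a later $x_k$ barb not involving $R_2$ transfers likewise. We would also have to rule out the possibility that the $x_k$ barb is contributed by $R_2$. For this we note that in $P$ itself $R_2$ is in the same state yet no $x_k$ is offered, while in $a[k].(b.\nil\Par c.\nil)$ the $x_k$ barb is offered with $R_2$ unchanged. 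Freshness of key names under $\phi$, and $\alpha$-conversion of bound keys, also need bookkeeping.
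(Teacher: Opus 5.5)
Your argument can be completed, but it takes a more dynamic route than the paper. The paper never follows a transition. It compares a single barb, $\phi(a,k)$, across three encodings that share the same $G$:
\begin{itemize}
\item $\enc{a[k].(b\Par c)} = G[R_b\Par R_c]$, which must offer the barb because the continuation is standard;
\item $G[R'_b\Par R_c]$ and $G[R_b\Par R'_c]$, the encodings of $a[k].(b[m]\Par c)$ and $a[k].(b\Par c[n])$, which must not offer it.
\end{itemize}
A static lemma on barbs under $\pi$-contexts does the rest. If $C[R]$ lacks a barb that $C[S]$ has, then $S$ has it. If $C[R]$ and $S$ both have it, so does $C[S]$. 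This forces the barb to come from $R_b$ alone or from $R_c$ alone, and either option transfers it to one of the two siblings.

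You instead reach the standard configuration through the step $\pitran{x_h}$ of $\enc{a[k].(b[h]\Par c)}$. The resulting state is then not of the form $G[\cdots]$. You must therefore locate that step inside $G[R_1\Par R_2]$ and track how $G$ evolves (unfolded recursion leaves guarded copies of $R_1\Par R_2$). You must then transfer a \emph{subsequent} $x_k$-barb to $G[R_1\Par R_2']$. That two-stage transfer lemma is more delicate than the paper's single-barb lemma. It also uses more of the hypothesis, namely bisimilarity of derivatives. The paper's static argument only uses that enabledness of backward moves is reflected, so it yields the stronger statement that no parallel-preserving encoding is even strongly backward reflecting.

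The process $a[k].(b.\nil\Par c.\nil) = G[R_1''\Par R_2]$ that you already wrote down is exactly the paper's pivot. Comparing its $x_k$-barb with those of $G[R_1\Par R_2]$ and of $G[R_1''\Par R_2']$ (the encoding of $a[k].(b.\nil\Par c[l].\nil)$) finishes the proof without following any transition.

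You need this third process anyway. You only assert that the $x_h$ step is performed by $R_1$ rather than by $G$ alone or by $R_2$, and this matters. Your argument that $R_2$ cannot supply the later $x_k$-barb presupposes that the hole is already active in $\enc{a[k].(b[h]\Par c)}$. That fails if, say, the hole sits under an $x_h$-guard in $G$. The exclusion does hold: otherwise $G[R_1''\Par R_2]$ would also offer $x_h$, although $a[k].(b.\nil\Par c.\nil)$ has no key $h$.

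Both this step and your final contradiction need key names to be assigned uniformly across different processes. The final contradiction reads the $x_k$ offered after the transplant as undoing $a[k]$ rather than $c[l]$, which is only justified under such a uniform assignment. The paper makes this explicit by fixing a single function $\phi(\mu,k)$ in its definition of a strongly backward reflecting encoding.
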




\section{Encoding lower-level parallel composition}\label{sec:weak_encoding}

The encoding in \Cref{sec:encoding} is only defined for CCSK processes with top-level parallel composition. We
have shown that no parallel-preserving encoding of CCSK (with parallel composition at lower levels as well as at top level) 
into the $\pi$-calculus can be correct with respect to strong bisimilarity. However, in this section, we extend our encoding to cover the whole of CCSK, and show 
that the resulting encoding is parallel preserving. The price to pay is that we can only state its correctness under a weaker notion,
namely \emph{weak mutual simulation}~\cite{Milner71}.
 See App.~\ref{app:weak} for omitted proofs for this section.
 
The main challenge in extending the encoding is in the implementation of backward moves, since a backward move is enabled only if its continuation is standard (cf.~the hypothesis of rule {\scriptsize(\textsc{Act1}$^{\bullet}$)} in Figure~\ref{fig:ccsk_sem_full}). 
In the encoding in \Cref{fig:enc_strong}, a standard process triggered the backward move via a process $R$. Now, however, all the processes in an arbitrary parallel composition need to agree on allowing a backward move: they all need to be standard.
To this end, we define the $\tree(N,R)$ function, which
given a set $N$ of names and a process $R$ generates a sequential
$\pi$ process that implements a ``tree'' of all possible sequences of
inputs on names in $N$ leading to $R$.
This function builds
a process which waits for ``rollback signals'', as outputs on names in $N$, from all the parallel subprocesses.
When they are all received, meaning that all subprocesses are standard,
function $\tree(N,R)$ triggers rollback by executing $R$. Hence, it needs to account for all the possible interleavings of ``rollback signals''.
\begin{example}
Consider $N=\{a,b,c\}$, and some $R$.  We want to construct a $\pi$ process that requires synchronisations on names $a, b,c$, 
in any order before reaching $R$. This process is  
$a.(b.c.R + c.b.R) + b.(a.c.R + c.a.R) + c.(a.b.R + b.a.R)$; it has a tree-like structure. 
\lipicsEnd 
\end{example}
\begin{definition}[Extended encoding]\label{def:tree}
The function $\tree : 2^\mathcal{N_\pi}\times \Proc_{\pi} \rightarrow \Proc_{\pi}$ is given by: 
\[
\tree(N,R) = \left\{\begin{array}{ll}R & \text{if } |N| = 0; \\
 \Sigma_{a\in N} a.(\tree(N\setminus \{a\},R)) & \text{otherwise}. \\ 
\end{array}\right.
\]
%
%
\noindent We extend the encoding in Figure~\ref{fig:enc_strong} with the following clause:
\begin{align*}
&\enc{\prod_I P_i,R} \,=\, \nu \tilde{x}.\big(\prod_I \enc{P_i,\Co{x_i}} \Par \tree(\tilde{x},R)\big)
\end{align*}
where
$\tilde{x}=x_1,\cdots,x_n$ and $|I| = |\tilde{x}|$, 
and each $P_i$  does not contain any top-level parallel composition operator.
\end{definition}

Process $\tree(\tilde{x},P)$
waits for rollback signals $\Co{x_i}$ from every parallel component and, when all of them have been received, it triggers the rollback process $R$. 
If $R$ is $\nil$ the system terminates, while in CCSK this means that the process is standard and so backward moves are disabled. To avoid this issue, top-level parallel composition is encoded 
with the rule in Figure~\ref{fig:enc_strong}.


\begin{remark}
The encoding above allows for rollback messages $\Co{x_i}$ to be
received in any order. Picking a specific order would not change the
correctness result, but will make the encoding of structural
congruent processes different, making the proof more complex. This
would be however better in a practical setting, since it reduces the
size of the resulting process (from a factorial of the number of
parallel components to linear).
\end{remark}

\Comment{\begin{remark}\label{rem:binenc}
The encoding, as extended in Definition~\ref{def:tree}, translates all the components of a parallel
composition at once. One could also translate individual parallel operators, thus
obtaining a binary tree of processes receiving the synchronisations. Again,
the correctness result would be the same, but structural congruent
processes would have different translations. Also, a higher number of
synchronisations would be needed, making the encoding less efficient.
\cm{we could remove this remark}
\end{remark}
}

\begin{example}
Let us consider the encoding of the CCSK process $a.(b.\nil\Par c.\nil)$.
  \begin{align*}
&    \enc{a.(b.\nil \Par c.\nil),\nil} = \rec{X_a}  ( \nil + \encpref{a.(b.\nil \Par c.\nil), X_a})\\
&= \rec {X_a} \big(\nil + a(y_a).(\enc{b.\nil\Par c.\nil,y_a.X_a}) \big)\\
&  =\rec {X_a} \big(\nil + a(y_a).(\nu x_1,x_2. (\enc{b.\nil,\Co{x_1}}    \Par\enc{c.\nil,\Co{x_2}}  \Par  \tree({\{x_1,x_2\}, y_a.X_a})\big)  \\
&  =\rec {X_a} \big(\nil + a(y_a).(\nu x_1,x_2. (\enc{b.\nil,\Co{x_1}}    \Par\enc{c.\nil,\Co{x_2}}  \Par x_1.x_2.y_a.X_a + x_2.x_1.y_a.X_a)\big)  \\
&=  \rec {X_a} \big( a(y_a).(\nu x_1,x_2. (\rec{X_b}(\Co{x_1} + \encpref{b.\nil,X_b})  \Par \\
 &  \qquad\qquad \rec{X_c}(\Co{x_2} + \encpref{c.\nil,X_c})  \Par x_1.x_2.y_a.X_a + x_2.x_1.y_a.X_a)\big)\\
        & = \rec {X_a} \big( a(y_a).(\nu x_1,x_2. (\rec{X_b}(\Co{x_1} + b(y_1).y_1.X_b)  \Par  \\
    & \qquad\qquad  \rec{X_c}(\Co{x_2} + c(y_2).y_2.X_c)  \Par x_1.x_2.y_a.X_a + x_2.x_1.y_a.X_a)\big) \tag*{\lipicsEnd}
    \end{align*} 
\end{example}

\begin{example}
Consider the encoding of $P = a[h].(b[k].\nil\Par c[l].\nil)$ with a bijection $\phi = \{\phiel{a}{h}{x_h}, \phiel{b}{k}{x_k}, \phiel{c}{l}{x_l}\}$.
We have
\begin{eqnarray*}
    \enc{P,\phi} &=&   \enc{(b[k].\nil \Par c[l].\nil), x_h.\enc{a.\tostd{b[k].\nil \Par c[l].\nil}, \nil}} \\
       &=&
      \enc{(b[k].\nil \Par c[l].\nil), x_h.\enc{a.(b.\nil \Par c.\nil), \nil}}   \\
     &=& \nu x_1,x_2.\big( \enc{b[k].\nil, \overline{x_1}} \Par \enc{c[l].\nil, \overline{x_2}} \Par
      \tree( \{x_1,x_2\},x_h.\enc{a.(b.\nil \Par c.\nil)})) \big)
\end{eqnarray*}
We obtain $\tree(\{x_1,x_2\},x_h.\enc{a.(b.\nil \Par c.\nil)})= x_1.x_2.x_h.\enc{a.(b.\nil \Par c.\nil)} + x_2.x_1.x_h.\enc{a.(b.\nil \Par c.\nil)}$, and abbreviating the tree process as $\tree(\cdots)$ below, we continue working out $\enc{P,\phi}$: 
\begin{eqnarray*}     
          &=& \nu x_1,x_2.\big( \enc{\nil, x_k.\enc{b.\nil, \overline{x_1}}} \Par  \enc{\nil, x_l.\enc{c.\nil,  \overline{x_2}}}  \Par \tree(\cdots)\big)\\
        &=& \nu x_1,x_2.\big(  x_k.\enc{b.\nil, \overline{x_1}} \Par x_l.\enc{c.\nil,  \overline{x_2}}  \Par\tree(\cdots) \big)\\
        &=& \nu x_1,x_2.\big(  x_k.\rec{X_b}( \overline{x_1} + b(y_b).\enc{\nil,y_b.X_b})        \Par 
        x_l.\rec{X_c}( \overline{x_2} +  c(y_c).\enc{\nil,y_c.X_c}) \Par \tree(\cdots) \big)\\
        &=& \nu x_1,x_2.\big(  x_k.\rec{X_b}( \overline{x_1} + b(y_b).y_b.X_b)        \Par 
        x_l.\rec{X_c}( \overline{x_2} + c(y_c).y_c.X_c) \Par \tree(\cdots) \big)
\end{eqnarray*}
The process $P$ can undo either $b[k]$ or $c[l]$. 
This is mimicked by its encoding
since $\enc{P,\phi}$ can undo $b$ on channel $x_k$ or can undo $c$ on channel $x_l$.
Consider
$P \rtran{b[k]} \rtran{c[l]} a[h].(b.\nil \Par c.\nil) = P_a.$
The corresponding transitions of its encoding are as follows:
\[ \enc{P,\phi} \pitran{x_k}\pitran{x_l}  \nu x_1,x_2. \big(\rec{X_b}( \overline{x_1} + b(y_b).y_b.X_b) \Par 
  \rec{X_c}( \overline{x_2} + c(y_c).y_c.X_c) \Par \tree(\cdots)\big) = R_a
\]
$P_a$ can immediately undo $a[h]$ to $a.(b.\nil \Par c.\nil)$, while the two $\mathtt{rec}$ subprocesses of $R_a$ need to first synchronise with $\tree(\cdots)$ on $x_1, x_2$ (see below), before performing $x_h$, which corresponds to $a[h]$, and returning to
$\enc{a.(b.\nil \Par c.\nil)}$. So an atomic transition in CCSK is represented by three transitions in a $\pi$ encoding.
\begin{align*}
 R_a& \pitran{\tau}  \nu x_1,x_2.(\nil \Par \rec{X_c}( \overline{x_2} + c(y_c).y_c.X_c) \Par x_2.x_h.\enc{a.(b.\nil \Par c.\nil)})\\
 &\pitran{\tau}  (\nil \Par \nil \Par x_h.\enc{a.(b.\nil \Par c.\nil)}) \pistr x_h.\enc{a.(b.\nil \Par c.\nil)} \tag*{\lipicsEnd} 
\end{align*}
\end{example}

\Comment{
\begin{example}
Consider the encoding of $P = a[h].(b[k].\nil\Par c[l].\nil)$ with $\phi = \{\phiel{a}{h}{x_h}, \phiel{b}{k}{x_k}, \phiel{c}{l}{x_l}\}$.
The tree process, abbreviated here as $\tree(\cdots)$, is 
\begin{align*}
&\tree(\{x_1,x_2\},x_h.\enc{a.(b.\nil \Par c.\nil)}) =  x_1.x_2.x_h.\enc{a.(b.\nil \Par c.\nil)} + x_2.x_1.x_h.\enc{a.(b.\nil \Par c.\nil).}
\end{align*}
\begin{eqnarray*}
    \enc{P,\phi} &=&   \enc{(b[k].\nil \Par c[l].\nil), x_h.\enc{a.\tostd{b[k].\nil \Par c[l].\nil}, \nil}}   \\
     &=& \enc{(b[k].\nil \Par c[l].\nil), x_h.\enc{a.(b.\nil \Par c.\nil), \nil}}   \\
     &=& \nu x_1,x_2.\big( \enc{b[k].\nil, \overline{x_1}} \Par \enc{c[l].\nil, \overline{x_2}} \Par \tree(\cdots)) \big)\\
          &=& \nu x_1,x_2.\big( \enc{\nil, x_k.\enc{b.\nil, \overline{x_1}}} \Par  \enc{\nil, x_l.\enc{c.\nil,  \overline{x_2}}}  \Par \tree(\cdots)\big)\\
        &=& \nu x_1,x_2.\big(  x_k.\enc{b.\nil, \overline{x_1}} \Par x_l.\enc{c.\nil,  \overline{x_2}}  \Par\tree(\cdots) \big)\\
        &=& \nu x_1,x_2.\big(  x_k.\rec{X_b}( \overline{x_1} + b(y_b).\enc{\nil,y_b.X_b})        \Par  \\                    
        && \qquad \qquad
        x_l.\rec{X_c}( \overline{x_2} +  c(y_c).\enc{\nil,y_c.X_c}) \Par \tree(\cdots) \big)\\
        &=& \nu x_1,x_2.\big(  x_k.\rec{X_b}( \overline{x_1} + b(y_b).y_b.X_b)        \Par  \\                    
        && \qquad \qquad
        x_l.\rec{X_c}( \overline{x_2} + c(y_c).y_c.X_c) \Par \tree(\cdots) \big)\lipicsEnd
\end{eqnarray*}
The process $P$ can undo either $b[k]$ or $c[l]$. 
This is mimicked by its encoding
since $\enc{P,\phi}$ can undo $b$ on channel $x_k$ or can undo $c$ on channel $x_l$.
Consider
$P \rtran{b[k]} \rtran{c[l]} a[h].(b.\nil \Par c.\nil) = P_a.$
The corresponding transitions of its encoding are as follows:
\[ \enc{P,\phi} \pitran{x_k}\pitran{x_l}  \nu x_1,x_2. \big(\rec{X_b}( \overline{x_1} + b(y_b).y_b.X_b) \Par 
  \rec{X_c}( \overline{x_2} + c(y_c).y_c.X_c) \Par \tree(\cdots)\big) = R_a
\]
The process $P_a$ can immediately undo $a[h]$, while the two parallel subprocesses of $R_a$ need to first synchronise with 
the process $\tree(\cdots)$:
\begin{align*}
 R_a& \pitran{\tau}  \nu x_1,x_2.(\nil \Par \rec{X_c}( \overline{x_2} + c(y_c).y_c.X_c) \Par x_2.x_h.\enc{a.(b.\nil \Par c.\nil)})\\
 &\pitran{\tau}  (\nil \Par \nil \Par x_h.\enc{a.(b.\nil \Par c.\nil)}) \pistr x_h.\enc{a.(b.\nil \Par c.\nil)}
\end{align*}
before performing $x_h$, which corresponds to $a[h]$. So an atomic transition in CCSK is represented 
by three transitions in a $\pi$ encoding.
\Comment{ 
Moreover, while one subcomponent synchronises with $\tree(\cdots)$ the other one could go forward again. 
For example, in the reduction above the second parallel process  could choose to do $c(y_c)$ instead of synchronising on $x_2$, 
and continue executing forward. This could disable the left process temporarily. 
The right process can undo 
its forward computation with $y_c$ and reach again  the recursion variable $X_c$, where it can choose to synchronise on 
$x_2$ instead of going forwards with $c(y_c)$. 
This is possible, since the choice is always present in the state represented 
by $X_c$.}
\lipicsEnd 
\end{example}
}

\subsection{Correctness}

In this subsection we establish the correctness of the weak encoding 
$\enc{-}$ from CCSK into the $\pi$-calculus. The result is stated
as a (weak) mutual simulation property~\cite{Milner71}.

\begin{definition}[CCSK-$\pi$ mutual simulation]\label{def:weak_eq}
A pair of relations $(\rcal_1,\rcal_2)$ between CCSK processes and $\pi$ processes indexed by  a bijection $\phi$ 
from CCSK names and keys to $\pi$ key names is a \emph{CCSK-$\pi$ mutual simulation} iff $(P,R)_{\phi} \in \rcal_1$ implies
  \begin{enumerate}
  	\item if $P \ftran{\Co{a}[k]} P'$ then $R \pitran{\Co{a}(x)} R'$ with $(P',R')_{\phi[(a,k) \mapsto x]} \in \rcal_1$; 
  	\item if $P \ftran{a[k]} P'$ then $R \pitran{ax} R'$ with $(P',R')_{\phi[(a,k) \mapsto x]} \in \rcal_1$;
  	\item if $P \ftran{\tau[k]} P'$ then $R \pitran{\tau}  R'$ with $(P',R')_{\phi}\in \rcal_1$; 
  	\item if $P \rtran{\Co{a}[k]} P'$ then $R \wpitran{\Co{x}} R'$ with $(P',R')_{\phi_{\setminus (a,k)}}\in \rcal_1$ where $\phi(a,k)=x$;
  	\item if $P \rtran{a[k]} P'$ then $R \wpitran{x}  R'$ with $(P',R')_{\phi_{\setminus (a,k)}}\in \rcal_1$ where $\phi(a,k)=x$;
  	\item if $P \rtran{\tau[k]} P'$ then $R \wpitran{\tau} R'$ with $(P',R')_{\phi}\in \rcal_1$;  
  \end{enumerate}
and 
$(R, P)_{\phi} \in \rcal_2$ implies 
  \begin{enumerate}
    \setcounter{enumi}{6}
  	\item if $R \pitran{\Co{a}(x)} R'$ then $P \ftran{\Co{a}[k]} P'$ with $(R',P')_{\phi[(a,k) \mapsto x]} \in \rcal_2$; 
  	\item if $R \pitran{ax} R'$ then $P \ftran{a[k]} P'$ with $(R',P')_{\phi[(a,k) \mapsto x]} \in \rcal_2$;
  	\item if $R \pitran{\Co{x}} R'$ then $P \rtran{\Co{a}[k]} P'$ with $(R',P')_{\phi_{\setminus (a,k)}}\in \rcal_2$ where $\phi(a,k)=x$;
  	\item if $R \pitran{x} R'$ then $P \rtran{a[k]} P'$ with $(R',P')_{\phi_{\setminus (a,k)}}\in \rcal_2$ where $\phi(a,k)=x$;
  	\item if $R \pitran{\tau} R'$ then one of the following happens:
        \begin{enumerate}
        \item $P \ftran{\tau[k]} P'$ with $(R',P')_{\phi}\in \rcal_2$; 
        \item $P \rtran{\tau[k]} P'$ with $(R',P')_{\phi}\in \rcal_2$;
        \item $(R', P)_{\phi}\in \rcal_2$. 
       \end{enumerate}
  \end{enumerate}
We say that a CCSK process P and a $\pi$-calculus process $R$ are mutually similar if there exists a CCSK-$\pi$ mutual simulation $(\rcal_1,\rcal_2)$ with $(P,R)_\phi \in \rcal_1$ and $(R,P)_\phi \in \rcal_2$.
\end{definition}
Note that $(\mathcal{R}_1,\mathcal{R}_2)$ is a pair of simulations: $\rcal_1$ shows how a CCSK process can be simulated by 
a $\pi$ process, and $\rcal_2$ shows how a $\pi$ process can be simulated by a CCSK process.
$\rcal_1$ relates a CCSK process to its canonical encoding, whereas $\rcal_2$ also accounts for intermediate $\tau$-reachable target states
arising from synchronisation with $\tree{(\cdots)}$ (rollback propagation).

\begin{restatable}[Correctness for arbitrary parallel composition]{theorem}{encsmutuallysimilar} \label{th:weak}
Let $P$ be a CCSK process. Then $P$ and $\enc{P, \nil,\varphi(P)}$ are CCSK-$\pi$ mutually similar.
\end{restatable}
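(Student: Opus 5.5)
We construct explicitly a pair $(\rcal_1,\rcal_2)$ and check the eleven clauses of Definition~\ref{def:weak_eq}. For $\rcal_1$ we take all pairs $(P,\enc{P,\nil,\varphi(P)})_{\varphi(P)}$ with $P$ reachable, closed under $\str$ on the CCSK side and $\pistr$ on the $\pi$ side. For $\rcal_2$ we enlarge this with the \emph{intermediate rollback states}. These are the $\pi$ processes obtained from an encoding by performing some (not necessarily all) synchronisations $\overline{x_i}$ with the $\tree(\tilde x,R)$ component of some lower-level parallel composition $\prod_I P_i$ in which the components that have signalled are standard. Formally, an intermediate state is an encoding in which a subterm $\nu\tilde x.(\prod_I\enc{P_i,\overline{x_i}}\Par\tree(\tilde x,R))$ is replaced by $\nu\tilde x.(\prod_{i\notin J}\enc{P_i,\overline{x_i}}\Par\tree(\tilde x\setminus J,R))$ with $\std{P_j}$ for $j\in J$. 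If $J=I$ the subterm becomes $R$ (up to $\pistr$), and this may cascade upwards. Such a state is related to the same CCSK process $P$. Adding it is what makes clause 11(c) ($\tau$ with no CCSK move) go through.

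The core lemma is a generalisation of Theorem~\ref{th:strong_enc} to an arbitrary backtrack parameter. For every $P$ and $R$, the forward transitions of $\enc{P,R}$ are exactly those of $\enc{P,\nil}$ lifted to carry $R$: forward prefixes are encoded by $\encpref{\cdot}$ independently of $R$. We prove this by structural induction on $P$, following the clauses of Figure~\ref{fig:enc_strong} and Definition~\ref{def:tree}. In particular, if $P\ftran{\mu[k]}P'$ then $\enc{P,R}\pitran{\lambda}\enc{P',R}$ with the matching label and updated bijection, and conversely. For $\tau$ we use rule (\textsc{Syn}) against (\textsc{Com-L})/(\textsc{Close-L}), noting that the exchanged fresh key name becomes bound, consistent with $\varphi$ removing bound keys. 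With this, clauses 1--3 and 7--8 follow in one step, and the remaining $\tau$ transitions from forward synchronisations fall into 11(a).

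Backward clauses 4--6 are the heart of the proof. Suppose $P\rtran{\alpha[k]}P'$ via (\textsc{Act1}$^\bullet$) at a prefix $\alpha[k].Q$ with $\std{Q}$. By the encoding clause for keyed prefixes, the subterm is $\enc{Q,x_k.\enc{\alpha.Q+\ldots,R}}$. Since $Q$ is standard, $\enc{Q,S}$ either offers $S$ as a summand (no parallel in $Q$), so $x_k$ is immediately available, or $Q=\prod Q_i$. In the latter case each $\enc{Q_i,\overline{x_i}}$ offers $\overline{x_i}$, and $|I|$ internal synchronisations with $\tree$ expose $S$. Hence $R\wpitran{x_k}\enc{P',\ldots}$, the target after the move being syntactically the encoding of $P'$ because $\enc{\cdot}$ reconstructs the standard choice via $\tostd{\cdot}$. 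Establishing this equality is where we would spend care. We need a lemma that \emph{replacing} the backtrack subterm in context commutes with the encoding of the surrounding keyed prefixes, again proved by induction along the path from the root to the reversed prefix. Undoing $\tau$ (both $\tau[k]$ prefixes and synchronisations on a bound key name) is analogous, giving $\wpitran{\tau}$.

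Clauses 9--11, from $\pi$ to CCSK, are the main obstacle. We must show that every transition of an intermediate state is either a forward CCSK step, a rollback signal to some $\tree$ (11(c), staying in $\rcal_2$ with the same $P$), or a genuine backward step. The latter is possible only once all $\tree$ inputs of the relevant component have fired, so that the CCSK prefix's continuation is indeed standard. Two points need checking. First, a standard component that has already signalled must not be able to proceed forward: once $\overline{x_i}$ fires, that summand is consumed, which is precisely why the forward alternatives disappear, so the CCSK side must also be unable to move that component. This seems a genuine mismatch, since in CCSK a standard child can still go forward. We would resolve it by defining intermediate states only when the commitment is consistent, i.e.\ by noting that $\rcal_2$ need only simulate and the CCSK side can still do everything: the $\pi$ side merely loses options, which simulation tolerates. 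Second, we need an invariant that, in any reachable intermediate state, a visible key action $x$ is offered only when the CCSK process can reverse the corresponding key. We would prove this invariant by induction on the structure of intermediate states, using that $\tree(N,R)$ exposes $R$ only when $N=\emptyset$.
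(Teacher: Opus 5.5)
Your plan follows the paper's proof. $\rcal_1$ pairs each $P$ with its encoding, the forward clauses are inherited from the strong-bisimulation argument, backward moves are matched weakly after the $\tree$ synchronisations, and those synchronisations are absorbed by clause 11(c) with the CCSK process left unchanged. Your $\rcal_2$, restricted to $\tree$-synchronisation derivatives, is sharper than the paper's $\{(R,P)\mid \enc{P,\nil,\varphi(P)}\wpitran{}R\}$: the paper's relation also pairs the result of a forward synchronisation with the old $P$, which breaks clauses 7--8. One fix is needed: your intermediate states must allow several partially consumed trees at once, possibly nested (for example in independent top-level components, or inside a component that has not yet signalled). With only one replaced subterm, as you define it, $\rcal_2$ is not closed under 11(c).
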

Mutual simulation implies both interaction sensitivity and success sensitivity, but the encoding is not compositional; hence this result is not in contrast with the impossibility result in \Cref{sec:separation}.
\Comment{
\begin{proof}

Given the following relations:
\[
  \rcal_1 \;=\; \big\{\, (P, \enc{P, \nil,\varphi(P)}) \mid 
        P \;\text{CCSK process}
        \,\big \}  \qquad 
        \rcal_2 \;=\; \big\{\, (R, P) \mid 
         \enc{P,\nil,\varphi(P)} \wpitran{} R \big\}
\]
we have to show that $\rcal_1$
satisfies conditions $1-6$ and $\rcal_2$ satisfies conditions $7-11$
of Definition ~\ref{def:weak_eq}. Then
$(\rcal_1,\rcal_2)$ is a CCSK-$\pi$ weak mutual simulation as required.
The proof can be found in Appendix \ref{app:weak}.
\end{proof}
Notably, even in the new encoding,  if we restrict attention to the forward direction, 
every CCSK process $P$ is strongly bisimilar to
its encoding $\enc{P}$ in the $\pi$-calculus. 
Dealing with lower-level parallel composition only affects the backward behaviour of the encoding.
}
\begin{restatable}{proposition}{weakparallelpreserving}\label{prp:enc_weak_pp}
The encoding in Figure~\ref{fig:enc_strong} extended in Definition~\ref{def:tree} is parallel preserving.
\end{restatable}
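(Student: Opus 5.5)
The plan is to prove Proposition~\ref{prp:enc_weak_pp} by induction on the structure of the context $C[\hol]$ in Definition~\ref{def:nary par pres enc}, building the required context $G[\hol]$ and the components $R_i$ explicitly. The essential point is that the extended encoding treats an $n$-ary parallel composition $\prod_I P_i$ uniformly. Each component is encoded as $\enc{P_i,\Co{x_i}}$, whose backtrack process $\Co{x_i}$ depends only on the position $i$. The top-level clause in Figure~\ref{fig:enc_strong} is the special case where each component gets backtrack process $\nil$.

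The first step is an auxiliary statement strengthened to carry the backtrack parameter. For every context $C[\hol]$, every backtrack process $R$ and every $n\ge 2$, there is a context $G_{C,R}[\hol]$ such that $\enc{C[P_1\Par\cdots\Par P_n],R}=G_{C,R}[R_1\Par\cdots\Par R_n]$ for all $P_1,\dots,P_n$ without top-level parallel composition. Here $R_i=\enc{P_i,S_i}$, and $S_i$ is $\nil$ if the hole is top level in $C$ and $\Co{x_i}$ otherwise. In both cases $R_i$ depends only on $C$, $n$, $i$ and $P_i$.

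\emph{Base case $C=\hol$.} If the hole is top level, the first clause of Figure~\ref{fig:enc_strong} gives $\prod_i\enc{P_i,\nil}$ directly, up to the standard bracketing of $n$-ary parallel. Otherwise the clause of Definition~\ref{def:tree} gives $\nu\tilde x.(\prod_i\enc{P_i,\Co{x_i}}\Par\tree(\tilde x,R))$. In this case $G=\nu\tilde x.(\hol\Par\tree(\tilde x,R))$, and $G$ depends only on $n$ and $R$.

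\emph{Inductive cases.} The clauses for restriction and for parallel composition with a sibling pass the backtrack process through unchanged or replace it by $\Co{x_j}$, so the context simply extends. The prefix and choice clauses need more care.
\begin{itemize}
\item For an unexecuted branch $\mu.C'$ inside a sum, the encoding produces $\rec X(R+\dots+\encpref{\mu.C'[\cdot],X}+\dots)$, which recurses on $C'$ with backtrack $\Co{y}.X$, $y.X$ or $\tau.X$. The induction hypothesis applies with this new backtrack.
\item For an executed branch $\alpha[k].C'$, the encoding recurses on $C'$ with backtrack $\Co{x_k}.\enc{\Co a.\tostd{C'[P_1\Par\cdots\Par P_n]}+\dots,R}$.
\end{itemize}

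The executed-prefix case is the main obstacle. The backtrack process now mentions the $P_i$ through $\tostd{\cdot}$, so it is not a fixed context parameter. My resolution is that, by the induction hypothesis, this backtrack process never enters the $R_i$: it appears only inside $\tree(\tilde x,\cdot)$, which belongs to $G$, or is placed beside the hole. Each $R_i$ still receives $\Co{x_i}$ or $\nil$.

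This forces a small clarification of Definition~\ref{def:nary par pres enc}. The context $G$ is then not literally independent of $P_1,\dots,P_n$, because the rollback target stored in $\tree$ contains $\enc{\tostd{\cdots}}$. I would argue that the definition only quantifies $G$ after $C$ and $n$ but requires independence for the $R_i$ alone. If $G$ must be uniform, I would instead realize it as a multi-hole context in which the $\tostd{P_i}$ occur. Either reading keeps the $R_i$ pairwise independent, and that independence is the substance of parallel preservation.

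Finally, I would instantiate the auxiliary statement with $R=\nil$ and the outer restriction $\nu K$ over bound key names, absorbing $\nu K$ into $G$. Bound keys in the $P_i$ are handled by $\alpha$-converting them apart, which does not alter the $R_i$ up to $\pistr$.
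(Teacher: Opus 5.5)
Your proof rests on the same observation as the paper's. Under the clause of Definition~\ref{def:tree}, each component is encoded as $\enc{P_i,\Co{x_i}}$, whose backtrack process depends only on its position, while $\nu\tilde{x}$ and $\tree(\tilde{x},R)$ go into $G$. The paper stops there: it exhibits $G[\bullet]=\nu\tilde{x}.(\bullet\Par\tree(\tilde{x},R))$ and leaves implicit both the encoding of the surrounding context and the outer $\nu K$. Your induction on contexts, with the hypothesis generalised to an arbitrary backtrack parameter, is what actually assembles the whole $G$.

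One correction to the induction. At lower level, the case $\hol\Par Q$ is not a plain inductive step with backtrack $\Co{x_j}$, because the clause of Definition~\ref{def:tree} flattens it into a single product under a single tree. The resulting $G=\nu\tilde{x}.(\hol\Par\prod_j\enc{Q_j,\Co{x_{n+j}}}\Par\tree(\tilde{x},R))$ still fits your scheme.

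More importantly, your route exposes what the paper's one-line $G$ hides. For $C=a[k].\hol$, the $R$ in that $G$ is $x_k.\enc{a.(\tostd{P_1}\Par\cdots\Par\tostd{P_n}),\nil}$. So the paper's proof tacitly relies on exactly the relaxation you flag.

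Of your two resolutions, only the second is adequate. The first is not a reading of Definition~\ref{def:nary par pres enc}, where $G$ is chosen before ``for all $P_1,\ldots,P_n$''. It is also too permissive. The proof of Lemma~\ref{lem:star star pres strong bis}, on which Theorem~\ref{thm:par pres strong bis} rests, uses one and the same $G$ for $a[k].(b\Par c)$, $a[k].(b[m]\Par c)$ and $a[k].(b\Par c[n])$. What your induction actually delivers is a $G$ that depends on the $P_i$ only through the $\tostd{P_i}$. That is enough for the impossibility argument, since those three processes have the same $\tostd{\cdot}$ images.

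Your last step also needs adjusting. $\alpha$-conversion cannot separate a key bound by a synchronisation between $P_i$ and $P_j$, since its key name occurs in both $R_i$ and $R_j$. Its restriction from $\nu K$ must therefore stay in $G$. This is a further dependence of $G$ on the $P_i$, harmless for the impossibility result because no bound keys occur there, and one the paper's proof ignores altogether.
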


\section{Related work and conclusion}\label{sec:conc}
%
\subparagraph*{Related work}
The expressive power of process calculi has been extensively
studied by means of encodings and separation results, see, e.g.,
\cite{Palamidessi03,Gor10,PT20,PetersN12,PetersY24}. Among the calculi closer to 
 ours,
we recall the following results. Boreale~\cite{Boreale98} gives a compositional encoding of the asynchronous $\pi$-calculus ($\pi_a$)  (without the match operator) into the $\pi$-calculus with internal mobility ($\pi$I)~\cite{Sangiorgi96a},
which is the target of our encodings of CCSK.
However, there exists no valid encoding of $\pi_a$ (with the match operator) into
CCS~\cite[Theorem~5.1]{Gor10}.  (A weaker version of this result was
first
shown in~\cite{HaagensenMP08}.)  
The $\pi$-calculus with implicit matching 
can be encoded in CCS$_\gamma$~\cite{vGl24}, which raises the question of whether CCSK can also be encoded in CCS$_\gamma$.

Encodings or their impossibility have been far less studied for reversible calculi.
In~\cite{LMM19} an LTS isomorphism between CCSK and RCCS (and vice versa) is presented (but it relies on encodings which are not compositional), while in~\cite{MelgrattiMP24}
a truly concurrent semantics of RCCS is given via an encoding in Petri nets. We are only aware of one encoding of a reversible
calculus into a non-reversible one~\cite{LaneseMS16}:
an encoding of a reversible higher-order asynchronous $\pi$-calculus
into its irreversible version (extended with abstractions,
applications, biadic channels and join patterns) is presented, and proved correct w.r.t.\ a form of weak barbed
bisimilarity. This is in line with our results, since the target
calculus in~\cite{LaneseMS16} is more expressive than internal $\pi$
due to the presence of abstractions~\cite{Sangiorgi96a} and join patterns;
hence our impossibility results do not apply. Also, like ours, their
encoding is not compositional, since it takes as additional parameter
a name to be used for rollback.
%
\subparagraph*{Conclusion}
We have studied the encodability of reversible process calculi into forward-only concurrent models, using CCSK as a representative. 
Our results show that reversibility strictly increases expressive power: even with parallel composition restricted to the top level, no basic, success-sensitive encoding into CCS or the $\pi$-calculus exists.

%

We pinpoint the boundary of encodability. When parallel composition is limited to the top level, CCSK can be encoded into the internal $\pi$-calculus up to strong bisimilarity. With arbitrary parallelism, however, no parallel-preserving encoding achieves strong behavioural correspondence; instead, we provide an encoding correct under weak mutual simulation. Taken together, these results clarify the r\^{o}le of reversibility in concurrent computation, showing that it cannot be compiled away without either restricting the source language or weakening the behavioural equivalence. Concerning behavioural equivalence, we conjecture that our last result can be made stronger by using instead of weak mutual simulation a relation inspired by correspondence simulation~\cite{PvG15}. This is by design an \emph{asymmetric} relation, where the second term in order to answer a challenge may go through intermediate steps which have no equivalent in the first term. However, the definition in~\cite{PvG15} is in an unlabelled setting; hence we would first need to extend it to a labelled one. 

Several other directions for future work emerge from this study.
First, our encodability and separation results rely on fundamental features of  reversibility rather than on specific aspects of CCSK. 
This suggests that similar limitations and encodability boundaries may hold for other reversible calculi, such as RCCS~\cite{DK04}, reversible variants of the 
$\pi$-calculus~\cite{CristescuKV13} or higher-order process calculi~\cite{LaneseMS16}. However, this requires detailed analysis, since such calculi separate history information from the actual process; hence, it is not even clear what it means to replace $\nil$ with a process with past behaviour, as we have done in the proof of Proposition~\ref{prop:CCSK not SuRF}. 
Establishing general criteria for the encodability of reversible models remains an open problem.
%
%
%
Second, the encoding for arbitrary parallel composition requires a coordination protocol to ensure that all descendants of a process are reversed before reversing the process itself. Alternative protocols, possibly targeting richer calculi (e.g., with join patterns or higher-order communication), may be more efficient or allow stronger behavioural correspondences.
Third, while our work is purely semantic, reversibility is often motivated by practical applications such as debugging, fault recovery, and reversible programming. Investigating how the theoretical limits identified here affect the design of reversible languages and compilation techniques is a relevant direction for future research.

\subparagraph*{Acknowledgements}
We thank the anonymous referees of CONCUR 2026 for their helpful comments and suggestions.

\bibliography{../../../../bib/axrev-revised}
\newpage
\appendix

\section{Omitted proofs for Section~\ref{sec:separation}}\label{app:separation}

\subsection{Alternative rules for the $\pi$-calculus}\label{subsec:alt pi}

For certain proofs it is more convenient to define a transition relation for the $\pi$-calculus without structural congruence.

\begin{definition}\label{def:alt pi}
Let $\rectran{}$ be the transition relation generated by the rules in Figure~\ref{fig:pi_sem}, including (\textsc{Rec}) but omitting (\textsc{Str}), with the addition of symmetrical rules for (\textsc{Par-L}), (\textsc{Com-L}) and (\textsc{Close-L}).
\end{definition}

\begin{lemma}[{cf.\ \cite[Lemma 1.4.15(1)]{pibook}}]\label{lem:pistr swap}
If $R \pistr \, \rectran\mu R'$ then $R \rectran\mu \, \pistr R'$.
\end{lemma}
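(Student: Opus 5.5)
We must prove: if $R \pistr R''$ and $R'' \rectran\mu R'$, then there is $R'''$ with $R \rectran\mu R'''$ and $R''' \pistr R'$. This is the standard "structural congruence is a strong bisimulation for the rule-based LTS" fact. We will prove it in the stronger, symmetric form that $\pistr$ is a strong bisimulation with respect to $\rectran{}$, up to the usual convention on bound names.

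Since $\pistr$ is the smallest congruence generated by the axioms of Figure~\ref{fig:pi_str}, we proceed by induction on the derivation of $R \pistr R''$. Reflexivity and transitivity are immediate. For symmetry, we prove both directions of the simulation simultaneously, so that each axiom is handled once in each orientation.

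\textbf{Axioms.} We treat each axiom as a base case.
\begin{itemize}
\item For $\alpha$-conversion, we use the standard fact that $\rectran{}$ is preserved by $\alpha$-renaming. In the case of bound outputs we rename the extruded name in the label.
\item For the monoid laws of $\Par$ and $+$, we do a case analysis on the last rule of the derivation of $R'' \rectran\mu R'$. Here the symmetric rules (\textsc{Par-R}), (\textsc{Com-R}) and (\textsc{Close-R}) of Definition~\ref{def:alt pi} are exactly what lets commutativity and associativity be matched by a transition of the same shape.
\item For associativity of $\Par$, a communication between the outer two components must be re-bracketed. A (\textsc{Close}) must then be rebuilt, with its restriction $\nu b$ moved using scope extrusion, giving a target that is only $\pistr$-related to $R'$.
\item The $\nu$ axioms ($\nu a.\nil$, swapping restrictions, scope extrusion) are the delicate ones. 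For scope extrusion $\nu a.(R_1 \Par R_2) \pistr R_1 \Par \nu a.R_2$ with $a \notin \fn(R_1)$, we case on whether the transition comes from $R_1$, from $R_2$, or from a communication between them. We further case on whether $a$ occurs in the label: if it does, (\textsc{Open}) is used, and a subsequent (\textsc{Close}) must be matched by a (\textsc{Close}) on the other side. The side conditions $\bn(\mu)\cap\fn(S)=\emptyset$ and $b\notin\fn(S)$ are satisfied after $\alpha$-converting bound names to be fresh.
\end{itemize}

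\textbf{Congruence closure.} Next we handle the closure cases: contexts $\nu a.\hol$, $\hol\Par S$, $\hol + S$ inside a summation, and $\rec X \hol$. For the first three, we invert the last rule of the transition and apply the induction hypothesis to the premise.

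For recursion, $\pistr$ is a congruence, so $R_1 \pistr R_2$ implies $R_1\subst{\rec X R_1}{X} \pistr R_2\subst{\rec X R_2}{X}$. However, the induction on the $\pistr$ derivation does not directly give us transitions of the unfolding. We therefore use a nested induction on the height of the $\rectran{}$ derivation, with the $\pistr$ derivation as the outer induction. Since recursion variables are guarded, the unfolding's transition comes from a strictly smaller derivation, and substitution is compatible with $\pistr$.

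The main obstacle is the scope-extrusion axiom combined with (\textsc{Open})/(\textsc{Close}), together with the recursion case. For these we follow \cite[Lemma 1.4.15]{pibook}, adapting its proof to our early semantics with $n$-ary sums and recursion.
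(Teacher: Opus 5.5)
Your plan is to show that $\pistr$ is a strong bisimulation for $\rectran{}$ by induction on the derivation of $\pistr$, axiom by axiom and context by context. This is the standard Harmony Lemma argument; the paper gives no proof of its own and just cites it from \cite{pibook}. The cited result is for a calculus with replication, though. So the one case that needs genuinely new work is closure under $\rec{X}{\hol}$, and there your sketch does not go through.

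Let $d$ derive $\rec{X}{R_1} \pistr \rec{X}{R_2}$ from $d_0 : R_1 \pistr R_2$, and let $\rec{X}{R_2} \rectran\mu S'$ come from $R_2\subst{\rec{X}{R_2}}{X} \rectran\mu S'$. The derivation relating the two unfoldings, obtained by congruence, contains $d$ itself at every occurrence of $X$. It is therefore not smaller than $d$, so the outer hypothesis does not apply. The inner hypothesis (same $d$, smaller height) is about $\rec{X}{R_1}$ and $\rec{X}{R_2}$, not about their unfoldings. Swapping the two inductions does not help either: the transitivity case would then need the hypothesis at the height of the transition produced for the middle process, which may be larger. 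Note also that guardedness is not what makes the premise smaller; the premise of (\textsc{Rec}) is always a proper subderivation.

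One way to close the case is to fix $d_0$ and prove a generalised claim by induction on the height of the derivation. The claim is: for every term $G$ with at most $X$ free, $G\subst{\rec{X}{R_2}}{X} \rectran\mu S'$ implies $G\subst{\rec{X}{R_1}}{X} \rectran\mu\,\pistr S'$. This is exactly the pattern of Lemma~\ref{lem:context rec}. For $G = X$, unfold and apply the inner hypothesis with $G := R_2$ to get a transition of $R_2\subst{\rec{X}{R_1}}{X}$. Then apply the outer hypothesis to $R_1\subst{\rec{X}{R_1}}{X} \pistr R_2\subst{\rec{X}{R_1}}{X}$, which is derived by a substitution instance of $d_0$ and is therefore smaller than $d$. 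The other shapes of $G$ only need $\rec{X}{R_1} \pistr \rec{X}{R_2}$ to relate targets, plus the fact that $\pistr$ preserves free names, for side conditions.

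Alternatively, use guardedness where it actually matters. If $X$ is guarded in $G$, every transition of $G\subst{T}{X}$ is an instance of a transition $G \rectran\mu G'$ of the open term that never touches $T$. The outer hypothesis can then be applied to $d_0$ on open terms directly, followed by re-substitution.

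Two smaller points. First, the prefix contexts $\pi.\hol$ are missing from your list. They are handled by closure of $\pistr$ under name substitution (because of the early (\textsc{In}) rule), not by inverting a rule and applying the hypothesis.

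Second, $\rectran{}$ has no $\alpha$-rule. So ``renaming the extruded name in the label'' and ``$\alpha$-converting bound names to satisfy side conditions'' presuppose that processes, and bound names in labels, are taken up to $\alpha$-equivalence. You should state this convention explicitly, because without it the statement is false. Take $A = \nu b.\Co{a}\tup{b}.\nil$, $B = a(x).\nil$ and $C = \Co{b}\tup{c}.\nil$. Then $(A \Par B) \Par C$ has a $\tau$-transition, but $A \Par (B \Par C)$ has none, since the side condition of (\textsc{Close}) fails.
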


\begin{lemma}\label{lem:pitran rectran}
$R \pitran\mu R'$ iff $R \rectran{\mu} \, \pistr R'$.
\end{lemma}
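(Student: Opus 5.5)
We prove the statement by induction on the derivation of $R \pitran\mu R'$ in Figure~\ref{fig:pi_sem}. For the ``if'' direction no induction is needed. Suppose $R \rectran\mu R''$ and $R'' \pistr R'$. We first observe that every $\rectran{}$ rule is either a rule of Figure~\ref{fig:pi_sem} or a symmetric variant of (\textsc{Par-L}), (\textsc{Com-L}) or (\textsc{Close-L}). Each symmetric variant is derivable with $\pitran{}$: commute the parallel composition using $\pistr$, apply the left rule, and commute back, all via (\textsc{Str}). A straightforward induction on the $\rectran{}$ derivation therefore gives $R \pitran\mu R''$. A final application of (\textsc{Str}), with $R \pistr R$ and $R'' \pistr R'$, yields $R \pitran\mu R'$.

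For the ``only if'' direction, we induct on the derivation of $R \pitran\mu R'$ and aim to show $R \rectran\mu\, \pistr R'$. The axioms (\textsc{In}), (\textsc{Out}) and (\textsc{Tau}) are immediate, with $\pistr$ taken as reflexivity. Each inductive rule other than (\textsc{Str}) goes the same way. From the premise we obtain, by induction, $R_j \rectran{\mu} S \pistr R'_j$. We apply the corresponding $\rectran{}$ rule to get a target built from $S$, and then use that $\pistr$ is a congruence to replace $S$ by $R'_j$ inside the surrounding context. This covers (\textsc{Sum}), (\textsc{Res}), (\textsc{Open}), (\textsc{Rec}), and the parallel and communication rules, for example $R\Par S \rectran\tau S_1 \Par S_2 \pistr R'\Par S'$.

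Side conditions need some care. In (\textsc{Par-L}) and (\textsc{Close-L}) the conditions $\bn(\mu)\cap\fn(S)=\emptyset$ and $b\notin\fn(S)$ concern the action and the untouched component, so they are unaffected. In (\textsc{Res}) and (\textsc{Open}) the condition $a\notin\n(\mu)$ likewise refers only to the label. Names bound in the target can be chosen fresh up to $=_\alpha$, which is included in $\pistr$.

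The main case is (\textsc{Str}): we have $R \pistr R_1$, $R_1 \pitran\mu S_1$ and $S_1 \pistr R'$. By induction, $R_1 \rectran\mu S_2 \pistr S_1$. Then $R \pistr R_1 \rectran\mu S_2$, and Lemma~\ref{lem:pistr swap} gives $R \rectran\mu S_3 \pistr S_2$. Transitivity of $\pistr$ then yields $S_3 \pistr R'$, as required.

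All the real difficulty sits in Lemma~\ref{lem:pistr swap}, which we assume (cf.~\cite{pibook}); it must hold for our variant with $n$-ary sums and the $\rec$ rule. Were it to need proof, we would induct on the derivation of $\pistr$ and check each axiom of Figure~\ref{fig:pi_str} in both directions. The delicate axioms are scope extrusion $\nu a.(R\Par S)\pistr R\Par\nu a.S$, against (\textsc{Open}) and (\textsc{Close}), and $\alpha$-conversion. The congruence cases would be handled using the same rule-by-rule reasoning as above.
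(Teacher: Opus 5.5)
Your proof is correct and follows the paper's argument. The ($\Rightarrow$) direction is an induction on derivations whose only non-trivial case, (\textsc{Str}), is discharged via Lemma~\ref{lem:pistr swap} and transitivity of $\pistr$, and the ($\Leftarrow$) direction replaces each symmetric rule by its left version followed by (\textsc{Str}); you additionally spell out what the paper leaves implicit, namely closing the remaining cases by congruence of $\pistr$ and the final (\textsc{Str}) step that absorbs the trailing $\pistr$.
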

\begin{proof}[Proof sketch]
($\Rightarrow$)
By induction on derivations.
The only extra rule for $\pitran{}$ is \textsc{Str}.
Suppose $R \pitran\mu R'$ comes from $R \pistr \, \pitran\mu \, \pistr R'$.
By inductive hypothesis, $R \pistr \, \rectran\mu \pistr \, \pistr R'$.
By Lemma~\ref{lem:pistr swap}, $R \rectran\mu \, \pistr \, \pistr \, \pistr R'$.
Hence $R \rectran{\mu} \, \pistr R'$.

($\Leftarrow$)
We show that if $R \rectran\mu R'$ then $R \tran\mu R'$.
If rule (\textsc{Par-R}) is used in the derivation of $R \rectran\mu R'$, we replace this with (\textsc{Par-L}) followed by (\textsc{Str}) to swap the order of the parallel composition.
Similarly for (\textsc{Com}) and (\textsc{Close}).
\end{proof}
%

Let $R \recred R'$ iff $R \rectran \tau R$, and let $R \wrecred R'$ iff $R \recred^* R'$.
\begin{lemma}\label{lem:wred wrecred}
$R \wred R'$ iff $R \wrecred \, \pistr R'$.
\end{lemma}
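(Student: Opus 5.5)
The statement to prove is Lemma~\ref{lem:wred wrecred}: $R \wred R'$ iff $R \wrecred \, \pistr R'$. The plan is to lift Lemma~\ref{lem:pitran rectran}, specialised to $\mu=\tau$, from single steps to sequences of steps. The only ingredient besides that lemma is Lemma~\ref{lem:pistr swap}, which lets us push structural congruence past a $\rectran{}$ step. Both directions go by induction on the number of reduction steps.

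\emph{($\Rightarrow$)} I induct on the length $n$ of $R \red^n R'$.
\begin{itemize}
\item For $n=0$ we have $R = R'$. Since $\pistr$ is reflexive, $R \wrecred R \pistr R'$.
\item For the inductive step, write $R \red R_1 \wred R'$. By Lemma~\ref{lem:pitran rectran} there is $R_1'$ with $R \recred R_1' \pistr R_1$.
\item By the induction hypothesis, $R_1 \wrecred R_2 \pistr R'$ for some $R_2$.
\item The remaining task is to move $R_1' \pistr R_1$ past the sequence $R_1 \wrecred R_2$. I do this by an auxiliary induction on the length of that sequence, applying Lemma~\ref{lem:pistr swap} with $\mu=\tau$ once per step.
\item This gives $R_1' \wrecred R_2' \pistr R_2 \pistr R'$. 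Transitivity of $\pistr$ then yields $R \wrecred R_2' \pistr R'$.
\end{itemize}

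\emph{($\Leftarrow$)} Suppose $R \recred^n R''$ and $R'' \pistr R'$.
\begin{itemize}
\item Each single step $S \recred S'$ gives $S \red S'$ by the ($\Leftarrow$) half of Lemma~\ref{lem:pitran rectran}, taking $\pistr$ to be the identity. Hence $R \wred R''$.
\item It remains to absorb the final congruence $R'' \pistr R'$.
\item If $n \geq 1$, the last step $S \red R''$ combines with $R'' \pistr R'$ through rule (\textsc{Str}) to give $S \red R'$, using $S \pistr S$ on the left.
\item If $n=0$, we have $R \pistr R'$ with no reduction to attach it to.
\end{itemize}

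I expect the case $n=0$ of ($\Leftarrow$) to be the main obstacle. Here $\wred$ is reflexive only in the literal sense, so the claim $R \wred R'$ needs $\wred$ to be read up to $\pistr$. I would handle this by adopting the standard convention, implicit throughout the paper, that processes are identified up to structural congruence. Under that convention $R = R'$ and the case is immediate. Without the convention, I would restrict the lemma to the form actually used later: reachability of barbs, which is invariant under $\pistr$. Everything else is routine bookkeeping with the two cited lemmas.
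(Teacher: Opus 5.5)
Your proof follows the paper's. The paper proves this lemma with the one-line citation ``by Lemmas~\ref{lem:pistr swap} and~\ref{lem:pitran rectran}''. Your induction for ($\Rightarrow$), which pushes the accumulated $\pistr$ past one $\recred$ step at a time using Lemma~\ref{lem:pistr swap}, and your use of (\textsc{Str}) on the last step of ($\Leftarrow$) when $n\geq 1$, are what that citation abbreviates.

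Your worry about $n=0$ is correct, and the paper overlooks it. Here $\wred$ is the literal reflexive--transitive closure of $\pitran{\tau}$. Take $R=\nil\Par\nil$ and $R'=\nil$: then $R\wrecred R\pistr R'$ holds. But $R\neq R'$, and $R$, like every process structurally congruent to it, has no transitions. So ($\Leftarrow$) is false as written.

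Of your two repairs, prefer the second. It is cleanest as the restated lemma $R\wred\,\pistr R'$ iff $R\wrecred\,\pistr R'$, which your argument proves verbatim. Identifying processes up to $\pistr$ is not a convention of this paper: the appendix keeps $\pistr$ explicit and introduces $\rectran{}$ precisely as a syntax-directed relation without (\textsc{Str}). The corrected form is enough for the lemma's only use, namely that weak barbs coincide for $\pitran{}$ and $\rectran{}$. That also needs strong barbs to be preserved by $\pistr$, which follows from Lemmas~\ref{lem:pistr swap} and~\ref{lem:pitran rectran}.
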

\begin{proof}
By Lemmas~\ref{lem:pistr swap} and~\ref{lem:pitran rectran}.
\end{proof}

Using Lemmas~\ref{lem:pistr swap},~\ref{lem:pitran rectran} and~\ref{lem:wred wrecred} we can show that the notions of strong and weak barbs (Definitions~\ref{def:pi barb} and~\ref{def:calculus}) coincide for $\pitran{}$ and $\rectran{}$.

\subsection{Proof of Lemma~\ref{lem:CCS pi invisible barb}}\label{subsec:pi invisible barb}


\begin{definition}[$\omega$-simulation {\cite[Def.\ 4.1]{PT20}}]\label{def:presim}
Let $P,Q$ range over CCS or $\pi$-calculus processes.
Let $\mu$ range over labels in the transition system of CCS or the $\pi$-calculus as appropriate.  Let ${\wred} = {\tran\tau^*}$, ${\wtran{\hat\mu}} = {\wred\tran\mu\wred}$ if $\mu \neq \tau$, and ${\wtran{\hat\tau}} = {\wred}$.
\begin{enumerate}
\item $Q \presim{0} P$ for all $P,Q$;
\item
For $0 < k < \omega$, $Q \presim{k} P$ iff whenever $Q \tran\mu Q'$ then $\exists P'$ such that $P \wtran{\hat\mu} P'$ and $Q' \presim{k-1} P'$;
\item
$Q \presim\omega P$ iff for all $k < \omega.\ Q \presim{k} P$.
\end{enumerate}
\end{definition}

\begin{lemma}
[barb preservation]\label{lem:CCS pi presim barb}
In CCS and the $\pi$-calculus, if $P \presim{\omega}Q$ and $P \wbarb \alpha$ then $Q \wbarb \alpha$.
\end{lemma}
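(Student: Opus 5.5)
The plan is to prove the statement by induction on the length of the weak reduction witnessing $P \wbarb \alpha$, with the strong barb as the base case. The key properties of $\presim{\omega}$ are that it is preserved along transitions and that it is reflected by weak transitions in the expected sense.

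\textbf{Base case.} Here $P \barb \alpha$ with zero reductions, so $P \tran{\mu} P''$ for some $\mu$ whose subject is $\alpha$:
\begin{itemize}
\item in CCS, $\mu = \alpha$;
\item in the $\pi$-calculus, $\mu$ is an input $ab$, or an output $\out{a}\tup{b}$ or $\out{a}(b)$, matching the polarity of $\alpha$.
\end{itemize}
Since $P \presim{1} Q$, there is $Q'$ with $Q \wtran{\hat\mu} Q'$. As $\mu \neq \tau$, this means $Q \wred Q_1 \tran{\mu} Q_2 \wred Q'$. Hence $Q_1 \barb \alpha$, and so $Q \wbarb \alpha$.

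\textbf{Inductive step.} Write $P \wbarb \alpha$ as $P \tran{\tau} P_1 \wred P' \barb \alpha$, with $n+1$ reductions in total. From $P \presim{\omega} Q$, for every $k$ we have $P \presim{k+1} Q$. This gives, for each $k$, some $Q_k$ with $Q \wred Q_k$ and $P_1 \presim{k} Q_k$.

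\textbf{Main obstacle.} The witness $Q_k$ may depend on $k$, so we cannot directly conclude $P_1 \presim{\omega} Q'$ for a single $Q'$. The way around this is to strengthen the induction so that it only needs finite depth. We prove instead:
\begin{quote}
if $P \presim{m} Q$ and $P \wbarb \alpha$ via $n$ reductions with $m \geq n+1$, then $Q \wbarb \alpha$.
\end{quote}
The induction is on $n$.
\begin{itemize}
\item For $n = 0$, the base case argument above uses only $\presim{1}$.
\item For $n+1$, the first $\tau$-step of $P$ is matched using $\presim{m}$, which yields $Q \wred Q_1$ with $P_1 \presim{m-1} Q_1$. Since $m - 1 \geq n+1$, the induction hypothesis applies to $P_1$, whose barb is reached in $n$ reductions. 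It gives $Q_1 \wbarb \alpha$, and therefore $Q \wbarb \alpha$ by transitivity of $\wred$.
\end{itemize}

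\textbf{Conclusion and technical care.} The lemma follows by taking $m = n+1$, since $\presim{\omega}$ implies $\presim{n+1}$. The only points needing care are in the $\pi$-calculus:
\begin{itemize}
\item the barb is defined by the subject of the transition only, so the object or bound name of the matching transition is irrelevant;
\item transitions are taken up to $\pistr$, which neither creates nor removes barbs, so no further argument is needed there.
\end{itemize}
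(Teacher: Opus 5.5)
Your proof is correct, but it takes a different route from the paper's. The paper argues in two steps, both cited from \cite{PT20}. First, \cite[Lem.~4.2]{PT20} transports $\omega$-simulation along the reductions of $P$ while keeping $Q$ fixed: from $P \presim{\omega} Q$ and $P \wred P'$ it gets $P' \presim{\omega} Q$. Second, \cite[Prop.~4.1]{PT20} handles the strong barb $P' \barb \alpha$.

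The ``main obstacle'' you identify is exactly what that first step sidesteps. Suppose $P \presim{k+1} Q$ and $P \tran{\tau} P_1$, giving $Q \wred Q_k$ with $P_1 \presim{k} Q_k$. Since $\wred\wtran{\hat\mu}$ is contained in $\wtran{\hat\mu}$, every move of $P_1$ matched from $Q_k$ is also matched from $Q$ itself. So $P_1 \presim{k} Q$ for every $k$, hence $P_1 \presim{\omega} Q$, and the witness no longer depends on $k$.

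Your stratified induction instead tracks depth explicitly and passes to derivatives of $Q$. This is self-contained and uses only the finite approximants. It also gives a slightly sharper statement: $P \presim{n+1} Q$ already suffices when $P$ reaches the barb in $n$ reductions. The paper's route instead yields a reusable closure property of $\presim{\omega}$ under reductions of the simulated process, which reduces the lemma to the zero-reduction case, but it relies on external lemmas.
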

\begin{proof}
Suppose $P \presim{\omega}Q$ and $P \wred P' \barb \alpha$.
By~\cite[Lem. 4.2]{PT20}, $P' \presim{\omega}Q$.
By~\cite[Prop. 4.1]{PT20}, $Q \wbarb \alpha$.
\end{proof}

%

\begin{lemma}\label{lem:CCS context invisible}
In CCS, for every context $C$, invisible process $I$ and process $P$,
we have $C[I] \presim{\omega}C[P]$.
\end{lemma}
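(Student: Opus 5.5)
We prove the lemma by exhibiting a single relation that contains all the pairs in question and checking that it is an $\omega$-simulation. Let
\[
\mathcal{S} = \{ (C[I], C[P]) \mid C \text{ a CCS context}, I \text{ invisible}, P \text{ any process} \} \cup \{(Q,Q) \mid Q \text{ a CCS process}\}.
\]
We show by induction on $k$ that every pair $(Q_1,Q_2) \in \mathcal{S}$ satisfies $Q_1 \presim{k} Q_2$. Taking all $k$ then gives $C[I] \presim{\omega} C[P]$.

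The base case $k=0$ is trivial. For the step, suppose $C[I] \tran\mu Q'$. We analyse the derivation of this transition in the standard LTS, using the alternative LTS without structural congruence in the style of Definition~\ref{def:alt pi}, adapted to CCS. This is legitimate because structural congruence is a congruence and preserves transitions, so it can be absorbed by closing $\mathcal{S}$ under $\str$ on both sides. There are two cases, according to whether the hole contributes to the transition.

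\textbf{The hole does not contribute.} This happens when the hole lies under an unfired prefix or in a summand that is not chosen, or when the hole is in a parallel component that does not move. Then $C[I] \tran\mu Q'$ decomposes as $Q' = C'[I]$ for a context $C'$ obtained from $C$ by the same transition. There are two subcases. If the hole survives, then $C[P] \tran\mu C'[P]$ and the pair $(C'[I], C'[P])$ is again in $\mathcal{S}$. If the hole is discarded, because a different summand was chosen, then $Q'$ does not involve $I$ at all and $C[P] \tran\mu Q'$, so the pair $(Q',Q')$ is in $\mathcal{S}$.

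\textbf{The hole contributes.} Here a transition of $I$ takes part in deriving $C[I] \tran\mu Q'$. Since $I$ is invisible, $I$ has no weak barbs, so every transition of $I$ is a $\tau$. A visible action of $I$, whether emitted or used in a synchronisation, would be a barb, and restriction can only hide such an action, not create it. Invisibility is preserved by $\tau$-steps, since $I \tran\tau I'$ and $I' \wbarb \alpha$ would give $I \wbarb \alpha$. Therefore any transition of $C[I]$ that uses $I$ has the form $I \tran\tau I'$ lifted through the context, so $\mu=\tau$ and $Q' = C''[I']$ with $I'$ invisible. The context $C''$ is either $C$ itself, or $C$ with the guarded choice around the hole resolved, in which case $C''$ discards the sibling summands. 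We match this move with the empty weak transition $C[P] \wtran{\hat\tau} C[P]$.

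\textbf{Main obstacle.} The empty weak move requires $C''[I'] \presim{k-1} C[P]$, but this pair is not literally of the form $(D[I], D[P])$ for a single context $D$. When a summation containing the hole is resolved, $C[P]$ still has the discarded summands, so it simulates more behaviour, not less. We handle this by enlarging $\mathcal{S}$ to a preorder-style relation of pairs $(D[I'], E[P])$, where $E$ is obtained from $D$ by re-adding summands around the hole, together with the earlier pairs; we make this precise as follows. Because recursion in CCS is guarded, the hole contributes through at most one position at a time. We define the \emph{prefix-erasure} of the path to the hole: the pair $(D, E)$ is admissible when $E$ is $D$ with extra summands added at choices lying on the path to the hole, and $I'$ ranges over invisible processes. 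Checking closure of this enlarged relation repeats the two cases above. In the first case, moves of the extra summands of $E$ are never required, since we only need to match moves of the left-hand side. In the second case, $\tau$-moves of $I'$ are again matched by the empty move.

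Since the relation is closed under all moves of the left component, the induction on $k$ gives $Q_1 \presim{k} Q_2$ for every $k$ and every pair in the relation, and in particular $C[I] \presim{\omega} C[P]$.
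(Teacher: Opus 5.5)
Your single-relation approach differs from the paper's route, which defers to the proof of \cite[Thm.~4.1]{PT20}. That proof goes by induction on the context: $I \presim{\omega} P$ is the base case, $\presim{k}$ is preserved by each operator, and recursion gets a dedicated argument (cf.\ Lemma~\ref{lem:context rec}). As written, your proof has a gap at exactly that last point. The CCS considered here has recursion, and a context may place the hole inside a recursion body. Guardedness constrains occurrences of $X$, not of $\hol$, so unfolding copies the hole.

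Two examples show what goes wrong.
\begin{itemize}
\item With $C = \rec{X}{(a.X \Par \hol)}$ you get $C[I] \tran{a} C[I] \Par I$ and $C[P] \tran{a} C[P] \Par P$. This pair lives over a two-hole context, and after a $\tau$ of one copy the two holes hold different invisible processes.
\item With $C = \rec{X}{(\hol \Par a.X)}$ and $I \tran{\tau} I'$, the move $C[I] \tran{\tau} I' \Par a.C[I]$ passes through an unfolding. Matching it by the empty move, as you propose, requires $I' \Par a.C[I] \presim{k-1} C[P]$. Here the left component is unfolded and contains two hole occurrences, while the right one is still folded.
\end{itemize}
Neither pair lies in $\mathcal{S}$ or in your enlarged relation, which has one hole, one filler, and $E$ differing from $D$ only by extra summands. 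So the closure check fails, and the remark that the hole contributes through one position at a time does not address it.

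To repair the argument, take pairs $(D[I_1,\ldots,I_n], R)$ with the following properties:
\begin{itemize}
\item $D$ is a multi-hole context and each $I_j$ is independently invisible;
\item $R$ is strongly bisimilar to $E[P,\ldots,P]$, where $E$ is $D$ with summands added on paths to holes.
\end{itemize}
Working up to strong bisimilarity on the right lets a folded $\rec{X}{R'}$ stand for its unfolding $R'\{\rec{X}{R'}/X\}$. This is sound because the two have the same transitions. Moreover, by induction on $k$, $Q \presim{k} R_1$ with $R_1$ strongly bisimilar to $R_2$ gives $Q \presim{k} R_2$. In the route of \cite{PT20} this role is played by the CCS analogues of Lemmas~\ref{lem:presim rec unfold} and~\ref{lem:presim k omega}.

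With these changes your two cases go through essentially as written. Unfoldings on the left duplicate holes filled with the same invisible process, and resolved sums leave extra summands only on the right. The result is a single coinduction-style argument instead of per-operator precongruence lemmas.
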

\begin{proof}
This is established in the proof of \cite[Theorem 4.1]{PT20}
\end{proof}
We can deduce Lemma~\ref{lem:CCS pi invisible barb} for CCS from Lemmas~\ref{lem:CCS context invisible} and~\ref{lem:CCS pi presim barb}.

We next turn to the proof of Lemma~\ref{lem:CCS pi invisible barb} for the $\pi$-calculus.
\begin{lemma}
[{\cite[Lemma 4.9]{PT20}}]\label{lem:pi rep context invisible}
In the $\pi$-calculus with replication, for every context $C$, invisible process $I$ and process $R$,
we have $C[I] \presim{\omega}C[R]$.
\end{lemma}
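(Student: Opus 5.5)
The plan is to exhibit a single relation that is a weak simulation in the sense of Definition~\ref{def:presim}, and then conclude $C[I] \presim{k} C[R]$ for every $k$ by induction on $k$. The relation cannot be just $\{(C[I],C[R])\}$. Replication (or unfolding, under (\textsc{Rec})) duplicates holes, and input transitions substitute names inside them. So I would generalise to multi-hole contexts with name substitutions. Let $\mathcal S$ contain all pairs
\[
\big(D[I_1\sigma_1,\ldots,I_n\sigma_n],\; D[R\sigma_1,\ldots,R\sigma_n]\big),
\]
where $D$ is an $n$-hole context, each $\sigma_i$ is a name substitution, and each $I_i$ is a $\tau$-derivative of $I$ (that is, $I \wred I_i$). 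The pair $(C[I],C[R])$ is the instance with $n=1$, $\sigma_1$ the identity and $I_1=I$.

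To avoid reasoning about structural congruence, I would work with the relation $\rectran{}$ of Definition~\ref{def:alt pi}. By Lemmas~\ref{lem:pitran rectran} and~\ref{lem:wred wrecred} it suffices to show that $\mathcal S$, closed under $\pistr$, is a weak simulation for $\rectran{}$. Two preliminary facts are needed.

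\begin{itemize}
\item[(a)] \emph{Invisibility is stable.} If $I$ is invisible, then so is every $I_i\sigma$. Invisibility means that no derivative reached by $\tau$ steps has any input or output transition. Applying a substitution, or taking further $\tau$ steps, cannot create a visible action, because transitions of $I\sigma$ are images of transitions of $I$. Hence each $I_i\sigma_i$ can only perform $\tau$ steps, and those lead again to invisible processes.
\item[(b)] \emph{Transitions decompose.} Take a transition of $D[\tilde I]$, by induction on its $\rectran{}$ derivation. Since no hole can perform a visible action, no communication involves a hole, so exactly one of the following holds.
\begin{itemize}
\item[(i)] A single hole $I_i\sigma_i$ makes a $\tau$ step to $I'_i\sigma_i$, with the context unchanged. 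Then $D[\tilde R]$ answers with the empty weak move $\wtran{\hat\tau}$, and the new pair is still in $\mathcal S$, with $I'_i$ replacing $I_i$.
\item[(ii)] The step is generated purely by the context: a prefix, communication or unfolding of $D$, possibly substituting names or duplicating holes. This yields $D[\tilde I]\rectran\mu D'[\tilde I']$, where $\tilde I'$ is a rearrangement of $\tilde I$ with composed substitutions (duplicated for unfolding, discarded for choice). The same derivation applies verbatim to $D[\tilde R]$, giving $D[\tilde R]\rectran\mu D'[\tilde R']$ with the matching arrangement. The resulting pair is again in $\mathcal S$.
\end{itemize}
\end{itemize}

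The main obstacle is making case (ii) rigorous. The derivation must depend only on the "context part", uniformly in what fills the holes. The delicate points are the following.

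\begin{itemize}
\item Side conditions such as $\bn(\mu)\cap\fn(S)=\emptyset$ and $b\notin\fn(S)$ mention free names of the holes, and these differ between $I$ and $R$. I would handle this by $\alpha$-converting binders of $D$ that do not bind into the holes, so the side conditions can be met for both instantiations.
\item Capture by binders of $D$ is intended, since contexts may capture names, so these binders must not be renamed.
\item Unfolding under (\textsc{Rec}) substitutes a context containing holes for $X$, which multiplies the holes. The substitutions $\sigma_i$ must be propagated consistently, including substitution into processes already placed in holes.
\end{itemize}

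I would first prove a small "uniformity" lemma: every context-generated $\rectran{}$ derivation commutes with filling holes. This lemma would be proved by induction on derivations. With it in hand, the simulation clauses are immediate. A direct induction on $k$ then gives $C[I]\presim{k}C[R]$ for all $k$, and therefore $C[I]\presim{\omega}C[R]$.
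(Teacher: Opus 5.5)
Your decomposition (b) is not exhaustive, and the missing case breaks the claim that $\mathcal S$ is a weak simulation.

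\textbf{The missing case.} Suppose a hole lies unguarded in the body of a recursion (or under a replication). Then an internal step of its filler is derived through (\textsc{Rec}). In one and the same transition the context is unfolded, the hole is duplicated, and one copy moves. This is neither your case (i), since the context changes, nor your case (ii), since a hole moves.

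Concretely, take $C = \rec{X}{(\hol \Par \tau.X)}$, $I = \tau.\nil$ and $R = \out{c}.\nil$. Then $C[I] \rectran{\tau} \nil \Par \tau.\rec{X}{(\tau.\nil \Par \tau.X)}$, whose context is the unfolding $\hol \Par \tau.\rec{X}{(\hol \Par \tau.X)}$. Your answer is the empty move, so $C[R]$ stays $\rec{X}{(\out{c}.\nil \Par \tau.X)}$. Its weak $\tau$-derivatives are this recursion in parallel with zero or more copies of $\out{c}.\nil$. None of these fills the same multi-hole context as the left-hand term, even up to $\pistr$, because the paper's structural congruence has no unfolding law.

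Hence $\mathcal S$ closed under $\pistr$ is not a weak simulation, and your induction on $k$ has nothing to run on. The same happens with $!\hol$ in the replication calculus of the statement. Note that the $\rectran{}$ of Definition~\ref{def:alt pi} is the recursion LTS; for replication you would need its own congruence-free rules and harmony lemma. If that congruence contains $!P \pistr P \Par !P$, closing under $\pistr$ would in fact absorb this case.

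\textbf{The repair.} Work up to unfolding on the right: close the right components of $\mathcal S$ under the congruence generated by $\rec{X}{E} = E\{\rec{X}{E}/X\}$. A routine induction on derivations shows this congruence is a strong bisimulation for $\rectran{}$, so weak simulation up to it is sound. The paper achieves the same effect with Lemma~\ref{lem:presim rec unfold} combined with Lemma~\ref{lem:presim k omega}, in the recursion cases of Lemma~\ref{lem:context rec}.

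\textbf{Comparison with the paper.} With this patch your route is a genuine alternative. The paper, following PT20, proceeds by induction on single-hole contexts. It uses operator-wise preservation lemmas for $\presim{k}$, handles input by quantifying over substitutions, and handles recursion by a nested induction on $k$ and derivation depth. You instead give one coinductive argument over multi-hole contexts with per-hole substitutions. This treats input and unfolding uniformly and resembles the paper's CCSK argument (Lemma~\ref{lem:context invisible}). The price is the uniformity lemma plus the up-to-unfolding closure.

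\textbf{A smaller point on (a).} The claim that transitions of $I\sigma$ are images of transitions of $I$ is false for arbitrary processes: a substitution merging two free names can create a communication. It holds for invisible $I$ only because every unguarded prefix of a $\tau$-derivative of $I$ has a restricted subject, which $\sigma$ cannot touch. This is the content of Lemma~\ref{lem:invisible sub}, and it should be argued that way.
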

We show Lemma~\ref{lem:pi rep context invisible} in the setting of recursion rather than replication.
\begin{lemma}\label{lem:pi rec context invisible}
In the $\pi$-calculus with recursion, for every context $C$, invisible process $I$ and process $R$,
we have $C[I] \presim{\omega}C[R]$.
\end{lemma}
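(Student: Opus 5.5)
We prove that $\presim{\omega}$ contains every pair $(C[I],C[R])$ by showing that a suitable relation is an $\omega$-simulation at every finite level. Concretely, let
\[
\mathcal{S} = \{(S,T) \mid S \equiv_\pi C'[I_1,\ldots,I_n],\ T \equiv_\pi C'[R_1,\ldots,R_n]\},
\]
where $C'$ ranges over multi-hole contexts, each $I_j$ is invisible, and each $R_j$ is arbitrary. We show by induction on $k$ that $\mathcal{S} \subseteq {\presim{k}}$ for all $k<\omega$. Multi-hole contexts are needed because a transition may unfold a recursion $\rec X S$ whose body contains the hole. Unfolding $\rec X S\subst{\rec X S}{X}$ duplicates the hole, so a single-hole context becomes a multi-hole one in which every copy of $I$ faces a copy of $R$. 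Substitution also matters: an input under the context may instantiate free names of $I$ and $R$. We therefore close $\mathcal{S}$ under name substitution, which is harmless because invisibility is preserved by injective renamings and by arbitrary substitutions on free names. An invisible process has no weak barb at all, and substitution cannot create one, since it only identifies names. Following the appendix, we work with $\rectran{}$ (no structural congruence) and transport the result via Lemmas~\ref{lem:pitran rectran} and~\ref{lem:wred wrecred}.

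For the inductive step, take $C'[\tilde I] \rectran{\mu} S'$. We analyse the derivation and classify each hole occurrence as either \emph{involved} (some $I_j$ contributes to the transition) or \emph{passive}. By (\textsc{Rec}), unfolding yields a new context $C''$ with more holes, and the same unfolding is available on the $R$ side. If no $I_j$ is involved, then $C'[\tilde R]$ performs the same step to $C''[\tilde R']$, with $(C''[\tilde I'],C''[\tilde R'])\in\mathcal{S}$ after substitution. If some $I_j$ is involved, then $I_j$ can only perform $\tau$ steps: it has no visible weak barb, so it has no visible transition at all, and it cannot communicate with its context. Hence the whole transition is $\tau$ and internal to $I_j$, giving $S' = C'[\ldots,I_j',\ldots]$ with $I_j'$ still invisible. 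In that case $C'[\tilde R]$ answers with the empty weak move $\wtran{\hat\tau}$ (zero steps), and the pair stays in $\mathcal{S}$ with $I_j$ replaced by $I_j'$. Restriction, (\textsc{Open}) and (\textsc{Close}) pass through unchanged because the context part is the same on both sides. Bound-name side conditions are handled by $\alpha$-converting the $R_j$ so that they avoid the relevant names.

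The main obstacle is the recursion case together with guardedness. The hole may sit inside $\rec X$, and the body of a recursion may even mention $X$ inside the hole region. This does not happen for contexts built from the grammar, since the hole is filled by a closed process, but a fresh recursion variable captured by $C$ must not occur in $I$ or $R$. I would make this precise by assuming processes plugged into holes have no free recursion variables. Then $(\rec X C_0)\subst{\cdot}{X}$ commutes with filling holes, i.e.\ $(C_0[\tilde I])\subst{\rec X C_0[\tilde I]}{X} = C_0^\ast[\tilde I,\tilde I]$ for a multi-hole context $C_0^\ast$, and the same context works for $\tilde R$. Once this commutation lemma is in place, the rest is a routine induction on the derivation of the transition, mirroring~\cite[Lemma 4.9]{PT20} for replication.
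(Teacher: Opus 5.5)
Your overall strategy (a single relation over multi-hole contexts, closed under substitution, shown to be a weak simulation) differs from the paper's route. The paper proceeds by induction on the context, using per-operator precongruence lemmas (Lemmas~\ref{lem:presim tau out par nu}, \ref{lem:presim par}, \ref{lem:presim component sum}, \ref{lem:context input}) and a separate fixed-point argument for $\mathtt{rec}$ (Lemma~\ref{lem:context rec}). As written, however, your $\mathcal{S}$ is not closed under transitions, and the failure is in the recursion case.

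You claim that when some $I_j$ is involved the result is $C'[\ldots,I_j',\ldots]$ with the \emph{same} context. This is false when the hole sits unguarded in the body of a recursion of the context. Then the derivation of $I_j$'s $\tau$ passes through (\textsc{Rec}) and unfolds the context on the left, while your zero-step answer on the right unfolds nothing. Concretely, take $C=\rec{X}{(\hol \Par a.X)}$, $I=\tau.\nil$ and $R=\nil$. Then $C[I]\rectran{\tau}\nil\Par a.\rec{X}{(\tau.\nil\Par a.X)}$. But $C[R]=\rec{X}{(\nil\Par a.X)}$ has no $\tau$-move, so it must stay put. No common multi-hole context relates the two terms. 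The top-level $a$-prefixed thread on the left cannot come from an invisible filler (it would be a barb), so it must belong to the context. It would then also appear at top level on the right, whose only thread is a $\mathtt{rec}$, and $\pistr$ does not unfold recursion. Your remark that ``the same unfolding is available on the $R$ side'' only applies when $C'[\tilde R]$ actually performs a step.

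To repair this you must reason up to unfolding. For instance, let $\equiv_u$ be the congruence generated by $\pistr$ and $\rec{X}{E}\equiv_u E\subst{\rec{X}{E}}{X}$. Show, by induction on derivations, that $\equiv_u$ is closed under name substitution and is a strong bisimulation for $\rectran{}$ up to $\pistr$. Then require only $T\equiv_u C'[\tilde R]$ in $\mathcal{S}$. This is the ingredient the paper supplies through Lemma~\ref{lem:presim rec unfold} ($E\subst{\rec{X}{E}}{X}\recpresim{\omega}\rec{X}{E}$) combined with the composition Lemma~\ref{lem:presim k omega}, used in Lemma~\ref{lem:context rec} exactly when the answering weak move is empty.

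A smaller point: $\alpha$-converting $R_j$ cannot remove a clash between a bound output name extruded by the context and a \emph{free} name of $R_j$. There you need the usual convention that bound names of labels are fresh for both processes. The paper's Definition~\ref{def:presim} also leaves this convention implicit.
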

We can deduce Lemma~\ref{lem:CCS pi invisible barb} for the $\pi$-calculus with recursion from Lemmas~\ref{lem:pi rec context invisible} and~\ref{lem:CCS pi presim barb}.

\begin{lemma}\label{lem:pistr presim}
For all $k \geq 0$, if $R \pistr \presim{k} S$ then $R \presim{k} S$.
\end{lemma}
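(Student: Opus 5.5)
We prove the statement by induction on $k$. The case $k=0$ is immediate, since $R \presim{0} S$ holds for all $R,S$. For the inductive step, assume $R \pistr R_1$ and $R_1 \presim{k} S$ with $k>0$, and let $R \pitran{\mu} R'$. We must produce $S'$ with $S \wtran{\hat\mu} S'$ and $R' \presim{k-1} S'$.

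The key point is that the transition relation $\pitran{}$ already absorbs structural congruence through rule (\textsc{Str}). From $R_1 \pistr R$ and $R \pitran{\mu} R'$ we get $R_1 \pitran{\mu} R'$ by (\textsc{Str}), taking $R' \pistr R'$ as the closing step. Since $R_1 \presim{k} S$, there is then some $S'$ with $S \wtran{\hat\mu} S'$ and $R' \presim{k-1} S'$. This is exactly what we need, and we do not even need the inductive hypothesis.

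The inductive hypothesis is only required if one insists on matching the transition of $R$ with a structurally congruent derivative. For instance, if we work with the relation $\rectran{}$ of Definition~\ref{def:alt pi}, Lemma~\ref{lem:pistr swap} gives $R_1 \rectran{\mu} R''$ with $R'' \pistr R'$. Applying $R_1 \presim{k} S$ yields $S'$ with $R'' \presim{k-1} S'$. Then $R' \pistr R'' \presim{k-1} S'$, and the inductive hypothesis gives $R' \presim{k-1} S'$.

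The only delicate step is which transition relation $\presim{k}$ refers to. Transitions must be closed under $\pistr$ on the source, which holds for $\pitran{}$ via (\textsc{Str}). For $\rectran{}$, Lemma~\ref{lem:pitran rectran} together with the induction above covers it. The symmetric use of $\pistr$ in (\textsc{Str}) handles the direction $R_1 \pistr R$ versus $R \pistr R_1$.
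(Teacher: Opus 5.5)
Your proof is correct and matches the paper's argument: by (\textsc{Str}) you pass from $R \pitran{\mu} R'$ to $R_1 \pitran{\mu} R'$, and $R_1 \presim{k} S$ then gives the required $S'$ directly. The paper also frames this as an induction on $k$ without using the hypothesis, and your side remark about $\rectran{}$ is unnecessary here because $\presim{k}$ is defined via $\pitran{}$.
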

\begin{proof}
By induction on $k$.
Assume $R \pistr R' \presim{k} S$.
Suppose $R \pitran\mu R''$.  Then $R' \pitran\mu R''$ using (\textsc{Str}).
So $S \wtran{\hat\mu} S''$ with $R'' \presim{k-1} S''$.
\end{proof}
Let $\recpresim{k}$ be $\presim{k}$ defined using $\rectran{}$ instead of $\pitran{}$.

\begin{lemma}\label{lem:recpresim presim}
For all $k \geq 0$, if $R \recpresim{k} S$ then $R \presim{k} S$.
\end{lemma}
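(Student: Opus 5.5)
We prove the statement by induction on $k$. The relation $\recpresim{k}$ is the same as $\presim{k}$ except that all transitions (on both sides, including the weak transitions $\wtran{\hat\mu}$) are taken with respect to $\rectran{}$ instead of $\pitran{}$. We use Lemma~\ref{lem:pitran rectran} to move between the two transition relations, and Lemma~\ref{lem:pistr presim} to absorb structural congruence on the left.

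\textbf{Base case.} For $k=0$ there is nothing to prove, since $\presim{0}$ relates all pairs.

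\textbf{Inductive step, matching a move.} Assume $R \recpresim{k} S$ with $k>0$, and suppose $R \pitran{\mu} R'$. By Lemma~\ref{lem:pitran rectran} there is $R''$ with $R \rectran{\mu} R''$ and $R'' \pistr R'$. Since $R \recpresim{k} S$, there is $S'$ with $S \wrectran{\hat\mu} S'$ and $R'' \recpresim{k-1} S'$. Every $\rectran{}$ step is also a $\pitran{}$ step (the $\Leftarrow$ direction of Lemma~\ref{lem:pitran rectran}, taking $\pistr$ to be the identity). Hence $S \wtran{\hat\mu} S'$ holds in the $\pitran{}$ sense.

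\textbf{Inductive step, closing the relation.} By the induction hypothesis, $R'' \recpresim{k-1} S'$ gives $R'' \presim{k-1} S'$. We have $R' \pistr R''$, using symmetry of $\pistr$. Lemma~\ref{lem:pistr presim} then yields $R' \presim{k-1} S'$. This is exactly the requirement for $R \presim{k} S$.

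The only delicate point is the order in which the two tools are applied. The structural congruence produced by Lemma~\ref{lem:pitran rectran} lands on the left-hand (simulated) process, not on the right. It therefore has to be absorbed through Lemma~\ref{lem:pistr presim}, and that can only be done after the induction hypothesis has converted $\recpresim{k-1}$ into $\presim{k-1}$. The weak transitions on the right need no congruence handling, since $\rectran{} \subseteq \pitran{}$.
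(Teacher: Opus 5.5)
Your proposal is correct and follows the paper's proof essentially step for step. It uses the same structure: induction on $k$, Lemma~\ref{lem:pitran rectran} to turn a $\pitran{\mu}$ move into a $\rectran{\mu}$ move up to $\pistr$, the inclusion $\rectran{}\subseteq\pitran{}$ for the answering weak move, and Lemma~\ref{lem:pistr presim} to absorb the congruence on the left. The only difference is that you state explicitly the application of the induction hypothesis, which the paper leaves implicit.
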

\begin{proof}
By induction on $k$.
Suppose $R \pitran\mu R'$.  Then $R \rectran\mu R'' \pistr R'$ by Lemma~\ref{lem:pitran rectran}.
So there is $S''$ such that $S \wrectran{\hat\mu} S''$ and $R'' \recpresim{k-1} S''$.
So $S \wtran{\hat\mu} S''$ and $R' \pistr R'' \presim{k-1} S''$.
By Lemma~\ref{lem:pistr presim}, $R' \presim{k-1} S''$ as required.
\end{proof}

\begin{lemma}\label{lem:pi rectran context invisible}
In the $\pi$-calculus with recursion, for every context $C$, invisible process $I$ and process $R$,
we have $C[I] \recpresim{\omega} C[R]$.
\end{lemma}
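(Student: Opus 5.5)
We prove the statement by giving, for each $k$, an explicit relation that witnesses $C[I] \recpresim{k} C[R]$ uniformly in $k$. We then conclude $\recpresim{\omega}$. This follows the proof of Lemma~\ref{lem:pi rep context invisible} in~\cite{PT20}, with replication replaced by recursion.

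The main difficulty is recursion. Unfolding by (\textsc{Rec}) duplicates the hole: $\rec X C'[\hol]$ unfolds to $C'[\hol]\subst{\rec X C'[\hol]}{X}$, which contains two occurrences of $\hol$. We therefore strengthen the induction hypothesis from single-hole contexts to \emph{multi-hole} contexts. We also allow the holes to be filled with arbitrary substitution instances $I\sigma$ and $R\sigma$, since input transitions substitute names inside the filled processes.

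The stronger claim is the following. Let $C$ be a multi-hole context, and let each hole $j$ in the source be filled by an invisible $I\sigma_j$ and the corresponding hole in the target by $R\sigma_j$. If $C[\widetilde{I\sigma}] \rectran\mu T$, then one of two things holds.
\begin{enumerate}
\item There is a multi-hole context $C''$ with $T = C''[\widetilde{I\sigma'}]$ and $C[\widetilde{R\sigma}] \rectran\mu C''[\widetilde{R\sigma'}]$.
\item $\mu=\tau$ and the step is internal to one filled hole, so $T = C[\ldots, I', \ldots]$ with $I\sigma_j \recred I'$.
\end{enumerate}
We prove this by induction on the derivation of $\rectran{}$, using the rules of Definition~\ref{def:alt pi}. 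Working without (\textsc{Str}) makes the case analysis syntactic.

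The key point is that no visible action of the transition can originate from a hole. Invisibility of $I$ is preserved under injective renaming and under $\recred$, so no filled hole in any reachable source ever has a strong barb. Hence the rules (\textsc{In}), (\textsc{Out}), (\textsc{Com}) and (\textsc{Close}) can never involve a hole component; a hole component can only take a silent step on its own. The remaining rules go as follows.
\begin{itemize}
\item (\textsc{Par}), (\textsc{Res}) and (\textsc{Open}) commute with filling the holes, but we must choose bound names fresh for $R$ as well, using $\alpha$-conversion.
\item (\textsc{Sum}) either selects a branch of the context, which leaves the holes inside the chosen continuation, or it selects a branch that is a hole. In the latter case that hole would have to perform a visible action, which is impossible, or a $\tau$ step, which is case 2 again.
\item (\textsc{Rec}) is handled by the induction on the unfolded term, which is again a multi-hole context.
\end{itemize}
We have not fully checked that name capture of free names of $R$ by binders inside $C$ is handled correctly. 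The intended remedy is to fix $\alpha$-equivalent representatives of $C$ with bound names disjoint from $\fn(R)\cup\fn(I)$.

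From the claim we define
\[
\mathcal S=\{(C[\widetilde{I'}],\,C[\widetilde{R\sigma}]) \mid \text{each } I\sigma_j \Rightarrow_{\mathsf r} I'_j\},
\]
that is, each source hole holds a silent derivative of $I\sigma_j$ while the target hole holds $R\sigma_j$; the derivatives remain invisible. By induction on $k$ we show $\mathcal S\subseteq\recpresim{k}$. A context step is matched by the same step of the target, which yields a strong $\hat\mu$ move. A step internal to a hole is matched by the empty weak move $\wrecred$ of the target, which is allowed since $\wtran{\hat\tau}$ is reflexive. Taking $\sigma_j$ to be the identity and $I'_j = I$ gives the pair $(C[I],C[R])$, and hence $C[I] \recpresim{\omega} C[R]$.

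We expect the (\textsc{Rec}) and name-capture bookkeeping to be the main obstacle, together with preserving invisibility under the substitutions. Our first approach to that step is to track the substitutions $\sigma_j$ explicitly per hole, as in the stronger claim above.
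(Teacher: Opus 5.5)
Your route differs from the paper's, which inducts on the structure of $C$ using precongruence lemmas for $\recpresim{k}$, a substitution argument for input, and a separate recursion lemma. You instead prove a decomposition lemma for multi-hole contexts and exhibit a single witness relation $\mathcal S$. That strategy can work, but as stated it has a genuine gap: case~2 of your claim is false, and $\mathcal S$ is not closed under hole-internal steps, because of (\textsc{Rec}).

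Under $\rectran{}$, every transition of $\rec{X}{E}$ is derived from its unfolding $E\subst{\rec{X}{E}}{X}$. So when an unguarded hole inside a recursion body takes a $\tau$ step, the source ends up in \emph{unfolded} form, while the target, answering with the empty move, stays folded. Concretely, take $C=\rec{X}{(\hol \Par b.X)}$, $I=\tau.\nil$ and $R=a.\nil$.
\begin{itemize}
\item The source moves: $C[I]\rectran{\tau}\nil\Par b.\rec{X}{(\tau.\nil\Par b.X)}$, which is not of the form $C[I']$.
\item The target $C[R]=\rec{X}{(a.\nil\Par b.X)}$ has no $\tau$ move, so it must stay put.
\item The resulting pair is not in $\mathcal S$. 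A common context would need a top-level $\Par$ on one side and $\mathtt{rec}$ on the other, so it would have to be $\hol$. But the left-hand side has the barb $b$, so it is not a silent derivative of any $I\sigma$, and the right-hand side is not $R\sigma$.
\end{itemize}
Your remark that (\textsc{Rec}) ``is handled by the induction on the unfolded term'' is right for case~1, where both sides unfold identically, but not for case~2.

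The missing ingredient is that unfolding on the target side is harmless. You can supply it in either of two ways.
\begin{itemize}
\item Restate case~2 as $T=C'[\ldots,I',\ldots]$, where $C'$ is $C$ with some recursions on the path to the hole unfolded. Then take the target in $\mathcal S$ only up to the congruence generated by $\rec{X}{E}=E\subst{\rec{X}{E}}{X}$, and prove separately that this congruence is a strong bisimulation for $\rectran{}$.
\item Alternatively, do what the paper does in its case $G=\rec Y G'$ when the matching move is empty. Use $E\subst{\rec{X}{E}}{X}\recpresim{\omega}\rec{X}{E}$ (Lemma~\ref{lem:presim rec unfold}), precongruence for $\Par$ and $\nu$ (Lemma~\ref{lem:presim tau out par nu}), and the fact that $S\recpresim{k}T'$ and $T'\recpresim{\omega}T$ imply $S\recpresim{k}T$ (Lemma~\ref{lem:presim k omega}).
\end{itemize}

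There are also three smaller issues.
\begin{itemize}
\item An early input can receive a name already free in a filler. You therefore need invisibility to be preserved under arbitrary, non-injective substitutions; this holds without matching (Lemma~\ref{lem:invisible sub}). Preservation under injective renamings alone is not enough.
\item Your claim is stated for fillers $I\sigma_j$, but it is applied to $\mathcal S$, whose fillers are derivatives $I'_j$. It is simpler to state it for arbitrary invisible fillers on the left and arbitrary fillers on the right, and to let $\mathcal S$ contain all such pairs $(C[\tilde J],C[\tilde S])$, taken up to unfolding on the right. This also spares you from showing that $\tau$-steps commute with substitution.
\item Choosing representatives of $C$ whose bound names avoid $\fn(R)$ would change $C[R]$, since hole filling captures by design. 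What you actually need is weaker: when a substitution forces $\alpha$-renaming of a binder above a hole, use the same fresh name on both sides. Your per-hole substitutions already accommodate this.
\end{itemize}
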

Clearly Lemma~\ref{lem:pi rec context invisible} follows from Lemmas~\ref{lem:recpresim presim} and~\ref{lem:pi rectran context invisible}.
The rest of this subsection is devoted to showing Lemma~\ref{lem:pi rectran context invisible} by induction on contexts.

\begin{lemma}[{Cf.\ \cite[Prop.\ 4.3]{PT20}}]\label{lem:presim k omega}
In the $\pi$-calculus with recursion, for every $k$, $R \recpresim{k}S$ and $S \recpresim{\omega} T$ imply $R \recpresim{k} T$.
\end{lemma}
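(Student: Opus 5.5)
We prove the statement by induction on $k$, unfolding the definition of $\recpresim{k}$ and composing weak transitions. For $k=0$ there is nothing to show, since $R \recpresim{0} T$ holds for all $R,T$.

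For $k>0$, assume $R \recpresim{k} S$ and $S \recpresim{\omega} T$, and suppose $R \rectran\mu R'$. By $R \recpresim{k} S$ there is $S'$ with $S \wrectran{\hat\mu} S'$ and $R' \recpresim{k-1} S'$. The task is to transport the weak move $S \wrectran{\hat\mu} S'$ to $T$. We do this through an auxiliary claim.

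\emph{Auxiliary claim.} If $S \recpresim{\omega} T$ and $S \wrectran{\hat\mu} S'$, then there is $T'$ with $T \wrectran{\hat\mu} T'$ and $S' \recpresim{\omega} T'$.

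Granting the claim, we take such a $T'$. Then $T \wrectran{\hat\mu} T'$, and the inductive hypothesis applied to $R' \recpresim{k-1} S'$ and $S' \recpresim{\omega} T'$ gives $R' \recpresim{k-1} T'$. Hence $R \recpresim{k} T$.

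To prove the claim, write the weak move as a chain of strong steps $S = S_0 \rectran{\mu_1} S_1 \cdots \rectran{\mu_n} S_n = S'$, where all labels are $\tau$ except possibly one equal to $\mu$. We induct on the length $n$ of the chain, so it suffices to handle a single step: from $S \recpresim{\omega} T$ and $S \rectran{\mu_1} S_1$, find $T_1$ with $T \wrectran{\hat\mu_1} T_1$ and $S_1 \recpresim{\omega} T_1$. Concatenating weak moves then yields $T \wrectran{\hat\mu} T'$, since $\wrectran{}\wrectran{\hat\mu}\wrectran{}$ is again $\wrectran{\hat\mu}$.

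The single-step case is the main obstacle. From $S \recpresim{j+1} T$ we get, for each $j$, some $T^{(j)}$ with $T \wrectran{\hat\mu_1} T^{(j)}$ and $S_1 \recpresim{j} T^{(j)}$, but the witness $T^{(j)}$ may depend on $j$. One uniform $T_1$ that works for every $j$ is needed. We would follow the standard argument for $\omega$-simulation, as in \cite[Prop.\ 4.3, Lem.\ 4.2]{PT20}, and use image-finiteness of $\rectran{}$ on the $\pi$-calculus with guarded recursion. The approach is as follows.

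\begin{enumerate}
\item Work up to $\alpha$-conversion, and in the early semantics fix the received names $v$ from a finite set: the free names of $S$ and $T$ together with one fresh name. Under this restriction each $\rectran{\mu}$ is finitely branching, because guarded recursion makes unfolding terminate and $\rectran{}$ has no (\textsc{Str}) rule.
\item Use image-finiteness of the weak relation. This fails in general, since unbounded $\tau$-sequences can have infinitely many targets. So we would replace the weak transition by a König's-lemma argument on the finitely branching tree of strong $\tau$-paths from $T$. Each $j$ is witnessed by some finite path, and the relations $\recpresim{j}$ are decreasing in $j$, so $S_1 \recpresim{j} X$ implies $S_1 \recpresim{i} X$ for all $i \le j$.
\item Conclude that if every $j$ has a witness among finitely many candidates, some single candidate serves for infinitely many $j$. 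By downward closure that candidate serves for all $j$, so $S_1 \recpresim{\omega} T_1$.
\end{enumerate}

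If the weak image turns out to be infinite, there is a fallback. Instead of proving the full claim we prove the $k$-indexed version directly: from $S \recpresim{k+n} T$ and a chain of length $n$ we obtain $S' \recpresim{k} T'$. This suffices, because the inductive step only needs $S' \recpresim{k-1}$ of something, and $S \recpresim{\omega} T$ supplies $\recpresim{k-1+n}$ for the required $n$. This removes any compactness argument, so we would try this route first. It gives $T'$ with $S' \recpresim{k-1} T'$, and we still need $R' \recpresim{k-1} T'$. Generalising the statement to ``$R \recpresim{k} S$ and $S \recpresim{m} T$ with $m$ sufficiently large imply $R \recpresim{k} T$'' makes the induction close, by choosing $m$ in terms of $k$ and the lengths of the chains used.
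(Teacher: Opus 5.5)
Your primary route fails at the auxiliary claim. That claim is false for $\recpresim{\omega}$, which is only the intersection of the $\recpresim{j}$ and not itself a simulation. It fails already in the finitely branching CCS fragment. Take, with object-free prefixes, $S = a.\rec{X}{b.X}$ and $T = \nu g.\rec{Y}{(\tau.(Y \Par g.b) + a.\rec{Z}{\Co{g}.Z})}$.

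After $n$ unfolding $\tau$-steps followed by $a$, $T$ reaches a state with exactly $n$ copies of $g.b$. That state can therefore perform exactly $n$ actions $b$, each after a $\tau$ on $g$. Choosing $n \geq j$ gives $S \recpresim{j} T$ for every $j$, so $S \recpresim{\omega} T$. Yet every $T'$ with $T \wrectran{a} T'$ can do only boundedly many $b$'s, so no $T'$ satisfies $\rec{X}{b.X} \recpresim{\omega} T'$.

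This also shows why your steps 2--3 cannot work. The candidate witnesses are infinitely many, and each is good only up to some $j$. The infinite branch that K\"onig's lemma yields is the $\tau$-loop that never performs $a$, so it is no witness at all. The lemma itself is not refuted by this example, because its left-hand relation is only $\recpresim{k}$.

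The fallback is the right direction, but as written it skips the step that carries the proof. You may pick the index after seeing $R$'s \emph{first} move. From then on $T'$ is fixed, and you only know $S' \recpresim{m'} T'$ for a fixed finite $m'$. That $m'$ must dominate the lengths of $S'$'s response chains over \emph{all} subsequent moves of $R'$, down to depth $k$.

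So the generalisation must read: if $R \recpresim{k} S$, then there is $m$ depending only on $R$, $S$, $k$ (not on $T$) such that $S \recpresim{m} T$ implies $R \recpresim{k} T$. In the inductive step, let $n_\mu$ be the length of a chosen response $S \wrectran{\hat\mu} S'$. Take $m$ to be the maximum of $n_\mu + m_{R',S'}$ over the moves $R \rectran{\mu} R'$, and use downward closure of the $\recpresim{j}$.

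That maximum is finite only because of two facts taken together:
\begin{itemize}
\item With finite sums, guarded recursion and no matching, $R$ has finitely many transitions up to permutations of names outside $\fn(R) \cup \fn(S)$. Early input objects and extruded names are the only source of infinite branching.
\item Each $\recpresim{j}$ is invariant under name permutations, so $n_\mu$ and $m_{R',S'}$ can be chosen uniformly on each orbit.
\end{itemize}

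This finiteness is essential, not a technicality. With an infinite sum, take $R = a.\sum_{n} c_n.\nil$, $S = a.S_0$ with $S_n = c_n.\nil + \tau.S_{n+1}$, and $T = \sum_{m} a.\sum_{n \leq m} c_n.\nil$. Then $R \recpresim{2} S$ and $S \recpresim{\omega} T$, but not $R \recpresim{2} T$.

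Your step 1 contains the right ingredient, but you apply it to the weak derivatives of $T$, where it does not help. It is needed for the branching of $R$. Also, the finite set of names must be taken relative to $R$ and $S$, not $T$, since $m$ has to be fixed before $T'$ is chosen.
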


\begin{lemma}[{Cf.\ \cite[Prop.\ 4.4]{PT20}}]
In the $\pi$-calculus with recursion, let $R$ be a process and $I$ be an invisible process. Then $I \recpresim{\omega} R$.
\end{lemma}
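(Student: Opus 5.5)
The statement is the analogue of \cite[Prop.\ 4.4]{PT20}: for every process $R$ and every invisible process $I$ we have $I \recpresim{\omega} R$. I will prove $I \recpresim{k} R$ for all $k$ by induction on $k$, with $I$ and $R$ universally quantified inside the induction. The base case $k=0$ is immediate from clause 1 of Definition~\ref{def:presim}.

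For the inductive step, assume $I \recpresim{k} R$ holds for all invisible $I$ and all $R$, and let $I \rectran{\mu} I'$. The first step is to show that $\mu = \tau$. If $\mu$ were an input $ab$, a free output $\Co{a}\tup{b}$ or a bound output $\Co{a}(b)$, then $I$ would have a strong barb on $a$ or $\Co{a}$. Strong barbs for $\rectran{}$ coincide with those for $\pitran{}$, by Lemma~\ref{lem:pitran rectran} and the remark following Lemma~\ref{lem:wred wrecred}. Hence $I \wbarb$, contradicting invisibility.

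So $\mu = \tau$, and we may take $R' = R$, since $R \wrectran{\hat\tau} R$ by the reflexivity of $\wrecred$. It remains to show $I' \recpresim{k} R$. Because $I \recred I'$, any weak barb of $I'$ is also a weak barb of $I$, so $I'$ is itself invisible. The induction hypothesis, applied to the pair $(I', R)$, then gives $I' \recpresim{k} R$. This establishes $I \recpresim{k+1} R$.

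Taking the intersection over all $k$ yields $I \recpresim{\omega} R$. The only point requiring care, and the one I expect to be the main obstacle, is the transfer of barbs between $\rectran{}$ and $\pitran{}$. In particular, one must check that a bound-output transition $\rectran{\Co{a}(b)}$ produced by (\textsc{Open}) still yields an output barb on $a$, so that the visible-label case really contradicts invisibility. Recursion causes no difficulty, because (\textsc{Rec}) only ever produces the labels of the unfolded term.
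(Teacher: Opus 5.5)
Your proof is correct and is essentially the standard argument behind \cite[Prop.\ 4.4]{PT20}, which the paper cites without reproving it. That argument is induction on $k$ with $I$ and $R$ universally quantified: any visible $\rectran{}$-step of $I$ is also a $\pitran{}$-step and hence a strong barb, so an invisible process can only make $\tau$-steps, these lead to invisible processes, and $R$ matches them by idling. The concern you flag about (\textsc{Open}) is settled directly by Definition~\ref{def:pi barb}, since a bound output $\Co{a}(b)$ is an output transition with subject $a$.
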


\begin{lemma}[{Cf.\ \cite[Prop.\ 4.8]{PT20}}]\label{lem:presim tau out par nu}
Let $R, S, T$ be $\pi$-calculus processes, and $a,n$ be names.
Then, for every $k$, $S \recpresim{k} R$ implies
\begin{enumerate}
\item 
$\tau.S\recpresim{k}\tau.R$
\item
$\out a \tup n.S\recpresim{k} \out a \tup n.R$
\item
$S \Par T\recpresim{k}R \Par T$
\item
$T\Par S\recpresim{k}T \Par R$
\item\label{item:presim res}
$(\nu n)S\recpresim{k}(\nu n)R$
\end{enumerate}
\end{lemma}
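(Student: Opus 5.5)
We prove the five clauses simultaneously by induction on $k$, following the structure of \cite[Prop.\ 4.8]{PT20} but for the relation $\rectran{}$ (no structural congruence, explicit symmetric rules and (\textsc{Rec})). For $k=0$ every clause is trivial, since $\recpresim{0}$ relates all processes. For $k>0$, assume $S \recpresim{k} R$ and the claim for $k-1$. For each clause we take a transition of the left-hand process and do a case analysis on the last rule of its $\rectran{}$-derivation. Since $\rectran{}$ has no (\textsc{Str}), the last rule is determined by the top-level operator.

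For clauses 1 and 2 the only transition of $\tau.S$ (resp.\ $\out a\tup n.S$) is $\tau$ (resp.\ $\out a\tup n$) to $S$. The right-hand side answers with the same prefix to $R$. We need $S \recpresim{k-1} R$, which follows from $S\recpresim{k}R$ because the relations $\recpresim{k}$ decrease in $k$ (a routine induction on $k$).

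For clause 5, a transition of $(\nu n)S$ arises from (\textsc{Res}) or (\textsc{Open}) applied to $S \rectran{\mu} S'$. From $S \recpresim{k} R$ we get $R \wrectran{\hat\mu} R'$ with $S'\recpresim{k-1}R'$. We lift this weak transition through $\nu n$, using (\textsc{Res}) on each $\tau$ step and on the visible step, or (\textsc{Open}) for the visible step if $\mu = \out a\tup n$. The side condition $n\notin \n(\mu)$ is the same on both sides. After a bound output the residuals are unrestricted, and in the (\textsc{Res}) case the induction hypothesis for clause 5 at $k-1$ gives $(\nu n)S'\recpresim{k-1}(\nu n)R'$.

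Clauses 3 and 4 are the main work and are symmetric, so we treat clause 3. There are three subcases. (i) $S$ moves alone by (\textsc{Par-L}): $R$ answers weakly, we lift each step by (\textsc{Par-L}), and we close with the induction hypothesis. (ii) $T$ moves alone by (\textsc{Par-R}): $R\Par T$ makes the same move and we use $S\recpresim{k-1}R$ together with the induction hypothesis. (iii) There is a communication via (\textsc{Com}) or (\textsc{Close}), either way round. Here $S$ contributes a visible action $\mu$ and $R$ matches it weakly as $R\wrecred R_1\rectran{\mu}R_2\wrecred R'$. We prefix the $\tau$ steps with (\textsc{Par-L}), synchronise $R_1$ with $T$, and postfix the remaining $\tau$ steps; the residual is $\nu b.(R'\Par T')$ in the (\textsc{Close}) case. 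For the (\textsc{Close}) case we combine the induction hypotheses for clauses 3 and 5 at $k-1$.

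The expected obstacle is the side conditions on bound names, namely $\bn(\mu)\cap\fn(T)=\emptyset$ in (\textsc{Par-L}) and $b\notin\fn(T)$ in (\textsc{Close}), when the matching transitions come from $R$ rather than $S$. Since $\fn(R)$ may differ from $\fn(S)$, we will $\alpha$-rename the extruded name in $R$'s weak derivation to one that is fresh for $T$. This is admissible because $\rectran{}$ is closed under $\alpha$-conversion of bound names, as $=_\alpha$ is built into the calculus. Early input labels avoid any substitution issues. Recursion does not interfere, because (\textsc{Rec}) only unfolds processes inside $S$, $R$ or $T$ and the contexts here are fixed; the only place it matters is in the transitions of $T$, which we copy verbatim.
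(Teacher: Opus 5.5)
Your overall plan (simultaneous induction on $k$, with a case analysis on the last rule of the $\rectran{}$-derivation) is the standard argument of \cite[Prop.\ 4.8]{PT20}, to which the paper defers without giving its own proof. Clauses 1, 2 and 5 are fine. The gap is in how you handle bound names in clauses 3 and 4: you locate the obstacle in the wrong place. The side conditions you single out are harmless. In (\textsc{Par-L}) and (\textsc{Close-L}) the matching step of $R$ carries the same label $\mu$ and is composed with the same $T$, so $\bn(\mu)\cap\fn(T)=\emptyset$ and $b\notin\fn(T)$ transfer verbatim. Renaming the name extruded by $R$ would actually change a visible label and break the match. The conditions that really fail are the ones mentioning $R$, which arise when $T$ extrudes a name. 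In your case (ii), if $T\rectran{\out c(b)}T'$, then (\textsc{Par-R}) for $S\Par T$ needs only $b\notin\fn(S)$, but for $R\Par T$ it needs $b\notin\fn(R)$, and in general $\fn(R)\not\subseteq\fn(S)$. Since the challenge fixes the label, no $\alpha$-renaming can repair this. Concretely, take $S=\nil$, $R=\out b\tup b.\nil$ and $T=\nu b.\out c\tup b.\nil$. Then $S\recpresim{1}R$ holds vacuously and $S\Par T\rectran{\out c(b)}\nil\Par\nil$, but $R\Par T$ has no $\tau$-moves and cannot perform $\out c(b)$, because $b\in\fn(R)$. So ``$R\Par T$ makes the same move'' is false. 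With $\recpresim{k}$ read literally, with no freshness convention on bound names, clauses 3 and 4 do not even hold.

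What is missing is the usual $\pi$-calculus freshness proviso: in $Q\recpresim{k}P$, only challenges $Q\rectran{\mu}Q'$ with $\bn(\mu)\cap\fn(P)=\emptyset$ are considered (equivalently, a Barendregt-style convention on labels). With this proviso, case (ii) goes through. However, the two communications with extrusion then need genuine renaming, and it is the name extruded in the challenger's derivation that must be renamed, not the one in $R$'s. In (\textsc{Close-L}), the name $b$ extruded by $S$ may lie in $\fn(R)$. You must therefore rename it to a fresh $b'$, in both $S$'s bound output and $T$'s early input, before invoking $S\recpresim{k}R$. In (\textsc{Close-R}), the name $b$ extruded by $T$ may lie in $\fn(R_1)$. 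You rename it in $T$ and use the early-semantics fact that $S\rectran{ab}S'$ with $b\notin\fn(S)$ implies $S\rectran{ab'}S'\subst{b'}{b}$; then (\textsc{Close-R}) applies to $R_1\Par T$. Finally, both steps require $\recpresim{k}$ to be invariant under $=_\alpha$ on residuals, and this is not built in. The relation $\rectran{}$ omits (\textsc{Str}), which is the only place where $=_\alpha$ enters. So you must either work with processes up to $\alpha$-equivalence or prove this invariance.
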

The next lemma is needed to deal with recursive processes.
\begin{lemma}[{Cf.\ \cite[Lem.\ 4.4]{PT20}}]\label{lem:presim par}
Let $R_1, R_2, S_1, S_2$ be $\pi$-calculus processes. Then, for every $k$, $S_1\recpresim{k}S_2$ and $R_1\recpresim{k}R_2$ imply $S_1 \Par R_1\recpresim{k}S_2 \Par R_2$.
\end{lemma}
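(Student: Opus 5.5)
We prove the statement by a simultaneous induction on $k$. For $k=0$ there is nothing to show, since $\presim{0}$ (and hence $\recpresim{0}$) relates every pair of processes. For $k>0$, assume $S_1\recpresim{k}S_2$ and $R_1\recpresim{k}R_2$, and suppose $S_1 \Par R_1 \rectran{\mu} T$. Because $\rectran{}$ has no (\textsc{Str}) rule, the last rule of the derivation must be one of (\textsc{Par-L}), (\textsc{Par-R}), (\textsc{Com-L}), (\textsc{Com-R}), (\textsc{Close-L}) or (\textsc{Close-R}). We treat each case, using the inductive hypothesis at $k-1$ on the residuals.

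\emph{Par cases.} Suppose $S_1 \rectran{\mu} S_1'$ with $\bn(\mu)\cap\fn(R_1)=\emptyset$ and $T = S_1'\Par R_1$. Then $S_2 \wrectran{\hat\mu} S_2'$ with $S_1'\recpresim{k-1}S_2'$. We lift this weak transition through $\Par R_2$ by applying (\textsc{Par-L}) at each step. If $\mu$ is a bound output $\out a(b)$, we first $\alpha$-convert $b$ so that it is fresh for $R_2$ as well. Since $R_1\recpresim{k}R_2$ implies $R_1\recpresim{k-1}R_2$ (the relations are decreasing in $k$, by a trivial induction), the inductive hypothesis gives $S_1'\Par R_1\recpresim{k-1}S_2'\Par R_2$. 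The (\textsc{Par-R}) case is symmetric.

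\emph{Com and Close cases.} Suppose $S_1\rectran{\out a\tup v}S_1'$, $R_1\rectran{av}R_1'$, and $\mu=\tau$. From the hypotheses we get $S_2\wrectran{}\rectran{\out a\tup v}\wrectran{}S_2'$ and $R_2\wrectran{}\rectran{av}\wrectran{}R_2'$, with residuals related at level $k-1$. We interleave these: first run the $\tau$ prefixes of both sides via (\textsc{Par-L})/(\textsc{Par-R}), then synchronise with (\textsc{Com-L}), then run the $\tau$ suffixes. This gives $S_2\Par R_2\wrectran{}S_2'\Par R_2'$, and the inductive hypothesis closes the case. The early semantics matters here: the input label on the $R_2$ side carries the same object $v$, so the two transitions match exactly.

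For (\textsc{Close}), $S_1\rectran{\out a(b)}S_1'$ and $R_1\rectran{ab}R_1'$ with $b\notin\fn(R_1)$, and $T=\nu b.(S_1'\Par R_1')$. We choose $b$ fresh for $R_2$ and $S_2$ by $\alpha$-conversion. The matching steps then combine through (\textsc{Close-L}), with the trailing $\tau$ steps performed under $\nu b$ by (\textsc{Res}). To finish we need $\nu b.(S_1'\Par R_1')\recpresim{k-1}\nu b.(S_2'\Par R_2')$. This follows from the inductive hypothesis together with Lemma~\ref{lem:presim tau out par nu}(\ref{item:presim res}).

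The main obstacle is name hygiene in the Close and bound-output Par cases. The bound name extruded on the right must be fresh for the other component. The intermediate $\tau$-steps of the weak transitions can also change free names; they only ever shrink them, but that still has to be checked. We handle this by choosing bound names fresh for all four processes up front, relying on $\alpha$-convertibility and the fact that $\recpresim{k}$ is closed under injective renaming of names not occurring in the processes. A second, minor point is that the weak transition of one component contains no visible action, so it can run under $\Par$ without interfering with the other component.
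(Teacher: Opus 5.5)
Your proof is correct and is the standard argument that the paper delegates to \cite[Lem.~4.4]{PT20} (it gives no proof of its own): induction on $k$ using downward monotonicity of $\recpresim{k}$, with a case analysis on the (\textsc{Par}), (\textsc{Com}) or (\textsc{Close}) rule that derives the move of $S_1 \Par R_1$. Your handling of the $\pi$-specific (\textsc{Close}) case is the right adaptation: you choose the extruded name fresh for all four processes, note that $\tau$-steps never add free names, and conclude with Lemma~\ref{lem:presim tau out par nu}(\ref{item:presim res}); this relies on the usual bound-name freshness convention, which the paper's definition of $\presim{k}$ leaves implicit.
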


%

\begin{lemma}[{Cf.\ \cite[Lem.\ 4.3]{PT20}}]\label{lem:presim component sum}
Let $S$ and $\sum_I \pi_i.R_i$ be $\pi$-calculus processes.
Then, for every $k$ and every $j \in I$, if $S \recpresim{k} \pi_j.R_j$ then
$S \recpresim{k} \sum_I \pi_i.R_i$.
\end{lemma}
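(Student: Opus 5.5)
We need to prove Lemma~\ref{lem:presim component sum}: if $S \recpresim{k} \pi_j.R_j$ then $S \recpresim{k} \sum_I \pi_i.R_i$.

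The plan is a direct argument on the definition of $\recpresim{k}$, by cases on $k$. For $k=0$ the claim is immediate, since $\recpresim{0}$ relates all processes. So let $0<k<\omega$ and suppose $S \recpresim{k} \pi_j.R_j$. We must show that whenever $S \rectran\mu S'$ there is $T'$ with $\sum_I \pi_i.R_i \wrectran{\hat\mu} T'$ and $S' \recpresim{k-1} T'$.

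From the hypothesis we get $T''$ with $\pi_j.R_j \wrectran{\hat\mu} T''$ and $S' \recpresim{k-1} T''$. It suffices to show that every weak transition $\pi_j.R_j \wrectran{\hat\mu} T''$ is matched by $\sum_I \pi_i.R_i \wrectran{\hat\mu} T''$, with the same target; we then take $T' = T''$. We split on whether the weak transition is empty.

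\textbf{Empty case.} If $\mu=\tau$ and the weak transition consists of zero steps, then $T'' = \pi_j.R_j$, and the sum cannot trivially reach it. Here we exploit the fact that a prefix $\pi_j.R_j$ can only move via its prefix, so each of its transitions is also a transition of the sum. We argue by a side induction on $k$ that $S' \recpresim{k-1} \pi_j.R_j$ implies $S' \recpresim{k-1} \sum_I \pi_i.R_i$. This is exactly the statement being proved at level $k-1$, so the cleanest presentation is an outer induction on $k$ for the whole lemma. We then take $T' = \sum_I \pi_i.R_i$ with zero steps.

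\textbf{Non-empty case.} If the weak transition has at least one step, its first step is $\pi_j.R_j \rectran{\lambda} U$. Under $\rectran{}$, with no structural congruence, this can only come from \textsc{In}, \textsc{Out} or \textsc{Tau}, possibly wrapped by \textsc{Open} if an output of a restricted name is considered (it is not, since the term is a plain prefix). Rule \textsc{Sum} then gives $\sum_I \pi_i.R_i \rectran{\lambda} U$, and the remaining steps are unchanged. This yields $\sum_I \pi_i.R_i \wrectran{\hat\mu} T''$ with $S' \recpresim{k-1} T''$.

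For $k=\omega$ the result follows by applying the finite case for every $k$.

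The main obstacle is the empty weak-move case, where the right-hand side does not move and the targets differ. Organising the proof as an induction on $k$ with the lemma itself as the induction hypothesis handles it. We should also check that the absence of (\textsc{Str}) in $\rectran{}$ means the prefix's initial transitions are exactly those given by the prefix rules, so \textsc{Sum} applies directly. This is why the lemma is stated for $\recpresim{k}$ rather than $\presim{k}$.
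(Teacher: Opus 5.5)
Your proof is correct. The paper gives no argument of its own for this lemma and simply defers to \cite[Lem.\ 4.3]{PT20}, and your induction on $k$ is the standard argument: you copy the first step of a non-empty weak response of $\pi_j.R_j$ to $\sum_I \pi_i.R_i$ via (\textsc{Sum}), and you invoke the lemma at level $k-1$ when the response to a $\tau$ is empty. (Your closing remark is slightly off: the same argument also works for $\presim{k}$ via Lemma~\ref{lem:pitran rectran}, and the paper uses $\rectran{}$ so that the surrounding induction on contexts in Lemma~\ref{lem:pi rectran context invisible} can avoid (\textsc{Str}), not because this lemma needs it.)
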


Processes are closed terms with no free recursion variables.  We extend Definition~\ref{def:presim} to allow terms with free variables:
\begin{definition}[{Cf. \cite[Def.\ 4.2]{PT20}}]
Let $E,F$ range over $\pi$-calculus terms with at most $X$ free (i.e.\ $X$ is possibly free but there are no other free recursion variables).
For $0 \leq k \leq \omega$ define $E \recpresim{k} F$ iff for every process $R$ we have $E\{R/X\} \recpresim{k} F\{R/X\}$.
\end{definition}

\begin{lemma}[{Cf.\ \cite[Lem.\ 4.5]{PT20}}]\label{lem:presim rec unfold}
For every term $R$  with at most $X$ free, $R\{\rec X R/ X\} \recpresim{\omega} \rec X R$.
\end{lemma}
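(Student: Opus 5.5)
The statement to prove is Lemma~\ref{lem:presim rec unfold}: $R\{\rec X R/X\} \recpresim{\omega} \rec X R$ for every term $R$ with at most $X$ free. The plan is to show, by induction on $k$, that $R\{\rec X R/X\}\recpresim{k}\rec X R$ for every $k$. By definition, we must establish this after substituting an arbitrary process $S$ for any remaining free occurrences of $X$. Since $X$ is bound in $\rec X R$, the only free recursion variable in $R\{\rec X R/X\}$ is none. So both sides are closed processes and the substitution is vacuous; it suffices to treat closed processes.

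The key observation is that in the transition relation $\rectran{}$ (no structural congruence), rule (\textsc{Rec}) is the \emph{only} rule deriving transitions of $\rec X R$. Its premise is exactly a transition of $R\{\rec X R/X\}$. Hence $R\{\rec X R/X\}\rectran{\mu} R'$ if and only if $\rec X R \rectran{\mu} R'$, with the same target $R'$.

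The induction then runs as follows. The case $k=0$ holds trivially by clause~1 of Definition~\ref{def:presim}. For $k>0$, suppose $R\{\rec X R/X\}\rectran{\mu} R'$. Rule (\textsc{Rec}) gives $\rec X R\rectran{\mu} R'$, and therefore $\rec X R \wrectran{\hat\mu} R'$, since a single strong step is a weak step with empty $\tau$-sequences around it. It remains to check that $R' \recpresim{k-1} R'$. This follows from reflexivity of $\recpresim{j}$ for every $j$, a separate straightforward induction: each move $T\rectran{\mu}T'$ is answered by the same move, and then the inductive hypothesis applies to $T'$. Since the bound holds for all $k$, we get $\recpresim{\omega}$.

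The main obstacle is making sure no side conditions interfere, so I will check two points explicitly. First, labels with bound names: the answering move has the identical label and identical target, so nothing needs to be $\alpha$-renamed. Second, the restriction to closed terms and guardedness: guardedness plays no role here, because we only unfold once. If the authors intend $\recpresim{k}$ on open terms to quantify over substitutions of $X$, I will remark that $X$ is not free on either side, so the statement reduces to the closed case handled above.
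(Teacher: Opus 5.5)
Your proof is correct and is essentially the intended argument: the paper only points to the CCS proof in \cite{PT20} and says it carries over, and your observation supplies that transfer directly. In the $\rectran{}$ system (no (\textsc{Str})), rule (\textsc{Rec}) turns every move $R\{\rec{X}{R}/X\}\rectran{\mu}R'$ into a move $\rec{X}{R}\rectran{\mu}R'$ with the identical target, and reflexivity of each $\recpresim{k}$ then closes the induction (only this direction of your ``iff'' is needed, since $\recpresim{k}$ only asks $\rec{X}{R}$ to answer moves of the unfolding).
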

\begin{proof}
This is shown in~\cite{PT20} for CCS.
The proof is similar for the $\pi$-calculus with recursion.
\end{proof}


As observed in~\cite{PT20}, in general input prefix does not preserve~$\recpresim\omega$.
Nevertheless the case for input contexts in the proof of Lemma~\ref{lem:pi rectran context invisible} goes through (Lemma~\ref{lem:context input}). 
\begin{lemma}[{Cf.\ \cite[Lem.\ 4.7]{PT20}}]\label{lem:invisible sub}
Let $I$ be an invisible process and $\sigma$ be a substitution. Then $I\sigma$ is an invisible process.
\end{lemma}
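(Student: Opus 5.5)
We must prove: if $I$ is invisible and $\sigma$ a substitution, then $I\sigma$ is invisible. The plan is to argue by contraposition, showing that every weak computation of $I\sigma$ reaching a barb can be pulled back to a weak computation of $I$ reaching a barb. Visibility is checked through weak barbs, so by Lemmas~\ref{lem:pitran rectran} and~\ref{lem:wred wrecred} we may work with $\rectran{}$ instead of $\pitran{}$; this avoids structural congruence in derivations.

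\textbf{Step 1 (pulling back a single transition).} The key auxiliary claim is an inversion lemma for substitution. If $I\sigma \rectran{\mu'} S'$, then there are $\mu$ and $S$ such that $I \rectran{\mu} S$, with $S' = S\sigma'$ and $\mu' = \mu\sigma'$ up to $\alpha$-conversion of bound names. Here $\sigma'$ is $\sigma$ possibly extended on a received name. We will additionally require that the subject of $\mu$ is a channel whenever $\mu'$ is visible, and that $\mu=\tau$ whenever $\mu'=\tau$, with one exception: a $\tau$ of $I\sigma$ may come from a communication that exists only because $\sigma$ identifies two distinct names. In that case $I$ itself has two complementary visible transitions on distinct subjects. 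In particular $I$ has a strong barb, contradicting invisibility. The lemma is proved by induction on the derivation of $I\sigma \rectran{\mu'} S'$, with $\sigma$ chosen capture-avoiding for bound names.

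\textbf{Step 2 (induction on the length of the computation).} Suppose $I\sigma \wrecred T' \barb \alpha$. We argue by induction on the number of $\tau$ steps that each intermediate process has the form $T\sigma''$ with $I \wrecred T$. Every such $T$ is still invisible, since it is a $\tau$-derivative of an invisible process. Every $\tau$ step of $T\sigma''$ is therefore either a genuine $\tau$ step of $T$, or it would require $T$ to have a barb, which is impossible. At the end, the barb of $T'$ is pulled back by Step 1 to a visible transition of $T$, so $I \wbarb$, a contradiction.

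\textbf{Main obstacle.} The hard part is the case in Step 1 where $\sigma$ merges names and so creates new communications (\textsc{Com} or \textsc{Close}), together with the interaction with restriction and the side conditions of \textsc{Open}. For merging, we will use the observation above: the components had visible actions in $I$. For restriction, we will insist, via $\alpha$-conversion, that $\sigma$ does not touch bound names, so \textsc{Res} and \textsc{Open} commute with the substitution. Recursion is routine, since $(\rec X R)\sigma = \rec X (R\sigma)$ and unfolding commutes with name substitution.
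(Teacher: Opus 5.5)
Your proof is correct. The paper gives no argument of its own for this lemma: it only defers to \cite[Lem.~4.7]{PT20}, transferred from replication to recursion. Your route is a sound self-contained proof along the standard lines. It rests on an inversion lemma: every transition of $I\sigma$ is the $\sigma$-image of a transition of $I$, except for $\tau$-steps created by merging two distinct free subjects, which would already give a $\tau$-derivative of $I$ a barb. You then add induction on the length of the $\tau$-computation, and the \textsc{Rec} case, the only adaptation needed here, is handled correctly.

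One detail to make explicit when you write it out: in the \textsc{Com}/\textsc{Close} cases, state the inversion for early inputs uniformly in the received name rather than via an extended $\sigma'$. That way the pulled-back input can be made to receive exactly the object sent by the pulled-back output.
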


\begin{lemma}[{Cf.\ \cite[Lem.\ 4.8]{PT20}}]\label{lem:context sub}
Let $C$ be a context, $R$ a process and $\sigma$ a substitution such that $\bn(C[R]) \inter \n(\sigma) = \emptyset$. Then $C[R]\sigma =C \sigma[R\sigma]$.
\end{lemma}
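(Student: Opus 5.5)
We prove the substitution lemma by structural induction on the context $C$. The hypothesis $\bn(C[R]) \inter \n(\sigma) = \emptyset$ forbids $\sigma$ from touching any name bound along the path from the root of $C$ to the hole, and forbids it from touching names bound inside $R$. This is what makes substitution commute with plugging: no capture and no renaming are ever needed. The induction is legitimate because the hypothesis is inherited by subcontexts and by $R$. Indeed, if $C = \opr(\ldots, C', \ldots)$, then $\bn(C'[R]) \subseteq \bn(C[R])$.

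\emph{Base case} $C = \hol$. We have $C[R]\sigma = R\sigma = \hol[R\sigma] = C\sigma[R\sigma]$, where $\hol\sigma = \hol$.

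\emph{Cases where $\sigma$ simply distributes.} These are the parallel, sum and recursion cases:
\[ C = C' \Par S,\quad C = S \Par C',\quad C = \pi_j.C' + \Sigma_{i\neq j}\pi_i.S_i,\quad C = \rec X C'. \]
Recursion variables are not names, so $\rec X$ binds nothing relevant to $\sigma$. In each case $\sigma$ distributes over the operator by definition of substitution, and the induction hypothesis gives $C'[R]\sigma = C'\sigma[R\sigma]$.

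\emph{Prefixes $\tau$ and output $\out a\tup b$.} Here $(\pi.C'[R])\sigma = \pi\sigma.(C'[R]\sigma)$, and we conclude by the induction hypothesis.

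\emph{Binding cases: input $a(x).C'$ and restriction $\nu x.C'$.} This is the step needing care. In general, applying $\sigma$ under a binder may require $\alpha$-renaming $x$ to avoid capture, and a renaming would alter the hole's environment and break syntactic equality with $C\sigma[R\sigma]$.

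By hypothesis $x \in \bn(C[R])$, so $x \notin \n(\sigma)$. Hence $x$ is neither in the domain of $\sigma$ nor in its range, and no capture can occur. We therefore get
\[ (a(x).C'[R])\sigma = a\sigma(x).(C'[R]\sigma), \qquad (\nu x.C'[R])\sigma = \nu x.(C'[R]\sigma), \]
with no renaming. The induction hypothesis on $C'$ then gives $C'\sigma[R\sigma]$, and reassembling the prefix or restriction yields $C\sigma[R\sigma]$. Throughout, $C\sigma$ is defined as substitution on the context syntax with $\hol\sigma = \hol$.

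The main obstacle is exactly this binder case: making sure the definition of substitution, taken up to $\alpha$-conversion, performs no renaming. I would state explicitly that capture-avoiding substitution acts homomorphically under $a(x)$ and $\nu x$ whenever $x \notin \n(\sigma)$, and that the disjointness hypothesis guarantees this at every binder enclosing the hole. Binders in $R$ itself are handled the same way when computing $R\sigma$, though that only matters on the right-hand side.
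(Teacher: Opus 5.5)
Your proof is correct and takes essentially the same approach as the paper, which gives no proof of its own and only cites \cite[Lem.\ 4.8]{PT20}. The standard argument there is the one you give: structural induction on $C$, with the hypothesis $\bn(C[R]) \inter \n(\sigma) = \emptyset$ inherited by subcontexts, so that no $\alpha$-renaming is needed at the binders $a(x)$ and $\nu x$ enclosing the hole and substitution commutes with plugging.
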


\begin{lemma}\label{lem:context input}
Let $C$ be a context, $I$ an invisible process and $R$ a process. For every $k$,
if for every name $n$ we have $C[I]\{n/x\} \recpresim{k} C[R]\{n/x\}$ then $a(x).C[I] \recpresim{k} a(x).C[R]$.
\end{lemma}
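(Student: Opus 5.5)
The plan is to prove $a(x).C[I] \recpresim{k} a(x).C[R]$ by a direct case analysis on $k$, unfolding the definition of $\recpresim{k}$ only at the outermost input prefix. For $k=0$ there is nothing to prove, because $\recpresim{0}$ relates all pairs. For $0<k<\omega$, we inspect the transitions of $a(x).C[I]$ under $\rectran{}$. Since (\textsc{Str}) is not part of $\rectran{}$ and the process is a bare input prefix, the only applicable rule is (\textsc{In}). Hence the only transitions are $a(x).C[I] \rectran{an} C[I]\{n/x\}$, one for each name $n$.

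Given such a transition, we answer with the matching input move $a(x).C[R] \rectran{an} C[R]\{n/x\}$, which is a single strong step and therefore also a weak $\wrectran{\widehat{an}}$ step. It then remains to show $C[I]\{n/x\} \recpresim{k-1} C[R]\{n/x\}$. The hypothesis gives $C[I]\{n/x\} \recpresim{k} C[R]\{n/x\}$ for every $n$. So the argument is closed by the downward monotonicity of the approximants: $\recpresim{k}$ is contained in $\recpresim{k-1}$.

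Monotonicity is proved by a small induction on $k$.
\begin{itemize}
\item The base case is $\recpresim{1} \subseteq \recpresim{0}$, which is trivial.
\item For the inductive step, suppose $Q \recpresim{k} P$ and $Q \rectran{\mu} Q'$. Then there is $P'$ with $P \wrectran{\hat\mu} P'$ and $Q' \recpresim{k-1} P'$.
\item By the induction hypothesis, $Q' \recpresim{k-2} P'$, which is exactly the clause required for $Q \recpresim{k-1} P$.
\end{itemize}

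The one subtle point is the treatment of bound names and $\alpha$-conversion in the input step. Under early semantics the received name $n$ may clash with names bound in $C$. The plan is to $\alpha$-convert $C$ beforehand so that its bound names avoid $n$ and $x$. This is where Lemmas~\ref{lem:context sub} and~\ref{lem:invisible sub} come in: they make $C[I]\{n/x\}$ well defined as $C\{n/x\}[I\{n/x\}]$, with $I\{n/x\}$ still invisible. Here, however, they are only needed for the statement to be well posed, since the hypothesis is already phrased in terms of the substituted processes.

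The inputs of the two processes match label for label, so the free-name side conditions coincide. I therefore expect the main obstacle to be only bookkeeping: ensuring that no other transitions of the prefix exist under $\rectran{}$, and checking the monotonicity step.
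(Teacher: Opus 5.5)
Your proof is correct and is essentially the argument the paper delegates to \cite[Lem.~4.9]{PT20}. The only moves of $a(x).C[I]$ are the early inputs $a(x).C[I] \rectran{an} C[I]\{n/x\}$, which you match with the same input of $a(x).C[R]$, and the hypothesis closes the step once it is weakened from level $k$ to level $k-1$ by monotonicity; as you observe, Lemmas~\ref{lem:context sub} and~\ref{lem:invisible sub} are only needed to discharge that hypothesis when the lemma is applied, and the case $k=\omega$ follows at once from the finite cases.
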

\begin{proof}
Use Lemmas~\ref{lem:context sub} and~\ref{lem:invisible sub} as in the proof of \cite[Lem. 4.9]{PT20}.
\end{proof}

\begin{lemma}\label{lem:context rec}
Let $C$ be a context with at most $X$ free, $I$ an invisible process and $R$ a process. If for every substitution $\sigma$ we have $C[I]\sigma \recpresim{\omega} C[R]\sigma$ then $\rec X C[I] \recpresim{\omega} \rec X C[R]$.
\end{lemma}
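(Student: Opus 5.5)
We prove the statement by coinduction on $\recpresim{k}$. We show by induction on $k$ that for every $k<\omega$ we have $\rec X C[I] \recpresim{k} \rec X C[R]$. Since the hypothesis quantifies over all substitutions $\sigma$, we instantiate it with the substitution of the recursive processes themselves. Let $\sigma_I = \{\rec X C[I]/X\}$ and $\sigma_R = \{\rec X C[R]/X\}$. The hypothesis only compares $C[I]\sigma$ with $C[R]\sigma$ under the \emph{same} $\sigma$, so we need an intermediate step to pass from $\sigma_I$ to $\sigma_R$.

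Base case. $k=0$ is trivial.

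Inductive step. Assume $\rec X C[I] \recpresim{k-1} \rec X C[R]$. A transition of $\rec X C[I]$ arises by (\textsc{Rec}) from a transition of $C[I]\sigma_I$, so it suffices to establish
\[
C[I]\sigma_I \recpresim{k} C[I]\sigma_R \recpresim{\omega} C[R]\sigma_R \recpresim{\omega} \rec X C[R].
\]
We argue each link in turn.
\begin{enumerate}
\item The middle link is the hypothesis with $\sigma=\sigma_R$.
\item The last link is Lemma~\ref{lem:presim rec unfold}.
\item Composing the links uses Lemma~\ref{lem:presim k omega}.
\end{enumerate}
From these three links the chain gives $C[I]\sigma_I \recpresim{k} \rec X C[R]$. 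Since the transitions of $\rec X C[I]$ are exactly those of $C[I]\sigma_I$, we conclude $\rec X C[I]\recpresim{k}\rec X C[R]$.

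The remaining link, and the main obstacle, is $C[I]\sigma_I \recpresim{k} C[I]\sigma_R$. Here the term $E = C[I]$ has free variable $X$, and we are replacing $X$ by a process that is $\recpresim{k-1}$-related (by the induction hypothesis) under a guard. We prove a substitution lemma: for any term $E$ with $X$ guarded, if $S \recpresim{k-1} T$ then $E\{S/X\}\recpresim{k} E\{T/X\}$. The proof is by an inner induction on $k$ and on the structure of $E$, following \cite[Lem.\ 4.5--4.6]{PT20}. Guardedness ensures that the first transition of $E\{S/X\}$ comes from a prefix of $E$ itself. The derivative then has the form $E'\{S/X\}$ with $X$ possibly unguarded, so a $k-1$ comparison suffices there. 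The needed closure properties are supplied as follows:
\begin{enumerate}
\item Closure under $\tau$, output, parallel composition and restriction is given by Lemmas~\ref{lem:presim tau out par nu} and~\ref{lem:presim par}.
\item Sums are handled by Lemma~\ref{lem:presim component sum}.
\item Input prefixes are handled by noting that $\{S/X\}$ commutes with name substitutions, since $S$ and $T$ are closed w.r.t.\ recursion variables. A name substitution applied to the related pair $S$, $T$ keeps them related, by the same argument as Lemma~\ref{lem:context input}, using Lemma~\ref{lem:invisible sub}.
\end{enumerate}

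If this generic substitution lemma proves too delicate for input prefixes, the fallback is to strengthen the induction. We would prove directly, by induction on $k$, the claim for all pairs $(C[I]\sigma_I\theta, C[R]\sigma_R\theta)$ with $\theta$ a name substitution. We then analyse the derivation of each transition, using rule (\textsc{Rec}) only where a guard is consumed.
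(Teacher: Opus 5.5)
Your route is genuinely different from the paper's. The paper proves one strengthened claim, $G\{\rec{X}{C[I]}/X\}\sigma \recpresim{k} G\{\rec{X}{C[R]}/X\}\sigma$, for \emph{all} terms $G$ with at most $X$ free and all name substitutions $\sigma$. It argues by induction on $k$ and on the depth of the inference, and uses the hypothesis inside the case $G = X$, i.e.\ at every unfolding and under whatever $\sigma$ has accumulated. That is why the hypothesis is needed for every $\sigma$. You make the same two replacements (recursive copies first, then $I$ by $R$) once, at the top, via the chain $C[I]\sigma_I \recpresim{k} C[I]\sigma_R \recpresim{\omega} C[R]\sigma_R \recpresim{\omega} \rec{X}{C[R]}$. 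That chain is sound. The middle link is the hypothesis at the single instance $X := \rec{X}{C[R]}$, the last link is Lemma~\ref{lem:presim rec unfold}, Lemma~\ref{lem:presim k omega} composes them, and (\textsc{Rec}) transfers the result to $\rec{X}{C[I]}$. Your version gains modularity, since the substitution step becomes a reusable precongruence lemma for a guarded variable. Given capture-avoiding substitution (see below), it also uses the hypothesis more weakly. The cost is that the substitution lemma itself needs essentially the paper's double induction.

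The step that fails as you justify it is the input case of that lemma. $\recpresim{k}$ is not closed under name substitution. Take $S = \Co{a}\tup{e} \Par b(z).\Co{c}\tup{e}$ and $T = \Co{a}\tup{e}.b(z).\Co{c}\tup{e} + b(z).(\Co{a}\tup{e} \Par \Co{c}\tup{e})$; then $S \recpresim{\omega} T$. However, $S\subst{a}{b} \rectran{\tau} \nil \Par \Co{c}\tup{e}$, and $T\subst{a}{b}$ has no $\tau$ and no initial $c$-output, so it cannot match. Hence $S\subst{a}{b} \not\recpresim{2} T\subst{a}{b}$. Lemmas~\ref{lem:context input} and~\ref{lem:invisible sub} concern only invisible processes and pairs of the form $C[I]$, $C[R]$, so they say nothing about an arbitrary related pair. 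If name substitutions really reached $S$ and $T$, your lemma as stated would be false.

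The correct argument is that substitution for $X$ is capture-avoiding. The name $b$ bound by an input prefix of $E$ can be chosen fresh for both $S$ and $T$. The derivatives are then $E'\subst{v}{b}\subst{S}{X}$ and $E'\subst{v}{b}\subst{T}{X}$, and you apply the induction hypothesis on the index to the term $E'\subst{v}{b}$, leaving $S$ and $T$ untouched. Your fallback over the family $\rec{X}{C[I]}\theta$, $\rec{X}{C[R]}\theta$ also works, but it is then essentially the paper's proof.

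Two further repairs are needed. First, state the lemma for arbitrary, possibly unguarded, $E$ at the \emph{same} index, since that is what the derivatives $E'\subst{S}{X}$ require, and derive the one-step gain for guarded $X$ from it. Second, do the inner induction on the depth of the derivation rather than on the structure of $E$, as the paper does, because a nested $\rec{Y}{E''}$ unfolds to a larger term; Lemma~\ref{lem:presim rec unfold} covers the empty responses.
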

\begin{proof}
We follow the proof of~\cite[Prop.\ 4.7]{PT20} for CCS.
The proof is similar for the $\pi$-calculus with recursion,
using Lemmas~\ref{lem:presim tau out par nu},~\ref{lem:presim par} and~\ref{lem:presim rec unfold}.

We show that for any term $G$ with at most $X$ free, and any substitution $\sigma$,
\[
G\{\rec X C[I]/X\}\sigma \recpresim{k} G\{\rec X C[R]/X\}\sigma
\]
for every $k$ (we need the substitution $\sigma$ to handle the case for input).
The result follows by setting $G = X$ and $\sigma$ the identity substitution.
By induction on~$k$.  The case for $k=0$ is immediate.
Assume true for $k$; we show $G\{\rec X C[I]/X\}\sigma \recpresim{k+1} G\{\rec X C[R]/X\}\sigma$.
If $G\{\rec X C[I]/X\}\sigma \rectran\mu S'$ we must show $G\{\rec X C[R]/X\}\sigma \wrectran{\hat\mu} R'$ with $S' \presim{k} R'$.
We proceed by induction on the depth of the inference of $G\{\rec X C[I]/X\}\sigma \rectran\mu S'$.
There are cases according to the structure of $G$.

$\bullet$ $G = X$.  
Any transition $G\{\rec X C[I]/X\}\sigma = \rec X C[I]\sigma \rectran\mu S'$
must have come via rule (\textsc{Rec}).
So $C[I]\{\rec X C[I]/X\}\sigma \rectran\mu S'$.
By induction on the depth of inference,
$C[I]\{\rec X C[R]/X\}\sigma \wrectran{\hat\mu}R''$ with $S' \recpresim{k} R''$.
Using $C[I]\sigma \recpresim{\omega} C[R]\sigma$ we get
\[
C[R]\{\rec X C[R]/X\}\sigma \wrectran{\hat\mu}R'
\]
with $R'' \recpresim{\omega} R'$.
By Lemma~\ref{lem:presim k omega}, $S' \recpresim{k} R'$ as required.

$\bullet$ $G = a(x).G'$.  
We must use rule (\textsc{In}).
We have
\[
G\{\rec X C[I]/X\}\sigma \rectran{\sigma(a)n} G'\{\rec X C[I]/X\}\sigma\{n/x\}\ .
\]

Also $G\{\rec X C[R]/X\}\sigma \rectran{\sigma(a)n} G'\{\rec X C[R]/X\}\sigma\{n/x\}$.
By induction hypothesis on $k$ we have
\[
G'\{\rec X C[I]/X\}\sigma\{n/x\} \recpresim{k}  G'\{\rec X C[R]/X\}\sigma\{n/x\}\ .
\]

$\bullet$ $G = \Co a \tup n.G'$. We must use rule (\textsc{Out}).
We have
\[
G\{\rec X C[I]/X\}\sigma \rectran{\sigma(\Co a)\tup n} G'\{\rec X C[I]/X\}\sigma\ .
\]

Also $G\{\rec X C[R]/X\}\sigma \rectran{\sigma(\Co a)\tup n} G'\{\rec X C[R]/X\}\sigma$.
By induction hypothesis on $k$ we have
\[
G'\{\rec X C[I]/X\}\sigma \recpresim{k}  G'\{\rec X C[R]/X\}\sigma\ .
\]

$\bullet$ $G = \tau.G'$. We must use rule (\textsc{Tau}).
Similar to the case for $G = \Co a \tup n.G'$.

$\bullet$ $G = \sum_I \pi_i.G_i$. We must use rule (\textsc{Sum}).  We have
\[
\pi_j.G_j\{\rec X C[I]/X\}\sigma \rectran{\mu} S'
\]
for some $j \in I$ by a shorter proof. 
Hence by induction on the depth of inference,
\[
\pi_j.G_j\{\rec X C[R]/X\}\sigma \wrectran{\hat{\mu}} R'
\]
with $S' \recpresim{k} R'$.
Unless $\wrectran{\hat{\mu}}$ is the empty sequence of transitions
\[
G\{\rec X C[R]/X\}\sigma \wrectran{\hat{\mu}} R'
\]
with $S' \recpresim{k} R'$ as required.
If $\wrectran{\hat{\mu}}$ is the empty sequence of transitions we have 
\[
S' \recpresim{k} R' = \pi_j.G_j\{\rec X C[R]/X\}\sigma
\]
In this case
\[
G\{\rec X C[R]/X\}\sigma \wrectran{\hat{\mu}} G\{\rec X C[R]/X\}\sigma
\]
and
\[
S' \recpresim{k} G\{\rec X C[R]/X\}\sigma
\]
by Lemma~\ref{lem:presim component sum} as required.

$\bullet$ $G = \nu a. G'$.
If $G\{\rec X C[I]/X\}\sigma \rectran\mu S'$ is deduced via rule (\textsc{Res}) then we use induction on depth of inference and Lemma~\ref{lem:presim tau out par nu}(\ref{item:presim res}).

Otherwise, $G\{\rec X C[I]/X\}\sigma \rectran\mu S'$ is deduced via rule (\textsc{Open}).  This case is straightforward.

$\bullet$ $G = G_1 \Par G_2$.  There are six possible rules by which
\[
G\{\rec X C[I]/X\}\sigma \rectran\mu S'
\]
can be deduced: (\textsc{Par-L}), (\textsc{Par-R}), (\textsc{Com-L}), (\textsc{Com-R}), (\textsc{Close-L}) and (\textsc{Close-R}).

(\textsc{Par-L}):
Here we must have $S' = S_1 \Par (G_2\{\rec X C[I]/X\}\sigma)$
where $G_1\{\rec X C[I]/X\}\sigma \rectran\mu S_1$.
By induction on depth of inference, $G_1\{\rec X C[R]/X\}\sigma \wrectran{\hat\mu} R_1$ with $S_1 \recpresim{k} R_1$.
By induction on $k$, $G_2\{\rec X C[I]/X\}\sigma \recpresim{k} G_2\{\rec X C[R]/X\}\sigma$.
We have
\[
G\{\rec X C[R]/X\}\sigma \wrectran{\hat\mu} R' = R_1 \Par (G_2\{\rec X C[R]/X\}\sigma)\ ,
\]
with $S' \recpresim{k} R'$ by Lemma~\ref{lem:presim par}.

(\textsc{Com-L}):
Here we must have $\mu = \tau$ and $S' = S_1 \Par S_2$ where
where $G_1\{\rec X C[I]/X\}\sigma \rectran{\Co a\tup v} S_1$ and $G_2\{\rec X C[I]/X\}\sigma \rectran{av} S_2$.
By induction on depth of inference, $G_1\{\rec X C[R]/X\}\sigma \wrectran{\Co a\tup v} R_1$ with $S_1 \recpresim{k} R_1$ and  $G_2\{\rec X C[R]/X\}\sigma \wrectran{av} R_2$ with $S_2 \recpresim{k} R_2$.
We have
\[
G\{\rec X C[R]/X\}\sigma \wrectran{\hat\mu} R' = R_1 \Par R_2 \ ,
\]
with $S' \recpresim{k} R'$ by Lemma~\ref{lem:presim par}.

(\textsc{Close-L}):
Here we must have $\mu = \tau$ and $S' = \nu b.(S_1 \Par S_2)$ where
where $G_1\{\rec X C[I]/X\}\sigma \rectran{\Co a(b)} S_1$ and $G_2\{\rec X C[I]/X\}\sigma \rectran{ab} S_2$.
By induction on depth of inference, $G_1\{\rec X C[R]/X\}\sigma \wrectran{\Co a(v)} R_1$ with $S_1 \recpresim{k} R_1$ and  $G_2\{\rec X C[R]/X\}\sigma \wrectran{av} R_2$ with $S_2 \recpresim{k} R_2$.
We have
\[
G\{\rec X C[R]/X\}\sigma \wrectran{\hat\mu} R' = \nu b.(R_1 \Par R_2) \ ,
\]
with $S' \recpresim{k} R'$ by Lemma~\ref{lem:presim par} and Lemma~\ref{lem:presim tau out par nu}(\ref{item:presim res}).

The symmetrical cases for (\textsc{Par-R}), (\textsc{Com-R}) and (\textsc{Close-R}) are similar to the above and omitted.

$\bullet$ $G = \rec Y G'$. We must use rule (\textsc{Rec}).
Let $G_I = G'\{\rec X C[I]/X\}\sigma$ and $G_R = G'\{\rec X C[R]/X\}\sigma$.
Then $G\{\rec X C[I]/X\}\sigma = \rec Y G_I$ and we have
$G_I\{\rec Y G_I/Y\} \rectran\mu S'$.
By induction on depth of inference, $G_R\{\rec Y G_R/Y\} \rectran{\hat\mu} R'$ with $S' \recpresim{k} R'$.
Unless $\wrectran{\hat{\mu}}$ is the empty sequence of transitions,
we have $G\{\rec X C[R]/X\}\sigma = \rec Y G_R \rectran{\hat\mu} R'$ with $S' \recpresim{k} R'$ as required.
If $\wrectran{\hat{\mu}}$ is the empty sequence of transitions,
we have $R' = G_R\{\rec Y G_R/Y\} \recpresim\omega \rec Y G_R$ by Lemma~\ref{lem:presim rec unfold}.
Then, as required, $S' \recpresim{k} \rec Y G_R$ by Lemma~\ref{lem:presim k omega}.
\end{proof}

All cases for the induction on contexts are now covered, and 
the proof of Lemma~\ref{lem:pi rectran context invisible} is now complete.

To summarise:
\CCSpiinvisiblebarb*
\begin{proof}
We can deduce Lemma~\ref{lem:CCS pi invisible barb} for CCS from Lemmas~\ref{lem:CCS context invisible} and~\ref{lem:CCS pi presim barb}.

We can deduce Lemma~\ref{lem:CCS pi invisible barb} for the $\pi$-calculus with recursion from Lemmas~\ref{lem:pi rec context invisible} and~\ref{lem:CCS pi presim barb}.

Clearly Lemma~\ref{lem:pi rec context invisible} follows from Lemmas~\ref{lem:recpresim presim} and~\ref{lem:pi rectran context invisible}.
Lemma~\ref{lem:pi rectran context invisible} was shown as a series of lemmas by induction on contexts.
\end{proof}


\subsection{Proof of Proposition \ref{prop:CCS pi strongly RF}}
\CCSpistronglyRF*
\begin{proof}
This was shown in~\cite[Thm.\ 4.1]{PT20} for CCS, and in~\cite[Thm.\ 4.2]{PT20} for the $\pi$-calculus with replication.
It is an immediate consequence of Lemma~\ref{lem:CCS pi invisible barb} for the $\pi$-calculus with recursion.
\end{proof}

\subsection{Proof of Proposition \ref{prop:CCSK strongly RF}}

As we did for the $\pi$-calculus in Section~\ref{subsec:alt pi}, in this subsection we use an alternative set of transition rules where we omit structural congruence and rules (\textsc{Str}) and (\textsc{Str}$^\bullet$) from Figure~\ref{fig:ccsk_sem_full}, and add in the symmetric versions of rules (\textsc{Par}), (\textsc{Par}$^\bullet$), (\textsc{Syn}) and (\textsc{Syn}$^\bullet$).
This leaves strong and weak barbs unchanged.
Reachable processes are also unchanged.
This will enable us to show Lemmas~\ref{lem:context invisible} and~\ref{lem:context standard} by structural induction.

\begin{lemma}
In CCSK, a subprocess of a reachable process is also reachable.
\end{lemma}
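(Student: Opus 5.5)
The claim is that if $P$ is reachable and $Q$ is a subprocess of $P$, then $Q$ is reachable. I would take ``subprocess'' to mean a syntactic subterm obtained by descending through the operators: a branch $\rho_i.P_i$ of a summation, a continuation $P$ of $\rho.P$, a component of a parallel composition, or the body of a restriction. It then suffices to treat one step of descent and close under transitivity. The argument is by induction on the length of a derivation $P_0 \tran{} P_1 \tran{} \cdots \tran{} P_n = P$ from a standard $P_0$. It uses the alternative rule set fixed just above, with no structural congruence and with symmetric $(\textsc{Par})$/$(\textsc{Syn})$ rules. Every transition is then derived syntax-directedly, so each immediate subterm of the source either stays unchanged or performs a transition itself.

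The key step is a projection property. Let $P_{j} \tran{} P_{j+1}$ be a single step, and let $Q'$ be an immediate subterm of $P_{j+1}$ with corresponding subterm $Q$ of $P_j$. Then either $Q'=Q$, or $Q \tran{} Q'$ by a forward or backward transition, possibly with a non-$\tau$ label. I would prove this by cases on the last rule. $(\textsc{Par})$ and $(\textsc{Par}^\bullet)$ move one component and leave the other unchanged. $(\textsc{Syn})$ and $(\textsc{Syn}^\bullet)$ move both components, each by a labelled transition. $(\textsc{Res})$ and $(\textsc{Res}^\bullet)$ move the body. $(\textsc{Sum})$ and $(\textsc{Sum}^\bullet)$ move one branch and leave the others unchanged. $(\textsc{Act2})$ and $(\textsc{Act2}^\bullet)$ move the continuation. $(\textsc{Act1})$ and $(\textsc{Act1}^\bullet)$ leave the continuation $Q$ unchanged, and $Q$ is standard there anyway.

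The only subtle case is the prefix. The subterm of $\mu[k].Q$ is $Q$, but $\mu.Q$ itself is not a subterm of $\mu[k].Q$, so the keyed/unkeyed toggling of the head does not disturb the correspondence. The correspondence must still be set up carefully. I would pair the subterm at a fixed syntactic position, namely the $i$-th branch or the left component, and when descending into a branch of a sum, treat $\rho_i.P_i$ as an immediate subterm. Since rules only change $\mu$ to $\mu[k]$ at the head, positions are stable along derivations.

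Given projection, I would build the derivation for the subterm by induction on $n$. At $n=0$, $P_0$ is standard, so its subterms are standard and hence reachable. For the inductive step, I take the subterm trace and delete the steps where it stayed unchanged. The remaining steps are genuine transitions of the subterm, because the reachability definition allows any rule of Figure~\ref{fig:ccsk_sem_full} with any label. The trace therefore derives $Q'$ from the standard subterm of $P_0$ at the same position.

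The main obstacle I expect is that the side conditions of the rules used in the subterm trace must hold in isolation. For example, $(\textsc{Par})$ requires $k \notin \keys{Q}$ when the subterm is itself a parallel composition. These side conditions are inherited: the premise derivation in the larger term is literally a derivation for the subterm, so its side conditions already hold. I would check this carefully in the $(\textsc{Syn}^\bullet)$ case and for restriction, where the subterm may perform a label on $a$ that the restricted context forbids. That is harmless, since the subterm is considered on its own without the restriction.
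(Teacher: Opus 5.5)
Your argument is correct, but it takes a different route from the paper. The paper proves the lemma in one line by appealing to the syntactic characterisation of reachable processes in~\cite[Prop.~3.5]{LaneseP21}. Reachability is equivalent to a list of well-formedness conditions, and one checks that these are inherited by subterms. For example, a key occurring twice in $P$ may occur only once in a subterm, which the characterisation must (and does) allow as a free key.

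You instead argue operationally. With the syntax-directed rules without (\textsc{Str}), each step of a computation from a standard $P_0$ does one of two things to the subterm at a fixed position. Either it leaves the subterm untouched, or its derivation contains, as a premise, a transition of that subterm. Projecting the computation onto that position and deleting the idle steps gives a computation of the subterm, starting from the corresponding standard subterm of $P_0$.

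\begin{itemize}
\item Your route is self-contained and does not rely on an external characterisation theorem. It also gives more information: the witnessing computation of the subterm is a projection of the original one.
\item Its one dependency is the paper's unproved remark that dropping structural congruence does not change the set of reachable processes. That is harmless here, because the lemma is stated in exactly that setting.
\item The paper's route is shorter and independent of how derivations are presented, but it outsources all the work.
\end{itemize}

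Your worry about side conditions is vacuous. The subterm only performs transitions that already occur as premises in the larger derivation. For instance, under $\nu a$ the body never performs a transition labelled $a$ or $\Co{a}$, because the premise of (\textsc{Res}) already excludes these labels.
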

\begin{proof}
This follows from the characterisation of reachable processes in~\cite[Prop.\ 3.5]{LaneseP21}.
\end{proof}
\begin{lemma}\label{lem:rch fwd-only}
In CCSK, a process $P$ is reachable iff there is a forward-only path from $\tostd P$ to $P$.
\end{lemma}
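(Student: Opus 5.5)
The statement to prove is Lemma~\ref{lem:rch fwd-only}: a CCSK process $P$ is reachable iff there is a forward-only path from $\tostd P$ to $P$.

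The direction ($\Leftarrow$) is immediate. $\tostd P$ is standard, so a forward-only path from it to $P$ is a derivation from a standard process using the rules of Figure~\ref{fig:ccsk_sem_full}. Hence $P$ is reachable by definition.

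For ($\Rightarrow$) I will take the Loop Lemma / causal-consistency route, which is standard for CCSK. I would first try to invoke the known result (cf.~\cite{PU07,LaneseP21}) that every reachable computation $P_0 \tran{} \cdots \tran{} P$ from a standard $P_0$ is causally equivalent to a forward-only one $P_0 \ftran{}^{*} P$. That equivalence comes from the Loop Lemma plus the square property, by cancelling each backward step against the forward step it undoes. Given a forward-only path $P_0 \ftran{}^{*} P$, it remains to show $P_0 \str \tostd P$. This follows because forward rules only decorate prefixes with keys and never alter the underlying term structure. Concretely, I will prove by induction on the derivation of $Q \ftran{\mu[k]} Q'$ that $\tostd{Q'} \str \tostd Q$. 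Rules (\textsc{Act1}) and (\textsc{Act2}) are immediate from the definition of $\tostd{\cdot}$. (\textsc{Sum}), (\textsc{Par}), (\textsc{Syn}) and (\textsc{Res}) follow by the inductive hypothesis and homomorphism of $\tostd{\cdot}$. (\textsc{Str}) needs the fact that $\tostd{\cdot}$ maps $\str$ to $\str$, a simple check on the rules of Figure~\ref{fig:ccsk_sem}. Iterating along the path gives $\tostd P \str \tostd{P_0} = P_0$, since $P_0$ is standard. Composing with (\textsc{Str}) then yields the required forward path from $\tostd P$.

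If I prefer to avoid citing causal consistency, the alternative is a direct induction on the length of an arbitrary (mixed) derivation from a standard process, showing the invariant ``$Q$ is forward-reachable from $\tostd Q$''. The forward-step case is the argument above. The backward-step case $Q \rtran{\mu[k]} Q'$ is the crux. Given a forward path $\tostd Q \ftran{}^{*} Q$, I must build a forward path $\tostd{Q'} \ftran{}^{*} Q'$ by removing the step with key $k$. This step is legitimate because the backward rule requires the continuation of the $k$-prefix(es) to be standard, so no later step in the path depends causally on $k$. Therefore the $k$-step can be permuted to the end of the forward path, by repeated forward--forward diamonds for concurrent transitions, and then dropped.

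The main obstacle is this permutation: it needs an independence/diamond property for forward transitions with distinct keys. I would discharge it by appealing to the characterisation of reachable processes in \cite[Prop.\ 3.5]{LaneseP21}, which states keys are well-ordered along causality with each key occurring once or as a complementary pair. With that characterisation, one can in fact give a constructive proof. Repeatedly pick a key $k$ in $P$ that is maximal, meaning its prefix has a standard continuation. Undoing it yields a reachable $P''$ with fewer keys. By induction $\tostd{P''} = \tostd P$ reaches $P''$ forward, and redoing $k$ by (\textsc{Act1}) or (\textsc{Syn}), propagated through the context, reaches $P$. I expect this constructive version to be the cleanest.
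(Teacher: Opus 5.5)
Your proposal is correct and matches the paper, whose proof simply cites the Parabolic Lemma \cite[Lemma~5.12]{PU06} (alternatively \cite[Cor.~3.6]{LaneseP21}); your first route is exactly this causal-consistency argument for a standard starting process, and your check that forward steps preserve $\tostd{\cdot}$ up to $\str$ makes explicit the identification of the initial process with $\tostd{P}$, which the paper leaves implicit. Your constructive variant essentially re-derives the cited corollary from the characterisation in \cite[Prop.~3.5]{LaneseP21}, but ``maximal'' must require \emph{every} occurrence of $k$ (both prefixes, if $k$ is bound) to have a standard continuation, and enabledness of the undo also relies on the well-formedness conditions to discharge the side conditions of (\textsc{Sum}$^\bullet$), (\textsc{Par}$^\bullet$) and (\textsc{Res}$^\bullet$).
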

\begin{proof}
This follows from the Parabolic Lemma~\cite[Lemma\ 5.12]{PU06}; also from~\cite[Cor.\ 3.6]{LaneseP21}.
\end{proof}

\begin{lemma}[cf.\ \cite{PT20}]\label{lem:invisible red}
In CCSK, if $I$ is an invisible process and $I \red I'$ then $I'$ is also invisible.
\end{lemma}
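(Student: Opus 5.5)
The plan is to argue directly from the definitions. Recall that $I$ is invisible, meaning $I \not\wbarb \alpha$ for every barb $\alpha$. Barbs in CCSK include forward barbs $\mu$ and backward barbs $\rev\mu$, each for $\mu \neq \tau$. Now suppose $I \red I'$. Reductions compose into weak reductions, so any weak reduction $I' \wred I''$ extends to $I \red I' \wred I''$, which is again a weak reduction from $I$. Hence if $I' \wbarb \alpha$ for some $\alpha$, witnessed by $I' \wred I'' \barb \alpha$, then $I \wred I'' \barb \alpha$, that is, $I \wbarb \alpha$. This contradicts the invisibility of $I$. So $I'$ has no weak barb, and is therefore invisible.

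The key steps are as follows.
\begin{enumerate}
\item Unfold Definition~\ref{def:calculus} as instantiated for CCSK. Here $\red$ is a forward or backward $\tau$ step, $\wred$ is its reflexive–transitive closure, and $P \wbarb \alpha$ means $P \wred P' \barb \alpha$.
\item Note the closure property ${\red}\,{\wred} \subseteq {\wred}$.
\item Conclude by contraposition, as described above.
\end{enumerate}
Reachability is preserved along transitions, so $I'$ is again a legitimate (reachable) process to which the definitions apply. The point at which we work with the alternative, structural-congruence-free transition rules is harmless, since the text states that these rules leave strong and weak barbs unchanged.

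There is no substantive obstacle. The only point needing care is that CCSK reductions include backward $\tau$ steps and that barbs include backward barbs $\rev\mu$. Neither affects the argument, since it uses only transitivity of $\wred$ and is uniform in $\alpha$.
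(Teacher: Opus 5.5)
Your proof is correct and matches the paper, which simply calls this lemma ``immediate''. The underlying argument is yours: a weak barb $I' \wbarb \alpha$, preceded by the step $I \red I'$, yields $I \wbarb \alpha$, which contradicts the invisibility of $I$.
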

\begin{proof}
Immediate.
\end{proof}

\begin{lemma}\label{lem:context invisible}
In CCSK, let $C[\hol]$ be a context and let $I$ be invisible.
If $C[I] \tran{\mu[k]} P$ then exactly one of the following holds:
\begin{enumerate}
\item\label{item:context}
 $P = C'[I]$ for some context $C'[\hol]$ such that $C[\hol] \tran{\mu[k]} C'[\hol]$;
\item\label{item:invisible}
 $\mu = \tau$ and $P = C[I']$ where $I \tran{\tau[k]} I'$ (and $I'$ is invisible by Lemma~\ref{lem:invisible red}).
\end{enumerate}
\end{lemma}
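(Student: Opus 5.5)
We argue by structural induction on the context $C[\hol]$, using the alternative rule set without structural congruence (and with symmetric versions of the parallel and synchronisation rules), so that every transition of $C[I]$ has a derivation whose shape follows the syntax of $C$. Here $C'[\hol]$ denotes the context obtained by performing on $C$ the same rule instance while treating $\hol$ as an inert placeholder. First we settle how $\hol$ behaves as a placeholder. It is natural to let $\hol$ count as standard and keyless, so that $\std{\cdot}$ and $\keys{\cdot}$ extend to contexts, and to allow no transitions from $\hol$ itself. Because $I$ is invisible, it has no forward or backward barbs. Hence any transition of $I$ is a $\tau$ step, forward or backward. Moreover, $I$ can never contribute a visible label to a synchronisation.

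For the base case $C = \hol$, every transition $I \tran{\mu[k]} P$ must have $\mu = \tau$, since $I$ has no barbs. So case~\ref{item:invisible} holds with $I' = P$. Case~\ref{item:context} fails because $\hol$ has no transitions. For the inductive cases we split on the outermost operator of $C$.

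\textbf{Parallel composition.} Take $C = C_1 \Par Q$ (the case $Q \Par C_1$ is symmetric). For (\textsc{Par}) or (\textsc{Par}$^\bullet$) acting on the left, we apply the induction hypothesis to $C_1$. For rules acting on $Q$ alone, we are in case~\ref{item:context} with $C' = C_1 \Par Q'$. For (\textsc{Syn}) or (\textsc{Syn}$^\bullet$), the left premise carries a visible label $\alpha[k]$. By the induction hypothesis on $C_1$ this transition cannot come from $I$, so it comes from the context, and we are again in case~\ref{item:context}.

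\textbf{Restriction.} The case $\nu a.C_1$ follows directly from the induction hypothesis.

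\textbf{Prefix and choice.} Take $C = \mu[h].C_1 + \Sigma$ or $C = \mu.C_1 + \Sigma$, with $\hol$ inside $C_1$ or inside one branch of $\Sigma$. Rules (\textsc{Act2}) and (\textsc{Act2}$^\bullet$) reduce to the induction hypothesis, but two side conditions must be checked: $h \neq k$ in (\textsc{Act2}), and the $\std{\cdot}$ premises in (\textsc{Sum}). A forward rule (\textsc{Act1}) on $\mu.C_1$, a backward rule (\textsc{Act1}$^\bullet$) on $\mu[h].C_1$, and transitions in unused branches all fall under case~\ref{item:context}, provided we know the following:
\[
\std{C_1[I]} \iff \std{C_1[\hol]} \wedge \std{I}.
\]
For the forward case, reachability shows that $C'[\hol]$ is well defined.

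\textbf{Exclusivity.} The two cases cannot both hold. Case~\ref{item:context} leaves $I$ untouched, while case~\ref{item:invisible} changes $I$. If both gave the same $P$, then a unique-decomposition argument on the position of the hole would force $I' = I$ with $I \tran{\tau[k]} I$. This is impossible, because keys strictly increase or decrease along a transition: a step adds or removes the key $k$.

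\textbf{Main obstacle.} The hard part is the prefix case with a \emph{backward} move (\textsc{Act1}$^\bullet$) on $\mu[h].C_1[I]$. This requires $\std{C_1[I]}$, and therefore $\std{I}$. Here $I$ can be keyed, as in Example~\ref{ex:CCSK barbs}, and then the backward move of the context is blocked. That is fine, since the lemma only classifies transitions that actually occur. The corresponding context transition $C \tran{\mu[h]} C'$ must therefore be understood as conditional on the hole being standard, or else $I$ is standard in this branch. We will fix the convention for context transitions so that side conditions involving $\std{\cdot}$ and $\keys{\cdot}$ are evaluated on $C[I]$. We will first check that this convention keeps the case analysis sound for (\textsc{Par}) (the condition $k \notin \keys{Q}$) and for (\textsc{Sum}).
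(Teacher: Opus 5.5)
Your proposal is correct and follows essentially the same route as the paper: structural induction on $C[\hol]$ over the rule set without structural congruence, with the hole, unkeyed and keyed prefix, parallel (where the visible label forces the synchronising premise into case~1), restriction and sum cases handled as there. Two small remarks: your exclusivity argument (context steps preserve the position of $\hol$, and a step of $I$ adds or removes $k$, so $I' \neq I$) fills in a point the paper leaves implicit; and your ``main obstacle'' disappears once $\hol$ counts as standard and keyless, because every $\std{\cdot}$ or $\keys{\cdot}$ side condition met by $C[I]$ is then also met by $C[\hol]$, so the $I$-dependent convention for context transitions that you propose is unnecessary.
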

\begin{proof}
By structural induction on contexts $C[\hol]$.

Base case.
Suppose $I \tran{\mu[k]} P$.  Since $I$ is invisible, $\mu = \tau$ and we are in case (\ref{item:invisible}).

Prefix $\alpha$.
Suppose that $C[\hol] = \alpha.C_1[\hol]$.
Then $C_1[\hol]$ is standard.
If $\alpha.C_1[I] \tran{\mu[k]} P$ then this must be a forward transition $\alpha.C_1[I] \ftran{\alpha[k]} \alpha[k].C_1[I]$.  Let $C'[\hol] = \alpha[k].C_1[\hol]$.
Then $P = C'[I]$ and $C[\hol] \tran{\mu[k]} C'[\hol]$, and we are in case (\ref{item:context}).

Prefix $\alpha[m]$. 
Suppose that $C[\hol] = \alpha[m].C_1[\hol]$.
If $\alpha[m].C_1[I] \tran{\mu[k]} P$ then there are two possibilities:
\begin{enumerate}
\item 
$k = m$, $\mu = \alpha$, $C_1[I]$ is standard and $\alpha[m].C_1[I] \rtran{\alpha[m]} \alpha.C_1[I]$.
Let $C'[\hol] = \alpha.C_1[\hol]$.
Then $P = C'[I]$ and $C[\hol] \rtran{\mu[k]} C'[\hol]$, and we are in case (\ref{item:context}).
\item
$\alpha[m].C_1[I] \tran{\mu[k]} \alpha[m].P_1$ with $C_1[I] \tran{\mu[k]} P_1$.
By the induction hypothesis either
\begin{enumerate}
\item
 $P_1 = C'_1[I]$ for some context $C'_1[\hol]$ such that $C_1[\hol] \tran{\mu[k]} C'_1[\hol]$; or
\item
 $\mu = \tau$ and $P_1 = C_1[I']$ where $I \tran{\tau[k]} I'$.
\end{enumerate}
In the first case we let $C'[\hol] = \alpha[m].C'_1[\hol]$.
Then $P = C'[I]$ and $C[\hol] \rtran{\mu[k]} C'[\hol]$, and we are in case (\ref{item:context}).
In the second case $P = C[I']$, and we are in case (\ref{item:invisible}).
\end{enumerate}

Parallel composition.
Suppose that $C[\hol] = C_1[\hol] \Par Q$.
If $C_1[I] \Par Q \tran{\mu[k]} P$ then we must have used one of rules (\textsc{Par}), (\textsc{Par}$^\bullet$), (\textsc{Syn}) and (\textsc{Syn}$^\bullet$) or their symmetric versions.
\begin{enumerate}
\item
If $Q \tran{\mu[k]} Q'$ with $k \notin \keys{C[I]}$ then let $C'[\hol] = C_1[\hol] \Par Q'$.
We have $P = C'[I]$ and $C[\hol] \tran{\mu[k]} C'[\hol]$ as required.
\item
If $C_1[I] \tran{\mu[k]} P_1$ with $k \notin \keys{Q}$ then by the induction hypothesis either 
\begin{enumerate}
\item $P_1 = C'_1[I]$ for some context $C'_1[\hol]$ such that $C_1[\hol] \tran{\mu[k]} C'_1[\hol]$;
\item $\mu = \tau$ and $P_1 = C_1[I']$ where $I \tran{\tau[k]} I'$ (and $I'$ is invisible by Lemma~\ref{lem:invisible red}).
\end{enumerate}
In the first case let $C'[\hol] = C'_1[\hol] \Par Q'$.
We have $P = C'[I]$ and $C[\hol] \tran{\mu[k]} C'[\hol]$ as required.
In the second case $\mu = \tau$ and $P = C[I']$ where $I \tran{\tau[k]} I'$ as required.
\item
If $\mu = \tau$ and $C_1[I] \tran{\alpha[k]} P_1$, $Q \tran{\Co{\alpha}[k]} Q'$ then by the induction hypothesis
$P_1 = C'_1[I]$ for some context $C'_1[\hol]$ such that $C_1[\hol] \tran{\alpha[k]} C'_1[\hol]$.
Let $C'[\hol] = C'_1[\hol] \Par Q'$.
We have $P = C'[I]$ and $C[\hol] \tran{\mu[k]} C'[\hol]$ as required.
\end{enumerate}
Further cases for parallel composition are symmetric versions of the above.

We omit the cases for sum and restriction, which are straightforward.
\end{proof}

\begin{lemma}\label{lem:context invisible rch hole}
If $I$ is invisible and $C[I]$ is reachable then $C[\hol]$ is reachable.
\end{lemma}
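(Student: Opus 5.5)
The plan is to reduce reachability of $C[\hol]$ to a forward-only derivation, and then use Lemma~\ref{lem:context invisible} to split that derivation into steps of the context and steps of the plugged-in process. Here a context is regarded as a process in which the hole behaves like $\nil$: it has no keys, it is standard, and it has no transitions. Reachability of $C[\hol]$ is meant in this sense.

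First, since $C[I]$ is reachable, Lemma~\ref{lem:rch fwd-only} gives a forward-only path $\tostd{C[I]} = P_0 \ftran{} P_1 \ftran{} \cdots \ftran{} P_n = C[I]$. Because $\tostd{\cdot}$ acts homomorphically, $P_0 = C_0[I_0]$ with $C_0 = \tostd{C}$ (a standard, hence reachable, context) and $I_0 = \tostd{I}$, with the hole in the same position as in $C$.

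Next I must check that $I_0$ is invisible. $I$ is reachable, being a subprocess of a reachable process, so there is a forward path from $\tostd{I}$ to $I$. Reversing it with the backward rules gives a backward path from $I$ to $\tostd{I}$. If some step on this path had a label $\mu \neq \tau$, then the first such step would follow only $\tau$-reductions, and so would give $I \wbarb \rev\mu$. This contradicts invisibility, so every step on the path is a $\tau$ step. Hence $I \wred \tostd{I}$, and $\tostd{I}$ is invisible by repeated use of Lemma~\ref{lem:invisible red}.

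Then I argue by induction on $j$ for the invariant: $P_j = C_j[I_j]$, where $C_j[\hol]$ is a reachable context with the hole in the same syntactic position as in $C$, and $I_j$ is invisible. For the step $P_j \ftran{\mu[k]} P_{j+1}$, Lemma~\ref{lem:context invisible} gives two cases.
\begin{itemize}
\item Either $P_{j+1} = C_{j+1}[I_j]$ with $C_j[\hol] \ftran{\mu[k]} C_{j+1}[\hol]$. This extends the derivation of the context by one step, so $C_{j+1}$ stays reachable.
\item Or $\mu = \tau$ and $P_{j+1} = C_j[I_{j+1}]$ with $I_j \tran{\tau[k]} I_{j+1}$. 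Here the context is unchanged and $I_{j+1}$ is invisible by Lemma~\ref{lem:invisible red}.
\end{itemize}
The transitions of the context are the ones built in the proof of Lemma~\ref{lem:context invisible}, and their side conditions ($\std{\cdot}$, $k \notin \keys{Q}$) still hold once $I$ is replaced by the key-free hole. At the end, $C[I] = C_n[I_n]$, and the hole sits in the same position in both contexts, so comparing the syntax outside the hole gives $C_n = C$ (and $I_n = I$). Hence $C[\hol]$ is reachable.

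The step I expect to be the main obstacle is the bookkeeping behind the last identification: showing that Lemma~\ref{lem:context invisible} always keeps the hole at a fixed syntactic position. This has to be read off its structural-induction proof, using the alternative transition rules without structural congruence, as set up just before Lemma~\ref{lem:context invisible}. The second point needing care is the invisibility of $\tostd{I}$, which is handled by the backward-path argument above.
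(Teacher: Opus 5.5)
Your proof is correct. It uses the same decomposition as the paper, via Lemma~\ref{lem:context invisible}, but runs it along the forward path from $\tostd{C[I]}$ instead of the backward path from $C[I]$.

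The paper starts from the backward-only computation $C[I] = C_k[I_k] \rtran{} \cdots \rtran{} C_0[I_0] = \tostd{C[I]}$. It drops the steps of the invisible component and obtains a backward-only computation from $C[\hol]$ to $C_0[\hol]$. That endpoint is standard simply because $C_0[I_0]$ is, and reversing the computation gives reachability.

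Going backward saves two things:
\begin{itemize}
\item The hypothesis that $I$ is invisible applies at the very first step, so nothing needs to be shown about $\tostd{I}$.
\item The initial decomposition is $C[I]$ itself, so the final context never has to be identified with $C$.
\end{itemize}

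Your forward version pays for these with exactly the two ingredients you flag, and you supply both soundly.
\begin{itemize}
\item Invisibility of $\tostd{I}$. Your argument is sound: $I$ is reachable as a subprocess, Lemma~\ref{lem:rch fwd-only} gives a forward path from $\tostd{I}$ to $I$, and reversing it gives a backward path. Any visible step on that path would give $I \wbarb \rev\mu$, so $I \wred \tostd{I}$, and Lemma~\ref{lem:invisible red} applies.
\item The invariant that the hole stays at a fixed syntactic position, which lets you conclude $C_n = C$. This does hold for the derivation-driven construction in the proof of Lemma~\ref{lem:context invisible}, under the rules without structural congruence.
\end{itemize}

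In return, your route directly yields a forward-only derivation of $C[\hol]$ from the standard context $\tostd{C}$, in the shape of Lemma~\ref{lem:rch fwd-only}, with $I_n = I$ as a by-product. The paper's route needs one more reversal at the end to turn its backward computation into a derivation from a standard process.
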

\begin{proof}
Since $C[I]$ is reachable, there is a backward-only computation from  $C[I]$ to
$\tostd{C[I]} = C_0[I_0]$.
Using Lemma~\ref{lem:context invisible} (and Lemma~\ref{lem:invisible red}),
\[
C[I] = C_k[I_k] \rtran{\mu_k[m_k]} C_{k-1}[I_{k-1}] \rtran{\mu_{k-1}[m_{k-1}]} \cdots \rtran{\mu_1[m_{1}]} = C_0[I_0]
\]
where for $i = k,\ldots,1$, either
$C_i[\hol] \rtran{\mu_i[m_i]} C_{i-1}[\hol]$ and $I_i = I_{i-1}$ or
$\mu_i = \tau$,
$C_i[\hol] = C_{i-1}[\hol]$ and $I_i \rtran{\tau[m_i]} I_{i-1}$.
By omitting the steps where only the invisible process moves and $C_i[\hol] = C_{i-1}[\hol]$, we get a backward-only computation from $C[\hol]$ to $C_0[\hol]$, showing that $C[\hol]$ is reachable.
\end{proof}
\begin{lemma}\label{lem:context standard}
In CCSK, let $C[\hol]$ be a context and let $P$ be standard.
If $C[\hol] \tran{\mu[k]} C'[\hol]$ then $C[P] \tran{\mu[k]} C'[P]$.
\end{lemma}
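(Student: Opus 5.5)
The plan is a structural induction on the context $C[\hol]$, using the alternative rule set without structural congruence, as in the proof of Lemma~\ref{lem:context invisible}. In each case I take the last rule used to derive $C[\hol] \tran{\mu[k]} C'[\hol]$ (contexts are treated as processes with $\hol$ as an extra inert constant) and reapply the same rule to $C[P]$. The one fact about $P$ that must be maintained is that it is standard. So the side conditions that mention the hole must continue to hold after plugging in $P$:
\begin{itemize}
\item $\std{\cdot}$ in (\textsc{Act1}), (\textsc{Act1}$^\bullet$), (\textsc{Sum}) and (\textsc{Sum}$^\bullet$);
\item $k \notin \keys{\cdot}$ in (\textsc{Par}) and (\textsc{Par}$^\bullet$).
\end{itemize}
Since $\keys{P}=\emptyset$, we have $\keys{C[P]}=\keys{C[\hol]}$, and $\std{C_1[\hol]}$ implies $\std{C_1[P]}$. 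Both preservation facts follow by a trivial induction on contexts.

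The cases run as follows.
\begin{itemize}
\item \emph{Hole.} $C[\hol]=\hol$ has no transitions, so this case is vacuous.
\item \emph{Prefix $\alpha.C_1[\hol]$.} By (\textsc{Act1}), $C_1[\hol]$ is standard and $C'[\hol]=\alpha[k].C_1[\hol]$. Then $C_1[P]$ is standard, so $\alpha.C_1[P] \ftran{\alpha[k]} \alpha[k].C_1[P]$.
\item \emph{Keyed prefix $\alpha[m].C_1[\hol]$, reversal.} If the step undoes $\alpha[m]$ via (\textsc{Act1}$^\bullet$), the same standardness transfer applies.
\item \emph{Keyed prefix, inner step.} If the step comes from (\textsc{Act2}) or (\textsc{Act2}$^\bullet$), apply the induction hypothesis to $C_1$. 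The side condition $k\neq m$ is unchanged.
\item \emph{Sum.} For $\rho.C_1[\hol] + \Sigma\rho_j.Q_j$, either the moving branch contains the hole and we use the induction hypothesis, or the hole sits in a non-moving branch. In the latter case that branch must be standard and unchanged, and it stays standard after substituting $P$.
\item \emph{Parallel $C_1[\hol]\Par Q$.} For (\textsc{Par}) on the $C_1$ side, use the induction hypothesis together with $k\notin\keys{Q}$. For (\textsc{Par}) on the $Q$ side, the condition $k\notin\keys{C_1[P]}=\keys{C_1[\hol]}$ still holds. For (\textsc{Syn}) and (\textsc{Syn}$^\bullet$), use the induction hypothesis on the $C_1$ component, and leave $Q$'s transition unchanged.
\item \emph{Restriction.} This case is immediate from the induction hypothesis.
\end{itemize}

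Every case is routine. The one point needing care is to read the statement as ``$C'$ obtained with the hole untouched''. That is, the hole is an inert placeholder, which has no transitions and contributes no keys, so any transition of $C[\hol]$ is derived without using the hole's behaviour. With that reading, the only obstacle is the bookkeeping for the standardness and key side conditions, which the two preservation facts above dispose of uniformly.
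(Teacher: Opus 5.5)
Your proposal is correct and follows the paper's proof: a structural induction on contexts in the structural-congruence-free rule set, where the standardness of $P$ is exactly what preserves the $\std{\cdot}$ premise of (\textsc{Act1}$^\bullet$) in the keyed-prefix case and the $k \notin \keys{\cdot}$ side condition in the parallel case. The difference is only presentational: you spell out the routine cases (plain prefix, sum, synchronisation, restriction) that the paper dismisses as straightforward.
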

\begin{proof}
By structural induction on contexts $C[\hol]$.
All cases are straightforward.
The base case holds vacuously, since $\hol$ has no transitions.
The most interesting case is when $C[\hol] = a[m].C_1[\hol]$.
If $C[\hol] \rtran{a[m]} a.C_1[\hol]$ then $C_1[\hol]$ must be standard, and also $C[P] \rtran{a[m]} a.C_1[P]$, relying on $P$ being standard.
If $C[\hol] \tran{\mu[k]} C'[\hol]$ (where $C'[\hol] \neq a.C_1[\hol]$) then $C_1[\hol] \tran{\mu[k]} C'_1[\hol]$ (for some $C'_1[\hol]$),
 and we can use the inductive hypothesis to deduce that $C_1[P] \tran{\mu[k]} C'_1[P]$,
 and hence $C[P] \tran{\mu[k]} a[m].C'_1[P] = C'[P]$.
 
In the case for parallel composition, where $C[\hol] = C_1[\hol] \Par Q$, we also rely on $P$ being standard; otherwise, keys in $P$ could disable transitions of $C[\hol]$.
\end{proof}
\begin{lemma}\label{lem:context standard rch}
In CCSK, if $C[\hol]$ is reachable and $P$ is a standard process then $C[P]$ is reachable.
\end{lemma}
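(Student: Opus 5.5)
The plan is to use the characterisation of reachability by forward-only paths (Lemma~\ref{lem:rch fwd-only}) together with Lemma~\ref{lem:context standard}. Since $C[\hol]$ is reachable, Lemma~\ref{lem:rch fwd-only}, applied to contexts by treating $\hol$ as an inert standard leaf, gives a forward-only computation
\[
\tostd{C}[\hol] = C_0[\hol] \ftran{\mu_1[m_1]} C_1[\hol] \ftran{\mu_2[m_2]} \cdots \ftran{\mu_n[m_n]} C_n[\hol] = C[\hol].
\]
This step needs a small justification, because reachability was defined for processes and not contexts. I would identify $\hol$ with a fresh standard process that has no transitions and shares no names with anything else, for instance $\nu z.z.\nil$, which is invisible and has no keys. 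The parabolic lemma then transfers directly. Equivalently, I can take the reachability derivation of $C[\hol]$ as given (a computation of contexts from a standard context) and normalise it to a forward-only one using the same argument.

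Next, Lemma~\ref{lem:context standard} is applied at each step. Because $P$ is standard, every transition $C_{i-1}[\hol] \ftran{\mu_i[m_i]} C_i[\hol]$ lifts to $C_{i-1}[P] \ftran{\mu_i[m_i]} C_i[P]$. Chaining these gives a forward computation from $C_0[P]$ to $C[P]$.

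Finally, I check that the starting point $C_0[P]$ is standard. It is $\tostd{C}[P]$, a standard context filled with a standard process, so it has no keys. Hence $C[P]$ is derived from a standard process by the rules of Figure~\ref{fig:ccsk_sem_full}, and so it is reachable.

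One subtlety concerns freshness of keys. Lemma~\ref{lem:context standard} already handles the side conditions $k\notin\keys{Q}$ in (\textsc{Par}), since $P$ contributes no keys, so no extra work is needed there. The step I expect to be the main obstacle is the first one: making precise that a reachable context admits a forward-only path from its standard version. I would first try the fresh-inert-process substitution. If that proves awkward under restriction or sums, I would fall back on a backward-only path. Reachability of $C[\hol]$ yields $C[\hol]\rtran{}^{*}\tostd{C}[\hol]$ via the characterisation in~\cite[Prop.\ 3.5]{LaneseP21}, and reversing it with the loop property of CCSK gives the required forward path.
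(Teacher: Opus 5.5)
Your proposal is correct and follows the paper's proof essentially verbatim: take a forward-only path from $\tostd{C[\hol]} = C_0[\hol]$ to $C[\hol]$, lift each step with Lemma~\ref{lem:context standard} (using that $P$ is standard), and note that $C_0[P]$ is standard. Your extra justification is not in the paper, which uses the forward-only characterisation of Lemma~\ref{lem:rch fwd-only} directly on contexts. You make that step explicit by treating $\hol$ as an inert, keyless standard process, or alternatively by using the loop property.
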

\begin{proof}
Suppose that $C[\hol]$ is reachable.  Consider a forward-only path from $\tostd {C[\hol]} = C_0[\hol]$ to $C[\hol]$.
Using Lemma~\ref{lem:context standard} we get a forward-only path from $C_0[P]$ to $C[P]$, showing that $C[P]$ is reachable.
\end{proof}

Let $\hat\alpha$ range over forward ($\alpha$) or backward ($\rev\alpha$) barbs in CCSK.
\begin{lemma}\label{lem:CCSK wbarb invisible std}
In CCSK, let $C[\hol]$ be a context and let $I$ be invisible and $P$ be standard.
Also, let $C[I]$ be reachable.
If $C[I] \wbarb \hat\alpha$ then $C[P] \wbarb \hat\alpha$.
\end{lemma}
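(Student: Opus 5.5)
Suppose $C[I] \wred Q \barb \hat\alpha$. The plan is to build a matching computation $C[P] \wred Q' \barb \hat\alpha$ in three steps: decompose the given computation with Lemma~\ref{lem:context invisible}, transfer the context moves with Lemma~\ref{lem:context standard}, and then transfer the final strong barb.

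\textbf{Decomposing the computation.} Write the computation as
\[
C[I] = C_0[I_0] \red C_1[I_1] \red \cdots \red C_n[I_n] = Q, \qquad Q \barb \hat\alpha .
\]
Each step is a forward or backward $\tau[m]$ transition. Apply Lemma~\ref{lem:context invisible} to each step, together with Lemma~\ref{lem:invisible red}, which keeps every $I_i$ invisible. Then each step is one of two kinds.
\begin{itemize}
\item[(a)] A context step $C_{i}[\hol] \tran{\tau[m_i]} C_{i+1}[\hol]$ with $I_{i+1}=I_i$.
\item[(b)] An invisible step $I_i \tran{\tau[m_i]} I_{i+1}$ with $C_{i+1}[\hol]=C_i[\hol]$.
\end{itemize}
Discarding the (b)-steps gives a sequence of $\tau$ moves $C[\hol] \wred C_n[\hol]$ of the context alone.

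\textbf{Transferring the context moves.} Since $P$ is standard, Lemma~\ref{lem:context standard} turns each context move into $C_i[P]\tran{\tau[m_i]} C_{i+1}[P]$. Hence $C[P] \wred C_n[P]$. Reachability of all intermediate contexts and processes follows from Lemmas~\ref{lem:context invisible rch hole} and~\ref{lem:context standard rch}. This keeps us inside the class of reachable processes, so the key-freshness side conditions of the rules behave as expected.

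\textbf{Transferring the strong barb.} It remains to show that $C_n[I_n] \barb \hat\alpha$ implies $C_n[P] \barb \hat\alpha$. Take the transition $C_n[I_n] \tran{\alpha[m]} Q''$, forward or backward as $\hat\alpha$ requires, and note $\alpha \neq \tau$. Apply Lemma~\ref{lem:context invisible} to it. Case (\ref{item:invisible}) of that lemma forces the label to be $\tau$, so it is excluded. We are therefore in case (\ref{item:context}): $C_n[\hol]\tran{\alpha[m]} C'[\hol]$. Lemma~\ref{lem:context standard} then yields $C_n[P]\tran{\alpha[m]} C'[P]$ in the same direction, so $C_n[P]\barb\hat\alpha$.

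\textbf{Main obstacle.} The delicate point is the side conditions in Lemma~\ref{lem:context standard}. Rule (\textsc{Act1}$^\bullet$) needs standardness of the continuation, which now contains $P$ instead of $I_i$. Rule (\textsc{Par}) needs key freshness, and a key $m$ chosen in the original computation might clash with keys of $P$. Standardness of $P$ is exactly what settles both: $P$ contributes no keys, and it is standard wherever the hole sits. For (\textsc{Act1}$^\bullet$), the context move is a transition of $C_i[\hol]$ alone, which already requires the hole's surroundings to be standard. If the transition relation on contexts is not quite set up to support this, I would re-prove the relevant case directly by structural induction on $C_i$, mirroring the proof of Lemma~\ref{lem:context standard}.
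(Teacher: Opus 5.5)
Your proposal is correct and follows the paper's proof essentially step for step. You decompose the reduction sequence with Lemma~\ref{lem:context invisible}, keeping each $I_i$ invisible via Lemma~\ref{lem:invisible red}, discard the steps made by the invisible process alone, note that the final barb transition must be a context move since $\alpha \neq \tau$, and transfer the resulting context computation to $C[P]$ with Lemma~\ref{lem:context standard}; your remarks on the side conditions of (\textsc{Act1}$^\bullet$) and (\textsc{Par}) just make explicit what the paper handles inside the proof of that lemma.
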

\begin{proof}
Consider $C[I] \wred {\barb \hat\alpha}$.  Since $I$ is invisible, $I$ only contributes reductions and does not produce barb $\hat\alpha$.  We can mimic the computation, omitting reductions performed only by $I$, to get $C[P] \wred C'[P] \barb \hat\alpha$.
We need $P$ to be standard, since otherwise reductions made from $C[I]$ might be prevented in the computation from $C[P]$.

We now give the details.
Since $C[I]$ is reachable, so are $C[\hol]$ (by Lemma~\ref{lem:context invisible rch hole}) and $C[P]$ (by Lemma~\ref{lem:context standard rch}).
Suppose $C[I] \wred {\barb \hat\alpha}$.
Using Lemma~\ref{lem:context invisible} (and Lemma~\ref{lem:invisible red}),
\[
C[I] = C_1[I_1] \tran{\tau[m_1]} C_2[I_2] \tran{\tau[m_2]} \cdots \tran{\tau[m_{k-1}]} C_k[I_k] \tran{\alpha[m]} C_{k+1}[I_{k+1}]
\]
where for $i = 1,\ldots,k-1$, either
$C_i[\hol] \tran{\tau[m_i]} C_{i+1}[\hol]$ and $I_i = I_{i+1}$ or
$C_i[\hol] = C_{i+1}[\hol]$ and $I_i \tran{\tau[m_i]} I_{i+1}$.
Since $\alpha \neq \tau$, also by Lemma~\ref{lem:context invisible} we must have $C_k[\hol] \tran{\alpha[m]} C_{k+1}[\hol]$ and $I_k = I_{k+1}$.
By omitting the steps where only the invisible process moves and $C_i[\hol] = C_{i+1}[\hol]$,
we get a computation $C[\hol] \wred C_k[\hol] \tran{\alpha[m]} C_{k+1}[\hol]$.
Using Lemma~\ref{lem:context standard} we get $C[P] \wred C_k[P] \tran{\alpha[m]} C_{k+1}[P]$, showing that $C[P] \wred {\barb \hat\alpha}$ as required.
\end{proof}

\CCSKstronglyRF*
\begin{proof}
Let $I$ be an invisible process, $P$ be a process and $C[\hol]$ a single-hole context.
Let $P_0 = \tostd P$ and $C_0[\hol] = \tostd{C[\hol]}$.
We assume that $C[I]$ and $C[P]$ are reachable.
Then so are $C[\hol]$ (by Lemma~\ref{lem:context invisible rch hole})
and $C[P_0]$ (by Lemma~\ref{lem:context standard rch}).

Suppose $C[I] \wbarb$, say $C[I] \wbarb \hat\alpha$.
Since $C[P]$ is reachable, there is a backward-only computation from $C[P]$ to $\tostd{C[P]} = C_0[P_0]$.
Since $C[P_0]$ is reachable, there is a forward-only computation from $\tostd{C[P_0]} = C_0[P_0]$ to $C[P_0]$.

Combining, we have a backward-only followed by forward-only computation from $C[P]$ to $C[P_0]$ via $C_0[P_0]$.
If there are any visible actions in this computation then $C[P] \wbarb$ as required.
If not then $C[P] \wred C[P_0]$.
Using Lemma~\ref{lem:CCSK wbarb invisible std} we get $C[P_0] \wbarb \hat\alpha$.
Hence $C[P] \wbarb$ as required.
\end{proof}

\subsection{Proof of Proposition \ref{prop:basic success SuRF}}

\basicsuccessSuRF*
\begin{proof}
Suppose for a contradiction that $\enc{\cdot}$ is a basic, success-sensitive encoding from non-strongly SuRF calculus $\calc_1$ to strongly SuRF $\calc_2$.
In $\calc_1$ we must have context $C_1$, invisible $I$ and process $P$ such that $C_1[I] \wbarb \tick$ but not $C_1[P] \wbarb \tick$.
By compositionality there is a $\calc_2$ context $C_2$ such that $\enc{C_1[Q]} = C_2[\enc Q ]$ for all processes~$Q$ in $\calc_1$.
Using success sensitivity in both directions we have $C_2[\enc{I}] \wbarb \tick$ but not $C_2[\enc{P}] \wbarb \tick$.
Using interaction sensitivity (reverse direction) we have that $\enc{I}$ is invisible.  So $\calc_2$ is not strongly SuRF.  Contradiction.
\end{proof}

\subsection{Proof of Proposition \ref{prop:CCS pi strongly SuRF}}

\CCSpistronglySuRF*
\begin{proof}
By Lemma~\ref{lem:CCS pi invisible barb}.
\end{proof}

\subsection{Proof of Theorem \ref{thm:basic succ sens CCSK pi}}

\basicsuccsensCCSKpi*
\begin{proof}
By Propositions~\ref{prop:basic success SuRF},~\ref{prop:CCSK not SuRF},~\ref{prop:CCS pi strongly SuRF}.
\end{proof}

%


\section{Omitted proofs for Section~\ref{sec:encoding}}\label{sec:app}

\subsection{Proof of Theorem~\ref{th:strong_enc}}

We now present a few lemmas needed to prove the correctness of the
encoding.

\begin{lemma}\label{lemma:ancestor}
  If $\enc{P,\nil} \pitran{\mu} \enc{P',\nil}$ then
  $\enc{P,R} \pitran{\mu} \enc{P',R}$.
\end{lemma}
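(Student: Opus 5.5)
The plan is to argue by structural induction on $P$, following the clauses of Figure~\ref{fig:enc_strong}, and to strengthen the statement to all backtrack processes simultaneously. The key observation is that the backtrack parameter $R$ enters the encoding in one place only. It is attached as an extra summand of the innermost unexecuted choice, i.e.\ the clause $\enc{\Sigma_i \mu_i.P_i,R} = \rec X (R + \Sigma_i \encpref{\mu_i.P_i,X})$. Every other clause merely threads $R$ inwards: restriction passes it unchanged, and the keyed-prefix clauses wrap it inside a new backtrack process $x_k.\enc{\ldots,R}$. So $\enc{P,R}$ has the shape $C[R]$, and $\enc{P,\nil}$ is $C[\nil]$, for the same context $C$ and with the hole in summation position. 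Since the lemma is used only for $P$ without parallel composition below the top, I would restrict to sequential $P$ (guarded sums, keyed prefixes, restriction). In that fragment the hole sits under a $\mathtt{rec}$ binder as a summand, possibly under some $\nu$'s.

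First I show by induction on $P$ that $\enc{P,R} = C_P[R]$ and $\enc{P,\nil} = C_P[\nil]$. Here $C_P$ is built from $\nu$, $\mathtt{rec}$, sum and prefixes, with the hole occurring once, as a summand of the outermost active choice. When $P$ is a sum whose branches all have unkeyed prefixes, $C_P = \rec X(\hol + \Sigma_i\encpref{\mu_i.P_i,X})$. When the executed branch is $\alpha[k].P'$, we have $C_P = C_{P'}[\,x_k.C_{\alpha.\tostd{P'}+\ldots}[\hol]\,]$, which is again a summand-position context after composition. Restriction just prefixes $\nu a$. In each case $R$ sits at the unique backtrack position.

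Next comes a case analysis on the transition $\enc{P,\nil} \pitran{\mu} \enc{P',\nil}$. Since $\nil$ is a summand (absorbed via $R+\nil \pistr R$), the transition does not come from the hole, so it is derived from some summand $\encpref{\mu_i.P_i,X}$ or from the inner backtrack prefix $x_k$ (resp.\ $\Co{x_k}$, $\tau$). Adding an extra summand $R$ does not disable any of these, so the same derivation, using (\textsc{Rec}), (\textsc{Sum}), (\textsc{Res}) and (\textsc{Open}), gives $\enc{P,R} \pitran{\mu} S$, where $S$ is the result with $R$ placed wherever $\nil$ was. It then remains to check that $S = \enc{P',R}$. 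For a forward step on branch $i$, the continuation is $\enc{P_i,\Co{y}.X}$ with $X$ unfolded to $\rec X(R+\ldots)$, and $\enc{P',R}$ by the keyed-prefix clause is $\enc{P_i,\Co{x}.\enc{\Co{a}.\tostd{P_i}+\ldots,R}}$. These agree because $\tostd{P_i}=P_i$, since branches of a standard sum are standard. For a backward step through $x_k$, the target is $\enc{\alpha.\tostd{P'}+\ldots,R}$ directly. Steps occurring deeper reduce to the induction hypothesis applied to the continuation, with $R$ replaced by the wrapped backtrack process. Here the induction hypothesis must be the strengthened statement quantified over all $R$, so this case goes through.

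The main obstacle I expect is making the equalities exact rather than up to $\pistr$. Specifically: the unfolding of $\mathtt{rec}$, $\alpha$-renaming of the fresh key name $y$ into the name $x_k$ fixed by $\phi$, and the garbage-collecting $\nu$ all have to be matched. I would handle this by reading ``$=$'' in the lemma as equality up to $\pistr$ and $\alpha$-conversion, and by proving an auxiliary fact: $\rec X(R+\ldots)$ unfolded equals the encoding with the backtrack process substituted. If $\pistr$ is needed, rule (\textsc{Str}) closes the gap, and $\pistr$ is a congruence, so the substitution of $R$ for $\nil$ is preserved.
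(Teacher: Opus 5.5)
Your proposal is correct and follows the same route as the paper, whose proof is only ``structural induction on $P$ with a case analysis on the last applied rule''; you spell that induction out in more detail, including the substitution fact needed when unfolding $\rec{X}{(R+\cdots)}$. One phrasing point: in the keyed-prefix case the transition you are given is one of $\enc{P_1, x_k.\enc{\alpha.\tostd{P_1}+\cdots,\nil}}$, not of $\enc{P_1,\nil}$, so the lemma as literally stated does not supply the premise its own induction hypothesis needs; state the inductive claim instead as transferring any transition that does not originate in the backtrack process from one backtrack process to an arbitrary other one, which is exactly what your context/uniformity argument already establishes.
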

\begin{proof}
  By structural induction on $P$, with a case analysis on the last
  applied rule.
\end{proof}

\begin{lemma}\label{lemma:str}
Let $P$, $P'$ be CCSK processes and $R$, $R'$ be $\pi$-processes. If $P \str P' \bisim R' \pistr R$ then $P \bisim R$.
\end{lemma}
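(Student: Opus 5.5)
The plan is to exhibit a CCSK-$\pi$ bisimulation (Definition~\ref{def:bisim}) containing $(P,R)_{\phi}$, built from a bisimulation that contains $(P',R')_{\phi}$. I will define
\[
\mathcal{S} = \{ (P,R)_{\phi} \mid \exists P',R'.\ P \str P' \wedge R \pistr R' \wedge (P',R')_{\phi} \in {\bisim} \}.
\]
Since the identity is included in both $\str$ and $\pistr$, we have ${\bisim} \subseteq \mathcal{S}$. So it suffices to show that $\mathcal{S}$ is itself a CCSK-$\pi$ bisimulation, because then $\mathcal{S} \subseteq {\bisim}$ by maximality of $\bisim$.

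For the forward-from-CCSK direction, take $(P,R)_\phi \in \mathcal{S}$ via $P \str P'$, $(P',R')_\phi \in {\bisim}$ and $R' \pistr R$, and suppose $P \ftran{\mu[k]} Q$ (or $P \rtran{\mu[k]} Q$).
\begin{enumerate}
\item Because $\str$ is symmetric, we have $P' \str P$. Rule (\textsc{Str}) (respectively (\textsc{Str}$^\bullet$)), applied with the trivial congruence $Q \str Q$, then gives $P' \ftran{\mu[k]} Q$ with the same label and key.
\item Bisimilarity of $(P',R')_\phi$ supplies a matching $\pi$ transition $R' \pitran{\lambda} S$, where $\lambda$ is the label dictated by Definition~\ref{def:bisim} (for instance $\Co a(x)$ or $x$), together with $(Q,S)_{\phi'} \in {\bisim}$ for the appropriately updated bijection $\phi'$.
\item Since $R \pistr R'$, rule (\textsc{Str}) of Figure~\ref{fig:pi_sem} gives $R \pitran{\lambda} S$.
\item The pair $(Q,S)_{\phi'}$ lies in $\bisim \subseteq \mathcal{S}$, which closes this case.
\end{enumerate}

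The converse direction is symmetric. A transition $R \pitran{\lambda} S$ is transferred to $R'$ by (\textsc{Str}), using $R' \pistr R$. It is then matched by a transition $P' \tran{\mu[k]} Q$, which is transferred back to $P$ by the CCSK (\textsc{Str}) rules. The residual pair is again in $\bisim$.

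The only point needing care is that the bijection $\phi$ must remain meaningful across congruence steps. It does: both structural congruences only rearrange parallel components, $\nil$, or bound names, and never alter keys or free names. In particular, $\pi$ $\alpha$-conversion touches only bound key names, which $\phi$ does not track. This holds because, as the text explains, $\phi$ maps only free keys. Hence the label correspondences and the updates $\phi[\phiel{a}{k}{x}]$ and $\phi\setminus(a,k)$ are unaffected. I expect no real obstacle beyond checking this invariance, for instance that $\fk{P}=\fk{P'}$ when $P\str P'$.
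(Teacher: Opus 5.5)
Your proof is correct and follows the same approach as the paper, whose one-line proof notes that $\str$ and $\pistr$ preserve and reflect transitions via the (\textsc{Str}) rules; you make this explicit by building the closure relation $\mathcal{S}$ and checking that it is a CCSK-$\pi$ bisimulation. Your final paragraph on the invariance of $\phi$ is harmless but not needed, because the transferred transitions have exactly the same labels and targets, so the bijection updates are literally unchanged.
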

\begin{proof}
  Follows since both $\str$ and $\pistr$ preserve and reflect transitions.
\end{proof}
\strongencbis*
\begin{proof}
  Consider the relation:
 \begin{align*}
    \rcal=&\Big\{\pair{P}{\enc{P,\nil,\varphi(P)}} \, | \, P \textrm{ is a CCSK process, and} \\ 
    & \,\,\varphi \textrm{ is a bijection between pairs (CCSK name, CCSK key) and key name}\Big\}
  \end{align*}

%
  \noindent We first show that the encoding $\enc{P,\nil,\varphi(P)}$ can simulate $P$.
  The proof is by structural induction on $P$, and for each $P$ by
  coinduction. 
  We have a case analysis on the top-level operator in $P$.
  \begin{itemize}
  \item $P=P_1 \Par P_2$: note that the second parameter of the encoding is always $\nil$. If the transition is not a $\tau$ action, then the thesis follows by the inductive hypothesis on the process performing the transition. In case of a $\tau$ action, if the $\tau$ action is from one of the two components then again the thesis follows by the  inductive hypothesis. Otherwise, the CCSK transition is derived by rule \textsc{Syn} with premises $P_1 \ftran{\Co{a}[k]} P'_1$ and $P_2 \ftran{a[k]} P'_2$ (or the other way around, which is analogous). By the inductive hypothesis we have that $\enc{P_1,\nil} \pitran{\Co{a}(b)} \enc{P_1',\nil}$ and $\enc{P_2,\nil} \pitran{\Co{a}(b)} \enc{P_2',\nil}$. By using  the \textsc{Close} rule ($\pi$-calculus) we have $\enc{P_1 \Par P_2,\nil} \pitran{\tau} \nu b.\enc{P_1',\nil} \Par \enc{P_2',\nil} = \enc{P_1' \Par P_2',\nil}$ given that the CCSK transition created a new key $k$ which has one occurrence in $P_1$ and one in $P_2$.
    The case of backward transitions is analogous, noting that in case of a synchronisation the same key is used in both the premises, hence the corresponding $\pi$-calculus name is the same as well since $\varphi$ is a function. Also, the restriction can be garbage collected since a backward synchronisation in CCSK removes a key, and in the encoding removes both the occurrences of the corresponding variable.  
  \item $P = \nu a.P_1$: the thesis follows by the inductive hypothesis. Note that $a$ is a channel in $P$, hence all key names are different from $a$.
  \item $P = \Sigma_i \rho_i.P_i$: let us first assume that all $\rho_i$ are standard, hence $\rho_i=\alpha_i$. Let us consider the case of input first. We have $P \ftran{a_j[k]} a_j[k].P_j + \Sigma_{i \neq j} \alpha_i.P_i$. By definition of the encoding we have 
  \begin{align*}
&  \enc{\Sigma_i \alpha_i.P_i,\nil} = \rec X (\nil + \Sigma_i \encpref{\alpha_i.P_i,X}) = 
 & \rec X (\nil + a_j(y).\enc{P,y.X} + \Sigma_{i \neq j} \encpref{\alpha_i.P_i,X})
  \end{align*}
   
Now by  the early semantics of the $\pi$-calculus we can choose a name $x_k$ for the input obtaining the transition
\begin{align*}
&\rec X (\nil + a_j(y).\enc{P_j,y.X} + \Sigma_{i \neq j} \encpref{\alpha_i.P_i,X}) \pitran{a_j x_k} 
&\enc{P_j,x_k.X}\subst{\enc{\Sigma_i \alpha_i.P_i,R}}{X}.
\end{align*}

We have $\enc{a_j[k].P_j + \Sigma_{i \neq j} \alpha_i.P_i,\nil} = \enc{P_j,x_k.\enc{a_j.Pj+ \Sigma_{i \neq j} \alpha_i.P_i,\nil}}$
since $P_j$ is standard. The thesis follows.

    Let us consider now the case of output. We have $P \ftran{\Co{a_j}[k]} \Co{a_j}[k].P_j + \Sigma_{i \neq j} \alpha_i.P_i$.  By the definition of the encoding we have $\enc{\Sigma_i \alpha_i.P_i,\nil} = \rec X (\nil + \Sigma_i \encpref{\alpha_i.P_i,X}) = \rec X (\nil + \Co{a_j}(y).\enc{P,\Co{y}.X} + \Sigma_{i \neq j} \encpref{\alpha_i.P_i,X})$. 
   
Now by the early semantics we can choose a name $x_k$ for the bound output obtaining a transition
\begin{align*}
&\rec{X} (\nil + \Co{a_j}(y).\enc{P_j,\Co{y}.X} + \Sigma_{i \neq j} \encpref{\alpha_i.P_i,X}) 
&\pitran{\Co{a_j}(x_k)} \enc{P_j,\Co{x_k}.X}\subst{\enc{\Sigma_i \alpha_i.P_i,R}}{X}.
\end{align*}

We have $\enc{\Co{a_j}[k].P_j + \Sigma_{i \neq j} \alpha_i.P_i,\nil} = \enc{P_j,\Co{x_k}.\enc{\Co{a_j}.Pj+ \Sigma_{i \neq j} \alpha_i.P_i,\nil}}$
since $P_j$ is standard. The thesis follows.\\

Next we consider the case where one of the $\rho_i$, say $\rho_1$, is not standard. As a consequence, the other $\rho_i$ 
are standard, and we denote them as $\alpha_i$. We also assume $\rho_1=a[k]$ (the other cases are analogous). Hence, $P = a[k].P_1 + \Sigma_{i \neq 1} \alpha_i.P_i$. 

We first consider the case of transitions from $P_1$.
By the definition of CCSK semantics we have $a[k].P_1 + \Sigma_{i \neq 1} \alpha_i.P_i \ftran{\rho'} a[k].P'_1 + \Sigma_{i \neq 1} \alpha_i.P_i$ with hypothesis $P_1 \ftran{\rho'} P'_1$. 
By definition of the encoding we have $\enc{a[k].P_1 + \Sigma_{i \neq 1} \alpha_i.P_i,\nil} = \enc{P_1,x_k.\enc{a.\tostd{P_1}+\Sigma_{i \neq 1} \alpha_i.P_i,\nil}}$. By the inductive hypothesis we have $\enc{P_1,\nil} \pitran{\pilab} \enc{P'_1,\nil}$,
 where $\pilab$ and $\varphi$ are according to Definition~\ref{def:bisim}.
By Lemma~\ref{lemma:ancestor} we have 
\begin{align*}
&\enc{P_1,x_k.\enc{a.\tostd{P_1}+\Sigma_{i \neq 1} \alpha_i.P_i,\nil}} &\pitran{\pilab} \enc{P_1',x_k.\enc{a.\tostd{P_1}+\Sigma_{i \neq 1} \alpha_i.P_i,\nil}}
\end{align*}
 and the thesis follows.\\
    

We now consider the undo of $a[k]$ (which requires $P_1$ to be standard). By the definition of CCSK semantics
$a[k].P_1 + \Sigma_{i \neq 1} \alpha_i.P_i \rtran{a[k]} a.P_1 + \Sigma_{i \neq 1} \alpha_i.P_i$
we have $$\enc{a[k].P_1 + \Sigma_{i \neq 1} \alpha_i.P_i,\nil} = \enc{P_1,x_k.\enc{a.\tostd{P_1}+\Sigma_{i \neq 1} \alpha_i.P_i,\nil}}$$
Since $P_1$ is standard and contains no parallel composition we have $P_1 = \nu{S}.\Sigma_j \beta_j.P_j$.
Hence,
\begin{align*}
&\enc{P_1,x_k.\enc{a.\tostd{P_1}+\Sigma_{i \neq 1} \alpha_i.P_i,\nil}} = \\
&\nu{S}. \rec X ( x_k.\enc{a.\tostd{P_1}+\Sigma_{i \neq 1} \alpha_i.P_i,\nil} + \Sigma_j \encpref{\beta_j.P_j,X})
\end{align*}
Since $x_k \notin S$ and thanks to $\pi$-calculus semantics we have $\nu{S}. \rec X ( x_k.\enc{a.\tostd{P_1}+\Sigma_{i \neq 1} \alpha_i.P_i,\nil} + \Sigma_j \encpref{\beta_j.P_j,X}) \pitran{x_k} \nu{S}. \enc{a.\tostd{P_1}+\Sigma_{i \neq 1} \alpha_i.P_i,\nil} \pistr \enc{a.P_1+\Sigma_{i \neq 1} \alpha_i.P_i,Q} $ since $P_1$ is standard and names in $S$ do not occur free in the scope of restriction. The thesis follows thanks to Lemma~\ref{lemma:str}.
  \end{itemize}
    We now prove the other simulation:  $P$ can simulate the encoding $\enc{P,\nil,\varphi(P)}$.
    The proof is again by structural induction on the CCSK process $P$, and for each $P$ by coinduction.
    \begin{itemize}
    \item $P=P_1\Par P_2$: as in the other direction, all the cases but synchronisation follow directly by the inductive hypothesis. In the case of synchronisation, we have two matching transitions from the two components. 
      There are two sub-cases, depending on whether the subject of the transitions is a channel name or a key name. In the first case, we can see by inspection on the encoding clauses that the output is always bound, hence synchronisation occurs using the rule \textsc{Close-L} of the  $\pi$-calculus. By the inductive hypothesis we have corresponding forward transitions in CCSK, with fresh keys that we can choose equal, ensuring that $\varphi$ remains a bijection.
Details of the correspondence are as in the other direction.
      In the second case, we can see by inspection on the encoding that the output is always free, hence the \textsc{Com-L}  is applied. By the inductive hypothesis on the two transitions, the steps are mimicked by matching backward transitions since $\varphi$ is a bijection. Notice  that the equality of key names in the $\pi$-calculus ensures the equality of both keys and channel names. Details of the correspondence are as in the other direction.
  \item $P = \nu a.P_1$: the thesis follows by the inductive hypothesis.

  \item $P = \Sigma_i \rho_i.P_i$: by definition of the encoding, we have
    $\enc{\Sigma_i \rho_i.P_i}=\rec X (R + \Sigma_i \encpref{\alpha_i.P_i,X})$
    By
    inspection of the rules of the encoding, $R$ is either $\nil$ or it is guarded by a
    prefix on a key name.  Vice versa, the rest of the translation is
    guarded by prefixes on channel names. If the transition is not
    from $R$, then it is on a channel name and it is matched by a
    forward transition of $P$. Details of the correspondence are as
    for the other direction. 

    If instead the transition is from $R$, it is matched by a backward
    transition of $P$ (indeed, $P$ is non standard if $R$ is not
    $\nil$). By well-formedness of the CCSK process, at most one branch
    is non-standard. The details of the correspondence are as in the other direction.
\qedhere  
  \end{itemize}

\end{proof}

\Comment{
subsection{Proof of Proposition~\ref{prp:weak_comp}}
\begin{proof}
By definition of weakly compositional (Definition~\ref{def:weakly
basic}) we have to show that for every CCSK context $C$ there is a
$\pi$-calculus context $C'$ such that for all $P$ we have
$\enc{C[P]} = \nu N_{P,C}.C'[\,\enc{P}_C\,]$ where $N_{P,C}$ is a set of names
depending on~$P$ and $C$ and not containing $\tick$.

By definition of the encoding (cf.~Fig.~\ref{fig:enc_strong}),
on a process $C[P]$ we have:
$$\enc{C[P]} = \nu K.\enc{C[P],\nil,\varphi(C[P])}$$
Hence, we have to show that for each context $C$ there is a context $C'$ such that
$$\nu K.\enc{C[P],\nil,\varphi(C[P])} = \nu N_{P,C} C'[\,\enc{P}_C\,]$$
We can set $K=N_{P,C}$ given that indeed $K = \fk{C[P]}$. Also, we can
fix a global relation including all pairs $\phiel{a}{k}{x_k}$, which is compatible with $\varphi(P)$ for any $P$,
given that this is used only for keys which actually occur in the CCSK
process, and the others are irrelevant. We will hence drop $\varphi(C[P])$.

By inspection of the encoding we have:
$$\enc{C[P],\nil} = C'[\enc{P,R}]$$
where $R$ is as computed by the encoding.

We just have to show that $R$ is indeed computed from $C$ only.
One can show by inspection on the definition of the encoding and of the definition of the attribute that $\enc{C[P],\nil}=C'\enc{P,C[\attr{\bullet}]}$.
We can thus define the required family of encodings as $\enc{P}_C=\enc{P,C[\attr{\bullet}]}$.  
\end{proof}
}

\subsection{Proof of Proposition~\ref{prop:encpar}}
\encisparallelpreserving*

\Comment{
\begin{proposition}\label{prop:encpar}
The encoding in Figure~\ref{fig:enc_strong} is parallel preserving.
\end{proposition}
}
\begin{proof}
The encoding is only defined for processes with parallel composition at the top level. Hence, 
the only context $C[\hol]$, where the encoding is defined for $C[P_1 \Par \cdots \Par P_n]$,
is $\nu \tilde a.(\hol \Par Q)$, where $Q$ is in the domain of the encoding. 
Also, the encoding is homomorphic w.r.t.\ parallel composition and restriction; hence the result.
\end{proof}

\subsection{Proof of Theorem~\ref{thm:par pres strong bis}}

\begin{lemma}\label{lem:pi context barb}
In the $\pi$-calculus, let $C[\hol]$ be a context, and let $R,S$ be processes.
\begin{enumerate}
\item \label{item:nil R}
If $C[\nil] \barb \alpha$ then $C[R] \barb \alpha$;
\item \label{item:R nil R}
if $C[R] \barb \alpha$ then $C[\nil] \barb \alpha$ or $R \barb \alpha$;
\item \label{item:R S S}
if $C[R] \not \barb \alpha$ and $C[S] \barb \alpha$ then $S \barb \alpha$;
\item \label{item:CR S CS}
if $C[R] \barb \alpha$ and $S \barb \alpha$ then $C[S] \barb \alpha$.
\end{enumerate}
\end{lemma}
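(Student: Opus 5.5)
The plan is to prove all four items together by structural induction on the context $C[\hol]$. I would work with the structural-congruence-free relation $\rectran{}$ of Definition~\ref{def:alt pi}. By Lemma~\ref{lem:pitran rectran} it induces the same strong barbs as $\pitran{}$, so every transition used below has a derivation ending in a syntax-directed rule.

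The core observation is that a strong barb only sees the \emph{unguarded} part of a process. If $\hol$ sits under a prefix in $C$, then $C[R]$ and $C[\nil]$ (indeed $C[S]$) have exactly the same one-step input/output transition subjects. The hole contributes nothing until that prefix fires, and a single labelled step with input or output label cannot fire a guarding prefix and then continue inside it. The same holds when $\hol$ lies inside a recursion body under a guard, which is required by guardedness of recursion variables.

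The interesting contexts are therefore those where $\hol$ is unguarded, built from $\hol$, parallel composition, restriction and $\rec$ with unguarded hole. For these I would prove a characterisation: $C[R]\barb\alpha$ iff $C[\nil]\barb\alpha$ or $R\barb\alpha$, the latter provided $\alpha$'s subject is not bound by a restriction of $C$ above the hole. The direction ``$\Leftarrow$'' comes from rules \textsc{Par-L}/\textsc{Par-R} and \textsc{Res}. \textsc{Par-L} applies because the barb only concerns the subject, so the side condition on bound objects can be met by $\alpha$-renaming. The direction ``$\Rightarrow$'' is by inversion on the last rule. Rule \textsc{Open} needs a little care, since a bound output still carries subject $a$, so the barb $\out a$ is preserved. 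Restrictions in $C$ capturing the subject of $\alpha$ block it uniformly for every filler.

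The four items then follow directly. Item~\ref{item:nil R} is the ``$\Leftarrow$'' direction, using that the $C[\nil]$ part is independent of the filler; in the guarded case both sides coincide. Item~\ref{item:R nil R} is the ``$\Rightarrow$'' direction. For item~\ref{item:R S S}, apply item~\ref{item:R nil R} to $C[S]$: either $S\barb\alpha$ or $C[\nil]\barb\alpha$, and the latter would give $C[R]\barb\alpha$ by item~\ref{item:nil R}, a contradiction. For item~\ref{item:CR S CS}, item~\ref{item:R nil R} gives either $C[\nil]\barb\alpha$, whence item~\ref{item:nil R} finishes, or $R\barb\alpha$ through the hole. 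In the latter case $\alpha$ escapes $C$'s restrictions, so $S\barb\alpha$ also escapes and the ``$\Leftarrow$'' direction yields $C[S]\barb\alpha$.

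The main obstacle I expect is the scope issue in item~\ref{item:CR S CS}. One must record that the barb of $R$ passed through $C$, meaning the subject is not restricted by $C$ above the hole. That fact depends only on $C$ and $\alpha$, so it transfers to $S$. The other delicate case is $\rec$ and substitution in item~\ref{item:R nil R} when the hole is within a recursion body. I would handle it by unfolding once via \textsc{Rec} and observing that the copies of the hole created by substitution occur guarded, hence cannot contribute a strong barb.
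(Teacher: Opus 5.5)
Your proof is correct and uses the same ingredients as the paper's. These are the \textsc{Str}-free relation $\rectran{}$, induction on contexts, and, for $\mathtt{rec}$, unfolding together with the fact that the copies of the hole created by substitution are guarded. The paper packages that last fact as an auxiliary induction on derivation length over all terms $G$ with at most $X$ free.

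Where you differ is the decomposition. The paper proves items~1, 2 and~4 by three separate inductions on $C$, and obtains only item~3 by combining them. You instead split on whether the hole is guarded. For unguarded holes you prove one characterisation: $C[R]\barb\alpha$ iff $C[\nil]\barb\alpha$, or $R\barb\alpha$ with the subject of $\alpha$ not restricted by $C$ above the hole. All four items follow from this.

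You rightly note that item~2 as stated is too weak to yield item~4, because it forgets whether $R$'s barb survives $C$'s restrictions. The paper gets around this by giving item~4 its own induction, where that information is carried implicitly in the restriction case. You record it explicitly once and reuse it.

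Your route is more economical and says exactly when a filler's barb is observable. The price is having to make precise ``restricted by $C$ above the hole'' and the guarded/unguarded split through nested $\mathtt{rec}$. The paper's route needs no such bookkeeping but repeats essentially the same induction three times.

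You also make explicit that the side condition of \textsc{Par-L} on bound objects must be met by $\alpha$-renaming. The paper's ``straightforward'' cases leave this implicit, and in $\rectran{}$ without \textsc{Str} that renaming is not otherwise available.
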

\begin{proof}
We use the version of the $\pi$-calculus with just rules and no structural congruence, as defined and used in Section~\ref{subsec:alt pi}.
\begin{enumerate}
\item 
By induction on contexts.
All cases are straightforward, apart from the case for $\rec X C[\hol]$.
To handle this, we show by induction on the length of derivation that for all terms $G$ (with at most $X$ free) we have $G\{\rec X C[\nil]/X\}\barb\alpha$ implies $G\{\rec X C[R]/X\}\barb\alpha$.
Cf.\ the proof of Lemma~\ref{lem:pi rectran context invisible} in Section~\ref{subsec:pi invisible barb}.
\item
By induction on contexts.
All cases are straightforward, apart from the case for $\rec X C[\hol]$.
To handle this, we show by induction on the length of derivation that for all terms $G$ (with at most $X$ free) we have $G\{\rec X C[R]/X\}\barb\alpha$ implies $G\{\rec X C[\nil]/X\}\barb\alpha$ or $R\barb\alpha$.
\item
Follows from (\ref{item:nil R}) and (\ref{item:R nil R}).
\item
By induction on contexts.
All cases are straightforward, apart from the case for $\rec X C[\hol]$.
We show by induction on the length of derivation that for all terms $G$ (with at most $X$ free) we have $G\{\rec X C[R]/X\}\barb\alpha$ and $S \barb \alpha$ implies $G\{\rec X C[S]/X\}\barb\alpha$.
\qedhere
\end{enumerate}
\end{proof}

\begin{definition}\label{def:strong bwd refl}
An encoding $\enc\cdot$ from CCSK to the $\pi$-calculus is \emph{strongly backward reflecting} 
if there is a function $\phi$ such that
\begin{enumerate}
\item 
for every visible action $\mu$ and key $k$, $\phi(\mu,k)$ is visible;
\item
for every process $X$, $X \rtran {\mu[k]}$ iff $\enc X \pitran {\phi(\mu,k)}$.
\end{enumerate}
\end{definition}
By Theorem~\ref{th:strong_enc}, the encoding presented in Figure~\ref{fig:enc_strong} is strongly backward reflecting since strong bisimilarity, as in Definition~\ref{def:bisim}, implies the strongly backward reflecting property.

\begin{lemma}\label{lem:star star pres strong bis}
No
parallel-preserving encoding of CCSK into the $\pi$-calculus is strongly backward reflecting.
\end{lemma}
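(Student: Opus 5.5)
We argue by contradiction. Suppose $\enc\cdot$ is parallel preserving and strongly backward reflecting via some $\phi$. Choose a context with a lower-level parallel composition whose undo depends on two parallel components both being standard. Take $C[\hol] = a[m].\hol$, $n=2$, and compare four instances of $C[P_1 \Par P_2]$, where each $P_i$ ranges over a standard process $b.\nil$ (resp.\ $c.\nil$) and a non-standard process $b[k].\nil$ (resp.\ $c[l].\nil$). All four are reachable. In CCSK, $a[m].(P_1 \Par P_2) \rtran{a[m]}$ holds iff both $P_1$ and $P_2$ are standard, because rule \textsc{Act1}$^\bullet$ requires $\std{Q}$.

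By Definition~\ref{def:nary par pres enc} there is a context $G[\hol]$, fixed by $C$ and $n=2$, such that $\enc{a[m].(P_1 \Par P_2)} = G[R_1 \Par R_2]$. Here $R_1$ depends only on $P_1$ and $R_2$ only on $P_2$. Write $S_1, S_1'$ for the images of $b.\nil$ and $b[k].\nil$ in position~1, and $S_2, S_2'$ for the images of $c.\nil$ and $c[l].\nil$ in position~2. Let $\alpha = \phi(a,m)$, which is visible, so a barb. Strong backward reflection then gives
\[ G[S_1 \Par S_2]\barb\alpha, \quad G[S_1' \Par S_2]\not\barb\alpha, \quad G[S_1 \Par S_2']\not\barb\alpha, \quad G[S_1' \Par S_2']\not\barb\alpha. \]

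Now apply Lemma~\ref{lem:pi context barb} with the contexts $D_1[\hol] = G[\hol \Par S_2]$ and $D_2[\hol] = G[S_1 \Par \hol]$. From $D_1[S_1'] \not\barb\alpha$ and $D_1[S_1]\barb\alpha$, part~(\ref{item:R S S}) gives $S_1 \barb \alpha$. Symmetrically, $D_2$ gives $S_2\barb\alpha$. Next use the context $D_3[\hol] = G[S_1' \Par \hol]$. We need $D_3[R]\barb\alpha$ for some $R$ in order to transfer the barb to $S_2'$ via part~(\ref{item:CR S CS}), but that is not given directly.

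The better route is through the unary context $G[\hol]$ itself. Since $G[S_1\Par S_2]\barb\alpha$, part~(\ref{item:R nil R}) says either $G[\nil]\barb\alpha$ or $S_1\Par S_2\barb\alpha$. If $G[\nil]\barb\alpha$, part~(\ref{item:nil R}) gives $G[S_1'\Par S_2']\barb\alpha$, a contradiction. Otherwise $S_1\Par S_2 \barb\alpha$, so one of its components, say $S_1$, has the barb $\alpha$ on its own, since a strong barb of a parallel composition comes from one side. Then $S_1 \Par S_2' \barb \alpha$, and part~(\ref{item:CR S CS}) applied to $G$ gives $G[S_1\Par S_2']\barb\alpha$, contradicting the third condition. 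The case where $S_2$ has the barb is symmetric and contradicts $G[S_1'\Par S_2]\not\barb\alpha$.

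The main obstacle is that last step. It needs the fact that a strong barb of $S_1 \Par S_2$ in the $\pi$-calculus comes from a single component, which follows from rules \textsc{Par-L}/\textsc{Par-R} in the structural-congruence-free semantics of Section~\ref{subsec:alt pi}. It also needs a check that $\phi(a,m)$ is the same barb across all four processes. This holds because $\phi$ depends only on $(a,m)$ and all four share the key $m$ on $a$, which I would verify explicitly. Finally, Theorem~\ref{thm:par pres strong bis} follows, because strong bisimilarity in the sense of Definition~\ref{def:bisim} implies strong backward reflection.
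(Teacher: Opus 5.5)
Your proof is correct and follows essentially the same route as the paper. You use the same process $a[m].(b\Par c)$ under the context $a[m].\hol$, parallel preservation to fix $G$ and the components, and Lemma~\ref{lem:pi context barb} together with the fact that a strong barb of $S_1\Par S_2$ comes from one component. The only difference is that the paper applies part~(\ref{item:R S S}) directly to $G$, using $G[S_1'\Par S_2]\not\barb\alpha$, whereas you unfold it into parts~(\ref{item:R nil R}) and~(\ref{item:nil R}). That is why you bring in the fourth instance $a[m].(b[k].\nil\Par c[l].\nil)$, which is unnecessary: part~(\ref{item:nil R}) with $G[S_1'\Par S_2]$ already gives the contradiction. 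Your abandoned first route would also have closed, by combining $S_1\barb\alpha$ with part~(\ref{item:CR S CS}) on $G$.
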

\begin{proof}
We obtain the result by considering the process $a[k].(b \Par c)$ with lower-level parallel composition guarded by an executed action $a$.
Let $C[\hol] = a[k].\hol$.
Then $\enc{a[k].(b \Par c)} = G[R_b \Par R_c]$.
Also $\enc{a[k].(b[m] \Par c)} = G[R'_b \Par R_c]$
and $\enc{a[k].(b \Par c[n])} = G[R_b \Par R'_c]$.
Suppose we have a strong bisimulation as in Definition~\ref{def:bisim}.
Then $a[k].(b \Par c) \rtran{a[k]}$ and $G[R_b \Par R_c] \pitran{\phi(a,k)}$.
Note that the $\pi$-calculus transition must be visible, and so it does not come from a synchronisation.
Similarly we have $G[R'_b \Par R_c] \not\pitran{\phi(a,k)}$ and
$G[R_b \Par R'_c] \not \pitran{\phi(a,k)}$.

Since $G[R'_b \Par R_c] \not\pitran{\phi(a,k)}$ and $G[R_b \Par R_c] \pitran{\phi(a,k)}$ it must be the case that
$R_b \Par R_c \pitran{\phi(a,k)}$ by Lemma~\ref{lem:pi context barb}(\ref{item:R S S}).
So either $R_b \pitran{\phi(a,k)}$ or $R_c \pitran{\phi(a,k)}$ (considering the rules for transitions).
So either $R_b \Par R'_c \pitran{\phi(a,k)}$ or $R'_b \Par R_c \pitran{\phi(a,k)}$.
But then, since $G[R_b \Par R_c] \pitran{\phi(a,k)}$, by Lemma~\ref{lem:pi context barb}(\ref{item:CR S CS}) either $G[R_b \Par R'_c] \pitran{\phi(a,k)}$ or $G[R'_b \Par R_c] \pitran{\phi(a,k)}$, and neither of these hold.  Contradiction.
\end{proof}

\noParPresCCSKtopipressb*

\begin{proof}
By Lemma~\ref{lem:star star pres strong bis}, since strong bisimilarity as in Definition~\ref{def:bisim} implies the strongly backward reflecting property.
%
\end{proof}


\section{Omitted proofs for  Section~\ref{sec:weak_encoding}}\label{app:weak}

\encsmutuallysimilar*
\begin{proof}

We need to show that
\[
  \rcal_1 \;\;=\;\; \{\, (P, \enc{P, \nil,\varphi(P)}) \mid 
        P \;\text{CCSK process}
        \,\}
\]
is a relation satisfying conditions $1-6$ of Definition~\ref{def:weak_eq}, and
\[
  \rcal_2 \;\;=\;\; \{\, (R, P) \mid 
         \enc{P,\nil,\varphi(P)} \wpitran{} R \}.
\]
is a relation satisfying conditions $7-11$ of Definition~\ref{def:weak_eq}. Then
$(\rcal_1,\rcal_2)$ is a CCSK-$\pi$ weak mutual simulation as required.

\emph{Forward direction.}
It is sufficient to observe that for forward computation, that
is clauses (1), (2), (3), (7), (8), and (11.a), 
CCSK-$\pi$ mutual simulation acts as a strong bisimulation.
Also, the encoding of arbitrary parallel composition does not add any $\tau$ step during forward computation, as processes of the form $\tree(\cdots)$ do not have any \emph{forward} behaviour.
Hence 
the result follows by the reasoning used in the proof of Theorem~\ref{th:strong_enc}.

\emph{Backward direction.}
Assume $(P,\enc{P,\nil,\varphi(P)})\in \mathcal{R}_1$. Then
for conditions (4), (5) and (6) the proof is similar to the corresponding cases in the proof of Theorem~\ref{th:strong_enc},
as a backward step of $P$ is directly mimicked by one step of its encoding $\enc{P,\nil,\varphi(P)}$,
since the encoding of a keyed prefix is the same.

Assume $(R,P) \in \mathcal{R}_2$ with $\enc{P,\nil,\varphi(P)} \wpitran{} R$. 
Suppose $R \pitran{\Co{x}} R'$. Since
$R$ is derived from $\enc{P,\nil,\varphi(P)} \wpitran{} R$ 
via just $\tau$ steps, then
there exists an enabled keyed prefix $\Co{a}[k]$ in $P$ such that
$\varphi(a,k) = x$, such that $P\rtran{a[k]} P'$ with $(R',P')_{\varphi\setminus(a,k)}\in \rcal_2$, and this satisfies the condition (9). 
The reasoning for the condition (10) is similar.

Suppose $R \pitran{\tau} R'$. We need to check two conditions: (11b) and (11c), since (11a) has already been dealt with.
Since $R$ makes a $\tau$ move we have two cases: either there exists an enabled
keyed $\tau$-prefix in $P$  or in $P$ there exists a lower-level parallel composition in an active context.
 In the first case we have that $P\rtran{\tau[k]}P'$ and $(R',P')_\varphi\in \rcal_2$: showing (11b). 
 In the second case we have that the $\tau$ move of $R$ is due to some $\tree(\cdots)$ interacting with a prefix $\Co{x}_i $ for some $i$. Hence this step is due to an internal parallel composition, and to a coordinated rollback. Thus,
 this step has no corresponding CCSK transition and is matched by $P$ staying idle, that is $(R',P)_\varphi\in \rcal_2$, which gives (11c).
\end{proof}

\weakparallelpreserving*
\begin{proof}
The result holds for the top-level parallel composition since in this case the encoding is as in Proposition~\ref{prop:encpar}. 
Let us focus on the encoding of a lower-level parallel composition, where the rule from Definition~\ref{def:tree} applies, namely:
\[\enc{\prod_I P_i,R} \,=\, \nu \tilde{x}.\big(\prod_I \enc{P_i,\Co{x_i}} \Par \tree(\tilde{x},R)\big)\]
The required context $G[\bullet]$ is simply $\nu \tilde{x}.\big(\bullet \Par \tree(\tilde{x},R)\big)$, where the choice of $\tilde{x}$ is arbitrary, hence can be fixed. Moreover, $\enc{P_i,\Co{x_i}}$ only depends on $i$, $P_i$ and $C$ (which determines the choice of $x_i$).
\end{proof}

\end{document}